\documentclass[letterpaper, 11pt]{article}

\usepackage[english]{babel}
\usepackage[utf8x]{inputenc}
\usepackage[T1]{fontenc}

\usepackage[margin=0.87in, marginparsep=0pt,marginparwidth=0.75in]{geometry}

\usepackage{amsmath}
\usepackage{amsthm}
\usepackage{amsfonts}
\usepackage{graphicx}
\usepackage{subcaption}
\usepackage{tikz,pgfplots}
\usepackage{xspace, authblk}
\usetikzlibrary{patterns}
\usetikzlibrary{backgrounds}
\usetikzlibrary{scopes}
\usetikzlibrary{arrows, automata, positioning}
\usetikzlibrary{decorations.markings}
\usepackage[colorinlistoftodos]{todonotes}
\usepackage[colorlinks=true, allcolors=blue]{hyperref}
\usepackage{comment}
\usepackage{thm-restate}

\newtheorem{theorem}{Theorem}
\newtheorem{lemma}[theorem]{Lemma}
\newtheorem{proposition}[theorem]{Proposition}
\newtheorem{claim}[theorem]{Claim}
\newtheorem{definition}[theorem]{Definition}
\newtheorem{corollary}[theorem]{Corollary}
\newtheorem{observation}[theorem]{Observation}

\newcommand{\eps}{\varepsilon}
\newcommand{\cI}{\ensuremath{\mathcal{I}}}

\newcommand{\cT}{\ensuremath{\mathcal{T}}}
\newcommand{\cV}{\ensuremath{\mathcal{V}}}
\newcommand{\cB}{\ensuremath{\mathcal{B}}}
\newcommand{\cN}{\ensuremath{\mathcal{N}}}
\newcommand{\cM}{\ensuremath{\mathcal{M}}}
\newcommand{\cH}{\ensuremath{\mathcal{H}}}
\newcommand{\cL}{\ensuremath{\mathcal{L}}}

\newcommand{\abs}[1]{\ensuremath{|#1|}}
\newcommand{\floor}[1]{\ensuremath{\lfloor #1 \rfloor}}

\newcommand{\DSP}{\textsc{DSP}\xspace}
\newcommand{\GSP}{\textsc{GSP}\xspace}
\newcommand{\DSPS}{\textsc{Square-DSP}\xspace}
\newcommand{\GSPS}{\textsc{Square-GSP}\xspace}
\newcommand{\BP}{\textsc{Balanced Partition}}
\newcommand{\Part}{\textsc{Partition}}
\newcommand{\TwoDK}{\textsc{2D Knapsack}}

\ifdefined\DEBUG
\newcommand{\fab}[1]{\textcolor{red}{#1}}
\newcommand{\kam}[1]{\textcolor{cyan}{#1}}
\newcommand{\afr}[1]{\textcolor{violet}{#1}}
\newcommand{\wal}[1]{\textcolor{magenta}{#1}}

\def\rem#1{\marginpar{\raggedright\scriptsize #1}}
\newcommand{\fabr}[1]{\rem{\textcolor{red}{$\bullet$ F: #1}}}
\newcommand{\kamr}[1]{\rem{\textcolor{cyan}{$\bullet$ K: #1}}}
\newcommand{\afrr}[1]{\rem{\textcolor{violet}{$\bullet$ A: #1}}}
\newcommand{\walr}[1]{\rem{\textcolor{magenta}{$\bullet$ W: #1}}}
\else
\newcommand{\fab}[1]{#1}
\newcommand{\kam}[1]{#1}
\newcommand{\afr}[1]{#1}
\newcommand{\wal}[1]{#1}

\newcommand{\fabr}[1]{}
\newcommand{\kamr}[1]{}
\newcommand{\afrr}[1]{}
\newcommand{\walr}[1]{}
\fi

\title{Approximation Algorithms for\\Demand Strip Packing}
\author[1]{Waldo G\'alvez\thanks{Email: \texttt{galvez@in.tum.de} Supported by the European Research Council, Grant Agreement No. 691672, project APEG.}}
\author[2]{Fabrizio Grandoni\thanks{Email: \texttt{fabrizio@idsia.ch}. Partially supported by the SNF Excellence Grant 200020B\_182865.}}
\author[2]{Afrouz Jabal Ameli\thanks{Email: \texttt{afrouz@idsia.ch}}}
\author[3]{Kamyar Khodamoradi\thanks{Email: \texttt{kamyar.khodamoradi@uni-wuerzburg.de} Partially supported by Deutsche Forschungsgemeinschaft (DFG, German Research Foundation) - Project number 399223600. This project was carried out in part when the author was a postdoctoral researcher at IDSIA, USI-SUPSI, Switzerland.}}
\affil[1]{Technical University of Munich, Germany}
\affil[2]{IDSIA, USI-SUPSI, Switzerland}
\affil[3]{University of W\"urzburg, Germany}
\date{}

\begin{document}
\maketitle
\thispagestyle{empty}

\abstract{
\noindent In the Demand Strip Packing problem (DSP)\kam{,} we are given a time interval and a collection of tasks, each\kamr{deleted ``one'' here} characterized by a processing time and a demand for a given resource (such as electricity, computational power, etc.). A  feasible solution consists of a schedule of the tasks within the mentioned time interval. Our goal is to minimize the peak resource consumption, i.e. the maximum total demand of tasks executed at any point in time. 

It is known that DSP is NP-hard to approximate below a factor $3/2$, and standard techniques for related problems imply a (polynomial-time) $2$-approximation. Our main result is a $(5/3+\eps)$-approximation algorithm for any constant $\eps>0$. We also achieve best-possible approximation factors for some relevant special cases. 
}

\newpage
\setcounter{page}{1}
\section{Introduction}\label{sec:intro}

Consider the following scenario: we are given a time interval and a collection of tasks, where each task is characterized by a processing time (no longer than the time interval) and a demand for a given resource. A feasible solution consists of a schedule of all the tasks within the mentioned time interval, and our goal is to minimize the peak resource consumption, i.e. the maximum total demand of tasks scheduled at any point in time. It is easy to imagine concrete applications of this scenario; for example, the considered resource might be electricity, bandwidth along a communication channel, or computational power. 

The above scenario can be naturally formalized via the following \textsc{Demand Strip Packing} problem (DSP). We interpret the time interval as a path graph $G=(V,E)$ with $W$ edges, where each edge is interpreted as a time slot where we can start to process a task. Let $\cI=\{1, \dots, n\}$ be the set of tasks, where task $i$ has integer \emph{processing time} (or \emph{width}) $w(i)\in [\fab{1},W]$ and integer \emph{demand} (or \emph{height}) $h(i)\geq 0$. A feasible solution (or \emph{schedule} of the tasks) consists of a subpath $P(i)$ of $G$ for each $i\in \cI$ containing precisely $w(i)$ edges. 
Our goal is to minimize the peak resource consumption (or simply \emph{peak}) which is defined as
$$
\max_{e\in E}\sum_{i\in \cI: e\in P(i)}h(i).
$$ 
A problem closely related to DSP is the \textsc{Geometric Strip Packing} problem (GSP)\footnote{GSP is usually simply called \textsc{Strip Packing} in the literature. We added the word ``geometric'' to better highlight the differences between the two problems.}, which can be interpreted as a variant of DSP with an extra geometric packing constraint. Here we are given an axis-aligned half-strip of integer width $W$ (and unbounded height) and a collection of open rectangles (or tasks), where each rectangle $i$ has integer width $w(i)\in [\fab{1},W]$ and integer height $h(i)\geq 0$. Our goal is to find an axis-aligned non-overlapping packing of all the rectangles within the strip that minimizes the peak height, i.e. the maximum height spanned by any rectangle. Notice that one can reinterpret DSP as a variant of GSP, where the processing time and demand of each task correspond to the width and height of a rectangle, resp. (this also motivated our notation). A critical difference \fab{w.r.t.} GSP however is that DSP does not require to pack such rectangles geometrically\footnote{Or, equivalently, we can \fab{split such rectangles into unit-width vertical slices, and then pack such slices geometrically so that slices of the same rectangle appear consecutively in a horizontal sense.}}. 

Obviously, a feasible solution to GSP induces a feasible solution to DSP of no larger peak. The converse is however not true (see Figure~\ref{fig:gap_examples}), and consequently it makes sense to design algorithms specifically for DSP. We remark that there are applications that are better formalized by GSP than by DSP. In particular, this happens when each task requires a contiguous and fixed portion of the considered resource. For example, we might need to allocate consecutive frequencies or memory locations to each task: changing this allocation over time might be problematic. Another natural application of GSP is cutting rectangular pieces from a roll of some raw material (e.g., paper, metal, or leather). However, for other applications\kam{,} the geometric constraint in GSP does not seem to be necessary, and hence it makes sense to drop it (i.e., to rather consider DSP): this might lead to better solutions, possibly via simpler and/or more efficient algorithms. Consider for example the  
minimization of the peak energy consumption in smart-grids~\cite{KSKL13,YM17,RKS15}.

A straightforward reduction to the NP-complete \textsc{Partition} problem (similar to the one known for  GSP, see also \cite{THLW13}) shows that DSP is NP-hard to approximate below a factor $3/2$. Constant approximation algorithms for DSP are given in \cite{THLW13,YMML14}. However, \fab{a better} $2$-approximation can be obtained by applying an algorithm by Steinberg~\cite{Steinberg97} which was developed for GSP: the reason is that Steinberg uses area-based lower bounds that extend directly from GSP to DSP.

\subsection{Our Results and Techniques}

Our main result is as follows\footnote{The same result as in Theorem~\ref{thr:53} was achieved independently in \cite{DJKRT21}; their approach is however substantially different from ours.}. 
\begin{restatable}{theorem}{thmmainapx}\label{thr:53}
For any constant $\eps>0$, there is a polynomial-time deterministic $(5/3+\eps)$-approximation algorithm for \DSP. 
\end{restatable}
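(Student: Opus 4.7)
The plan is to apply a standard dual-approximation framework: by binary search I may assume I know the optimum peak $OPT$ to within a factor $(1+\eps)$, and after rescaling take $OPT=1$, aiming to produce a schedule of peak at most $5/3+O(\eps)$. The intuition behind the $5/3$ factor comes from a simple overlap count. Classify tasks as \emph{large} if $h(i)>2/3$, \emph{medium} if $h(i)\in(1/3,2/3]$, and \emph{short} if $h(i)\leq 1/3$. In any peak-$1$ schedule, at most one large task is active at any given time (two would sum to more than $4/3$), a large and a medium task cannot coexist (sum $>1$), and at most two medium tasks can coexist (three would sum to more than $1$). So the total ``tall'' load (large plus medium) at any time is at most $\max(1,4/3)=4/3$, leaving a vertical budget of roughly $1/3$ on top for the short tasks.

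The first step is to schedule the tall tasks so that this overlap structure is respected. I would further split tall tasks by width into \emph{wide} ($w(i)>\eps W$) and \emph{narrow}. Wide tall tasks number only $O_\eps(1)$: each has area greater than $\eps W\cdot 1/3$, and in $OPT$ the total tall area is at most $W$. After $(1+\eps)$-rounding their heights and widths via linear grouping into $O_\eps(1)$ types, and restricting their start times to a polynomial-size grid, polynomial-time enumeration (or dynamic programming) can identify a structured arrangement compatible with a near-optimal solution. Narrow tall tasks are then slotted into the resulting gaps using a shelf-based greedy routine in the spirit of Next Fit Decreasing Height, adapted to DSP's slicing freedom and losing only $O(\eps)$ in total height.

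The second step packs the short tasks. Since individual short heights are at most $1/3$, the remaining $1/3+O(\eps)$ of vertical budget should suffice provided a near-optimal assignment of short tasks to time slots exists compatibly with the tall schedule. I would implement this via a column-wise greedy assignment on a coarsened time grid, or equivalently a fractional packing LP rounded using the slicing freedom of DSP, bounding the total peak by the tall load plus at most $1/3+O(\eps)$ contributed by short tasks.

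The main obstacle I foresee is the structural lemma underpinning the enumeration in the first step: showing that some near-optimal schedule of the tall tasks admits an $O_\eps(1)$-parameter description after wide-tall tasks are snapped to the grid and narrow-tall tasks are grouped into shelves. The argument requires carefully rearranging an optimal solution (shifting wide tasks onto discrete positions, swapping medium and large tasks where profitable, and aggregating narrow-tall shelves) while maintaining the $4/3$ bound on the tall load at every time slot. The tight thresholds $1/3$ and $2/3$ make each rearrangement step delicate, and it is here, rather than in the short-task packing, that the specific constant $5/3$ (as opposed to Steinberg's easy $2$) is forced.
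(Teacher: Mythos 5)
There is a genuine gap, and it is concentrated in precisely the two places you flag as plausible but leave unproven.

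First, the budget arithmetic does not close. You observe, correctly, that in a peak-$1$ schedule at most one \emph{large} (height $>2/3$) is active, a large and a medium cannot coexist, and at most two mediums coexist. But the conclusion ``total tall load at any time is at most $\max(1,4/3)=4/3$'' conflates a property of the optimal schedule with a property of the schedule you intend to build: in a peak-$1$ schedule the tall load is at most $1$, full stop, while $4/3$ is only a budget you \emph{hope} to respect after rearrangement, and nothing in the proposal shows that a rearranged tall schedule with load at most $4/3$ and the required combinatorial structure exists. The paper's Lemma~\ref{lem:structural_5_3} is exactly this missing ingredient, and its mechanism is different from the one you gesture at: instead of coexistence bookkeeping among large and medium tasks, it reorders the \emph{edges} so that all tall-task (height $>2/3\,OPT$) edges move to the left and are sorted by height, and then shows that the \emph{crossing} tasks---those straddling a valley/mountain boundary---contribute at most $2/3\,OPT$ to any edge, so that superimposing them onto the rearranged profile stays within $5/3\,OPT$. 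Your ``swap medium and large tasks where profitable'' sketch is not obviously sound precisely because the $1/3$ and $2/3$ cutoffs are tight (as you note), and because it must simultaneously preserve feasibility for all the non-tall tasks, which your rearrangement does not account for.

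Second, the short-task packing is not a $1/3$-budget argument. A short task has individual height $\le 1/3$, but the \emph{aggregate} short-task load at an edge in the optimum can be close to $1$, not $1/3$ (e.g.\ many short tasks of width near $W$). So ``the remaining $1/3+O(\eps)$ of vertical budget should suffice'' is false as stated. What actually makes narrow/short tasks placeable without blowing up the peak is a monotonicity property of the partial schedule: after left-pushing the containers, the demand profile is $(Q,t^*)$-sorted, and then a Next-Fit-Decreasing-style sweep (the paper's Lemma~\ref{lem:packing-narrow-sorted} and its generalization Lemma~\ref{lem:packing-narrow-Qtsorted}) places the narrow tasks on top, with an area-based contradiction argument if something does not fit. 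That argument uses only that the remaining tasks are narrow (width $\le \eps W$, height $\le \frac{2}{3}OPT$) and that the area does not exceed $W\cdot OPT$; it does not and cannot rely on the short load being bounded by $1/3$. Your proposal, as written, has no substitute for this step, and a plain column-wise greedy or LP rounding without the monotone-profile structure will not yield the $5/3$ bound.

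Finally, a smaller but real mismatch: your height classification at thresholds $1/3$ and $2/3$ is not what the paper uses, and the ``medium'' band $(1/3,2/3]$ is treated nowhere concretely in your proposal (the paper instead splits by \emph{width}: wide tasks in $(\delta OPT, \frac{2}{3}OPT]$ become $O_\eps(1)$ ``large'' containers placed verbatim as in the structural lemma, and narrow ones of any height up to $\frac{2}{3}OPT$ are handled by the final NFDH step). If you retain the $(1/3,2/3]$ band, you need a dedicated argument for it, and none is given.
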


 The above approximation ratio matches the best\kam{-}known result for \GSP from Harren et al.~\cite{HJPS14}, achieved using dynamic programming based techniques to place almost all the rectangles except for a set of very small total area, followed by a careful and quite involved case distinction to pack these remaining rectangles. However, we remark that our algorithm is entirely different, and in particular it does not compute a geometric packing of tasks/rectangles. Furthermore, our analysis is substantially simpler. Notice that the result in \cite{HJPS14} does \emph{not} imply a $(5/3+\eps)$-approximation for DSP since some lower bounds used in their proofs do not hold necessarily for DSP.

\begin{figure}
    \centering
    \begin{subfigure}{0.39\textwidth}
    \centering
		\resizebox{0.85\textwidth}{!}{\begin{tikzpicture}
		\draw[white] (0,6.3) -- (1,6.3);
		\draw[thick] (0,0) -- (7,0);
		\draw[color=gray!70] (0,0) -- (0,5);
		\draw[color=gray!70] (7,0) -- (7,5);
		\draw[color=gray!70] (0,0) -- (-0.25,0);
		\draw (-0.25,0) node[anchor=east] {\color{gray!70}$0$};
		\draw[color=gray!70] (0,1) -- (-0.25,1);
		\draw (-0.25,1) node[anchor=east] {\color{gray!70}$1$};
		\draw[color=gray!70] (0,2) -- (-0.25,2);
		\draw (-0.25,2) node[anchor=east] {\color{gray!70}$2$};
		\draw[color=gray!70] (0,3) -- (-0.25,3);
		\draw (-0.25,3) node[anchor=east] {\color{gray!70}$3$};
		\draw[color=gray!70] (0,4) -- (-0.25,4);
		\draw (-0.25,4) node[anchor=east] {\color{gray!70}$4$};
		\draw[color=gray!70] (0,5) -- (-0.25,5);
		\draw (-0.25,5) node[anchor=east] {\color{gray!70}$5$};
		\draw (0,0) -- (0,-0.25);
		\draw (0,-0.25) node[anchor=north] {$0$};
		\draw (1,0) -- (1,-0.25);
		\draw (1,-0.25) node[anchor=north] {$1$};
		\draw (2,0) -- (2,-0.25);
		\draw (2,-0.25) node[anchor=north] {$2$};
		\draw (3,0) -- (3,-0.25);
		\draw (3,-0.25) node[anchor=north] {$3$};
		\draw (4,0) -- (4,-0.25);
		\draw (4,-0.25) node[anchor=north] {$4$};
		\draw (5,0) -- (5,-0.25);
		\draw (5,-0.25) node[anchor=north] {$5$};
		\draw (6,0) -- (6,-0.25);
		\draw (6,-0.25) node[anchor=north] {$6$};
		\draw (7,0) -- (7,-0.25);
		\draw (7,-0.25) node[anchor=north] {$7$};
		\draw (0,0) rectangle (4,1);
		\draw (2,0.5) node {$3$};
		\draw (0,1) rectangle (2,4);
		\draw (1,2.5) node {$1$};
		\draw (5,0) rectangle (7,3);
		\draw (6,1.5) node {$2$};
		\draw (3,3) rectangle (7,4);
		\draw (5,3.5) node {$4$};
		\draw (3,1) rectangle (4,2);
		\draw (3.5,1.5) node {$6$};
		\draw (4,0) rectangle (5,2);
		\draw (4.5,1) node {$7$};
		\draw (2, 1) rectangle (3, 3);
		\draw (2.5, 2) node {$8$};
		\draw[line width=0.8mm] (2, 3) rectangle (3, 4);
		\draw (2.5, 3.5) node {$5$};
		\draw[line width=0.8mm] (3, 2) rectangle (5, 3);
		\draw (4, 2.5) node {$5$};
		\end{tikzpicture}}
		\caption{DSP solution of peak $4$ whose corresponding optimal GSP solution has peak $5$.}\label{fig:gap_instance}
	\end{subfigure}
	\qquad
	\begin{subfigure}{0.48\textwidth}
	\centering
    \resizebox{0.73\textwidth}{!}{\begin{tikzpicture}[scale=0.65]
    \draw[thick] (0,0) -- (13,0);
    \draw[color=gray!70] (0,0) -- (0,11.5);
    \draw[color=gray!70] (13,0) -- (13,11.5);
    \draw[color=gray!70] (0,0) -- (-0.25,0);
    \draw (-0.25,0) node[anchor=east] {\color{gray!70}$0$};
    \draw[color=gray!70] (0,1) -- (-0.25,1);
    \draw (-0.25,1) node[anchor=east] {\color{gray!70}$1$};
    \draw[color=gray!70] (0,2) -- (-0.25,2);
    \draw (-0.25,2) node[anchor=east] {\color{gray!70}$2$};
    \draw[color=gray!70] (0,3) -- (-0.25,3);
    \draw (-0.25,3) node[anchor=east] {\color{gray!70}$3$};
    \draw[color=gray!70] (0,4) -- (-0.25,4);
    \draw (-0.25,4) node[anchor=east] {\color{gray!70}$4$};
    \draw[color=gray!70] (0,5) -- (-0.25,5);
    \draw (-0.25,5) node[anchor=east] {\color{gray!70}$5$};
    \draw[color=gray!70] (0,6) -- (-0.25,6);
    \draw (-0.25,6) node[anchor=east] {\color{gray!70}$6$};
    \draw[color=gray!70] (0,7) -- (-0.25,7);
    \draw (-0.25,7) node[anchor=east] {\color{gray!70}$7$};
    \draw[color=gray!70] (0,8) -- (-0.25,8);
    \draw (-0.25,8) node[anchor=east] {\color{gray!70}$8$};
    \draw[color=gray!70] (0,9) -- (-0.25,9);
    \draw (-0.25,9) node[anchor=east] {\color{gray!70}$9$};
    \draw[color=gray!70] (0,10) -- (-0.25,10);
    \draw (-0.25,10) node[anchor=east] {\color{gray!70}$10$};
    \draw[color=gray!70] (0,11) -- (-0.25,11);
    \draw (-0.25,11) node[anchor=east] {\color{gray!70}$11$};

    \draw (0,0) -- (0,-0.25);
    \draw (0,-0.25) node[anchor=north] {$0$};
    \draw (1,0) -- (1,-0.25);
    \draw (1,-0.25) node[anchor=north] {$1$};
    \draw (2,0) -- (2,-0.25);
    \draw (2,-0.25) node[anchor=north] {$2$};
    \draw (3,0) -- (3,-0.25);
    \draw (3,-0.25) node[anchor=north] {$3$};
    \draw (4,0) -- (4,-0.25);
    \draw (4,-0.25) node[anchor=north] {$4$};
    \draw (5,0) -- (5,-0.25);
    \draw (5,-0.25) node[anchor=north] {$5$};
    \draw (6,0) -- (6,-0.25);
    \draw (6,-0.25) node[anchor=north] {$6$};
    \draw (7,0) -- (7,-0.25);
    \draw (7,-0.25) node[anchor=north] {$7$};
    \draw (8,0) -- (8,-0.25);
    \draw (8,-0.25) node[anchor=north] {$8$};
    
    \draw (9,0) -- (9,-0.25);
    \draw (9,-0.25) node[anchor=north] {$9$};
    
    \draw (10,0) -- (10,-0.25);
    \draw (10,-0.25) node[anchor=north] {$10$};
    
    \draw (11,0) -- (11,-0.25);
    \draw (11,-0.25) node[anchor=north] {$11$};
    
    \draw (12,0) -- (12,-0.25);
    \draw (12,-0.25) node[anchor=north] {$12$};
    
    \draw (13,0) -- (13,-0.25);
    \draw (13,-0.25) node[anchor=north] {$13$};
    \draw (0,5) rectangle (6,11);
    \draw (3,8) node {$1$};
    \draw (7,5) rectangle (13,11);
    \draw (10,8) node {$2$};
    \draw (0,0) rectangle (5,5);
    \draw (2.5,2.5) node {$3$};
    \draw (8,0) rectangle (13,5);
    \draw (10.5,2.5) node {$4$};
    \draw (5,0) rectangle (8,3);
    \draw (6.5,1.5) node {$5$};
    \draw (5,3) rectangle (7,5);
    \draw (6,4) node {$6$};
    \draw (6,7) rectangle (7,8);
    \draw (6.5,7.5) node {$8$};
    \draw (6,8) rectangle (7,9);
    \draw (6.5,8.5) node {$9$};
    \draw (6,9) rectangle (7,10);
    \draw (6.5,9.5) node {$10$};
    \draw (6,10) rectangle (7,11);
    \draw (6.5,10.5) node {$11$};
    \draw[line width=0.8mm] (6, 5) rectangle (7, 7);
    \draw (6.5,6) node {$7$};
    \draw[line width=0.8mm] (7, 3) rectangle (8, 5);
    \draw (7.5,4) node {$7$};
    
    \end{tikzpicture}}
    \caption{\textsc{Square-DSP} solution of peak $11$ whose corresponding optimal GSP solution has peak at least $12$.}
    \label{fig:gap_instance_PEC_squares}
    \end{subfigure}
    \caption{Gap instances between DSP and GSP.}\label{fig:gap_examples}
\end{figure}
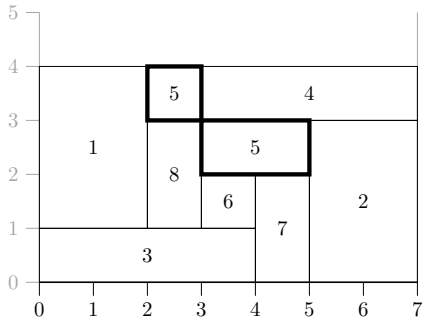
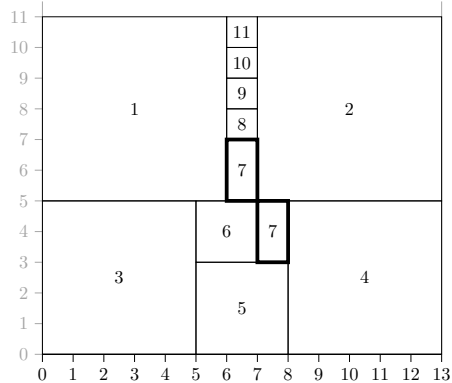


We also achieve improved approximation algorithms for relevant special cases of \DSP. We obtain a PTAS for the special case where the demand of each task is much lower than the optimal peak $OPT$. This captures applications where each job consumes a relatively small amount of the available resource (think about the electricity consumption of large-scale systems such as cities or countries).

\begin{restatable}{theorem}{lightPTAS}\label{lem:light-PTAS} Given $\varepsilon>0$ small enough, there exists $\delta>0$ and a polynomial time algorithm such that, given an instance of \DSP with optimal value $OPT$ and consisting solely of tasks having height at most $\delta\cdot OPT$, it computes a $(1+O(\varepsilon))$-approximate solution. \end{restatable}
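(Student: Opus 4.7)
The plan is to piggy-back on the AFPTAS machinery for \GSP, noticing that its standard analysis uses only lower bounds which are \emph{also} valid for \DSP. Concretely, given a small parameter $\eta>0$, asymptotic PTAS's of Kenyon--R\'emila type output a geometric packing (hence a feasible \DSP schedule) of peak at most
\[
(1+\eta)\cdot\max\{A/W,\,h_{\max}\} \;+\; f(\eta)\cdot h_{\max},
\]
where $A=\sum_i w(i)h(i)$, $h_{\max}=\max_i h(i)$, and $f(\eta)$ is an explicit function (e.g.\ $f(\eta)=O(1/\eta^2)$). Both $A/W$ and $h_{\max}$ are lower bounds on $OPT$ even for \DSP, so the peak of the produced schedule is at most $(1+\eta)\,OPT + f(\eta)\,h_{\max}$.

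Given $\eps>0$, I would apply this algorithm with $\eta=\eps$ and choose $\delta=\eps/f(\eps)$, a quantity depending on $\eps$ alone. Under the hypothesis $h_{\max}\le\delta\cdot OPT$ the additive error $f(\eta)\,h_{\max}$ is at most $\eps\cdot OPT$, so the overall peak is at most $(1+\eps)\,OPT+\eps\,OPT=(1+O(\eps))\cdot OPT$, which is the claimed approximation guarantee. The running time is polynomial for fixed $\eps$, as the underlying AFPTAS is polynomial once $\eta$ is fixed.

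\textbf{Main obstacle.} The subtle point is that the Kenyon--R\'emila style algorithm is traditionally stated with respect to $OPT_{\GSP}$, which can strictly exceed $OPT_{\DSP}$ (cf.\ Figure~\ref{fig:gap_examples}). The delicate piece of work is to revisit the analysis and isolate the step where the guarantee is actually shown against $\max\{A/W,h_{\max}\}$ (equivalently, against the configuration-LP value, which any \DSP schedule witnesses as feasible) rather than against $OPT_{\GSP}$. Once this area-and-height form of the asymptotic bound is established, it combines with the smallness of $h_{\max}$ in a black-box way to yield the claimed PTAS for \DSP; there is no union bound over edges or concentration argument to worry about, and $\delta$ depends only on $\eps$.
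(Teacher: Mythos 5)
The central claim in your plan is that a Kenyon--R\'emila style AFPTAS produces a packing of height at most
$(1+\eta)\max\{A/W,\,h_{\max}\}+f(\eta)h_{\max}$,
and that this is ``equivalently'' a guarantee against the configuration-LP value, which ``any \DSP schedule witnesses as feasible.'' Neither part of this claim holds up. First, the AFPTAS guarantee is against the fractional strip packing (Gilmore--Gomory) LP value $LP^*$, which can be strictly larger than $\max\{A/W,h_{\max}\}$: take $W=100$ and $100$ tasks each of width $34$ and height $1$; then $A/W=34$ and $h_{\max}=1$, while at most two copies of width $34$ fit in a configuration, so $LP^*=50$. Thus the purported bound $(1+\eta)\cdot 34+f(\eta)\cdot 1$ is simply wrong, and the ``equivalence'' with $LP^*$ is false. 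Second, the replacement claim---that any \DSP schedule yields a feasible LP solution of no larger objective, i.e.\ $LP^*\le OPT_{\DSP}$---is the crux of your whole argument, and you give no proof of it. It is genuinely delicate: the set of tasks sharing an edge in a \DSP schedule need not form a valid configuration (their widths may sum to more than $W$), so there is no direct way to slice a \DSP solution into LP configurations the way one slices a geometric packing into horizontal layers. (In the example above $OPT_{\DSP}$ does equal $50$ because every width-$34$ task must cross edge $34$ or edge $67$, but establishing the inequality for arbitrary instances and arbitrary dual-feasible weightings requires an argument you haven't supplied.) Until $LP^*\le OPT_{\DSP}$ is proved, the fallback $LP^*\le OPT_{\GSP}\le 2\,OPT_{\DSP}$ only yields a $2$-approximation, not a PTAS.

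For comparison, the paper's proof takes a different route entirely and sidesteps this issue: it works directly with a scheduling LP over a polynomial-size set of candidate starting edges (obtained by analyzing left-pushings), rounds it via randomized rounding with alterations, and controls the per-edge overflow probability with the concentration bound of Calinescu et al.; the narrow tasks are then placed using Lemma~\ref{lem:packing-narrow-Qtsorted}. In particular the paper \emph{does} rely on a concentration/union-bound argument over edges---exactly the sort of analysis your plan says one need not worry about. Because it never detours through the geometric configuration LP, the paper avoids the lower-bound gap that your approach would need to close.
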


Motivated by the special case of GSP and related problems where all rectangles are squares, we also study the special case of \DSP where $h(j)=w(j)$ for all tasks (the \DSPS problem).
The $3/2-\eps$ hardness of approximation extends to this case (see Appendix~\ref{sec:hardness}), and we are still able to show that there is a gap between \DSP and \GSP (see Figure~\ref{fig:gap_instance_PEC_squares} and Appendix~\ref{sec:DSPvsGSP}). However, in this case\kam{,} we are able to provide an optimal $3/2$-approximation. 
\begin{restatable}{theorem}{squareDSP}\label{thm:squareDSP} There is a deterministic polynomial-time $3/2$-approximation for \DSPS. 
\end{restatable}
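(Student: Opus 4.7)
The plan is to guess a tight lower bound $T$ on $OPT$ and construct a schedule of peak at most $\lceil 3T/2\rceil$. Three natural lower bounds apply: $T_1=\max_j s(j)$ (some edge hosts the tallest square), $T_2=\lceil \sum_j s(j)^2/W\rceil$ (total area over strip width), and $T_3=\sum_{j:s(j)>W/2}s(j)$, since every subpath of integer length exceeding $W/2$ in a path of $W$ edges contains the central edge(s), so the heights of all such squares pile up there. Taking $T=\max(T_1,T_2,T_3)$ gives $T\le OPT$, and the algorithm enumerates (or binary-searches) integer candidates for $T$ in a polynomial range and returns the best feasible schedule found.

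For a fixed candidate $T$, partition the squares at side $T/2$: call $j$ \emph{large} if $s(j)>T/2$ and \emph{small} otherwise. Any two large squares sharing an edge would contribute combined height greater than $T\ge OPT$; hence in every optimal schedule the large squares lie on pairwise edge-disjoint subpaths, giving $\sum_{j\in L}w(j)\le W$. The algorithm places them at arbitrary edge-disjoint subpaths within a ``bottom band'' of height $T$, so the partial schedule has peak at most $T$ at every edge. It then applies an NFDH-style shelf packing to the small squares sorted in decreasing order of side, greedily placing each on top of the current skyline---using either the column of height $T-s(j)$ directly above a large square, a full-height gap column of height $T$ between large squares, or a ``top band'' of height $T/2$ added above the bottom band. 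The key claim is that the entire packing fits within overall height $3T/2$; its proof combines the area inequality $\sum_{j\in S}s(j)^2\le WT-\sum_{j\in L}s(j)^2$ (implied by $T_2\le T$) with the small additive waste of NFDH on squares sorted by decreasing size, which is controlled by the shortest shelf's height.

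The main obstacle is verifying that this NFDH packing of small squares actually succeeds in the fragmented residual region, because the free space above the large-square band is split into columns of varying heights $T-s(j)$ and $T$, so the textbook NFDH analysis on a rectangular strip does not apply directly. I expect the proof to handle this via a case analysis on whether the total width $\sum_{j\in L}w(j)$ nearly exhausts $W$ (in which case the top band alone absorbs almost all small squares, because the large squares already consume most of the area budget $W\cdot T$) or leaves substantial horizontal gaps (in which case the gap columns of full height $T$ provide ample vertical room). A finer sub-classification of the small squares---separating those with side close to $T/2$ from genuinely tiny ones---may be necessary to make the case analysis go through, together with the middle-edge bound $T_3$ to rule out pathological configurations of squares of side just above $W/2$.
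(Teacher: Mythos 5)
Your high-level plan (split the squares at a height threshold, schedule the big ones edge-disjointly in a bottom band, then run a Next-Fit-Decreasing type greedy on top of the resulting profile) is in the same spirit as what the paper does in one of its cases, but it has two gaps that would prevent the analysis from closing.

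First, the threshold $T/2$ is too generous. The greedy-on-a-profile step is exactly Lemma~\ref{lem:packing-narrow-sorted}, whose third hypothesis is the area inequality $(W-w_{\max})(\pi-h_{\max})+w_{\max}h_{\max}\ge a(\cI)$. With $h_{\max}=w_{\max}=T/2$ and $\pi=3T/2$ this evaluates to $WT-T^2/4$, which is strictly less than $a(\cI)$ whenever $a(\cI)$ is close to $W\cdot OPT$. The inequality only has slack if the threshold is strictly below $T/2$ \emph{and} $W$ is large relative to $OPT$. The paper uses $0.49\,OPT$ as the cut-off and explicitly restricts this argument to the regime $W>100\beta\cdot OPT$; the factor $0.49$ and the lower bound on $W/OPT$ are both essential for the slack $(1-0.0049)(1.01-O(\eps))W\cdot OPT\ge W\cdot OPT$ to materialize. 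So the case analysis you anticipate (large squares nearly exhausting $W$, or not) does appear in the paper, but it is a secondary distinction; the primary one, which you are missing, is on $W$ versus $OPT$.

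Second, and more seriously, you have no mechanism for the regime where $W$ is small compared to $OPT$ (for squares, $W=O(OPT)$). There your ``small'' squares can have side, and hence width, a constant fraction of $W$, which kills the NFDH bound and the hypothesis $w_{\max}(\cI'')\le W/2$. The paper handles this case with an entirely separate tool, Lemma~\ref{lem:Bansal-PEC}: a DSP analogue of the Bansal--Caprara--Jansen--Pr\"adel--Sviridenko area-based packing theorem, built from discretizing the wide tasks' demand profile into $O_\eps(1)$ jumps, linear grouping of the long tasks into $O_\eps(1)$ vertical containers, and then a GAP-based PTAS to fill the containers. Afterwards the small leftover (bounded area, bounded side) is absorbed via Steinberg's theorem. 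None of this is anticipated in your proposal, and I don't see how the plain NFDH argument could substitute for it.

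Minor point: you write that two large squares sharing an edge have combined height greater than $T\ge OPT$, but you defined $T\le OPT$; the edge-disjointness of the large squares in an optimal schedule should be argued directly with respect to $OPT$ (two squares of side $>OPT/2$ overlapping contradicts peak $OPT$), and transferred to your algorithm only when the guess satisfies $T=OPT$.
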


At \kam{a} high level, our approach is based on a classification of tasks into groups depending on their heights and widths. We carefully schedule some groups first, so that their \emph{demand profile} has a convenient structure. Here\kam{,} by demand profile we simply mean the total demand of the already scheduled tasks over each edge. The structure of the demand profile allows us to pack the remaining groups (intuitively, on top of such profile) in a convenient way. We critically exploit the fact that, differently from GSP, we only care about the total demand on each edge. 
This allows us to adapt techniques from Bin Packing or Makespan Minimization (see Lemmas~\ref{lem:packing-narrow-sorted} and \ref{lem:packing-narrow-Qtsorted}).

\subsection{Related Work}

GSP generalizes famous problems such as Makespan Minimization on identical machines~\cite{CB76} (here all the rectangles have width $1$ and $W$ corresponds to the number of processors) or Bin Packing~\cite{CCGMV13} (here all the rectangles have height $1$ and the height of the solution corresponds to the number of bins). Consequently, it is known that for any $\varepsilon>0$, there is no $(3/2-\varepsilon)$-approximation for the problem unless P$=$NP. The first non-trivial approximation algorithm for GSP, with \kam{an} approximation ratio \kam{of} $3$, was given by Baker, Coffman, and Rivest \cite{BCR80}. After a series of very technical and involved refinements~\cite{CGJT80,S80,S94,Steinberg97,HS09}, the current best approximation factor for the problem is $(5/3+\varepsilon)$ due to Harren et al.~\cite{HJPS14}. GSP has been also studied in the pseudopolynomial setting, i.e., when $W=n^{O(1)}$~\cite{JT10,NW16, AKPP17, GGIK16, HJRS20, JR19,JR19ESA} and in the asymptotic setting, i.e. when the optimal value is assumed to be large~\cite{KR00, JS09}. In both cases\kam{,} approximation algorithms and almost matching lower bounds have been developed.

There is a very rich line of research on generalizations and variants of \DSP such as online versions~\cite{LLW19,LLW20}, tasks with availability constraints or time windows~\cite{YMML14, YM17, KSKL13}, \kam{a} mixture of preemptable and non-preemptable tasks~\cite{RKS15} or generalized cost functions based on the demand at each edge~\cite{BHLWY16,LLW20}. The variant of \DSP with the extra feature of interrupting the tasks is known as \textsc{Strip Packing with Slicing}, for which there exists an FPTAS~\cite{ABCGJKLP13}; on the other hand, the case of \DSP is still hard to approximate by a factor better than $3/2$ as noted by Tang et al.~\cite{THLW13}.



Another problem closely related to \DSP is \textsc{Parallel Job Scheduling}. Here we are given a set of jobs and $m$ machines, where each job is characterized by a processing time and a number of machines where the job must be processed simultaneously (these machines do not need to be contiguous), and the goal is to minimize the makespan. The same hardness of approximation applies in this case, but interestingly an almost tight $(3/2+\eps)$-approximation algorithm has been developed~\cite{J12} and also a pseudopolynomial $(1+\varepsilon)$-approximation is known~\cite{JT10}. See~\cite{DMT04} for a comprehensive survey on the problem and its many variants.

It is also worth mentioning another case where the distinction between geometric and demand-based packing plays a substantial role: the \textsc{Unsplittable Flow on a Path} problem (UFP) \cite{AGLW18,GMWZ18,BGKMW15,GIU15} and the \textsc{Storage Allocation} problem (SAP) \cite{MW15,MW20}. In both problems\kam{,} we are given a path graph with edge capacities, and tasks specified by a subpath, a demand (or height), and a profit. In both problems\kam{,} the goal is to select a maximum profit subset of tasks that can be \emph{packed} while respecting edge capacities. For UFP,  analogously to DSP, we require that the total demand of the selected tasks on each edge $e$ is at most the capacity of $e$. For SAP, analogously to GSP, we interpret each task as a rectangle (with \kam{the} width given by its number of edges) and, intuitively, we need to pack such rectangles non-overlappingly below the capacity profile. Notice that, differently from DSP and GSP, here the path associated with each task is fixed in the input. Furthermore, not all the tasks need to be packed.

Finally, in the \textsc{Dynamic Storage Allocation}  problem (DSA) the setting is analogous to SAP but, similarly to GSP, we are asked for an embedding of all the rectangles minimizing the peak height, i.e. the maximum height reached by any rectangle (in particular, there are no edge capacities). Notice that in DSA a lower bound is provided by the peak demand, i.e. the maximum over the edges $e$ of the sum of the heights of rectangles whose path uses $e$. Buchsbaum et al.~\cite{BKKRT04} studied in detail the relation between the optimal peak height and the peak demand, providing examples where these values differ by a constant factor. The authors also present a $(2+\varepsilon)$-approximation for DSA that provides guarantees even when compared with the peak demand. 

%


\subsection{Organization}

\walr{Please check the reordering}
We start by introducing in Section~\ref{sec:prelim} some useful definitions and known results. As a warm-up, in Section \ref{sec:2apx} we present a very simple $2$-approximation for DSP that allows us to illustrate part of our ideas. Then in Section~\ref{sec:approx-alg} we present our main result, namely a $(5/3+\varepsilon)$-approximation for DSP. \wal{Due to space constraints, details about the gap instances in Figure~\ref{fig:gap_examples} and the results for special cases of DSP (Theorems~\ref{lem:light-PTAS} and \ref{thm:squareDSP}) are deferred to the Appendix}.


%
%


\section{Preliminaries}\label{sec:prelim}

Let $e_1,\ldots,e_{W}$ be the edges of $G$ from left to right. Recall that a feasible solution or schedule $P(\cdot)$ specifies a subpath $P(i)$ of $G$ of length $w(i)$ for each task $i$. Sometimes it is convenient to consider a \emph{partial} schedule $P(\cdot)$ which specified the path of a subset $\cI'$ of tasks only (it is convenient to consider $P(i)$ as an \emph{empty path} for the remaining tasks). 
We call this a schedule of $\cI'$.

Let us define, for a given subset $\cI'$ of tasks, $h_{\max}(\cI'):= \max_{i\in \cI'}{h(i)}$ and $h(\cI'):= \sum_{i\in\cI'}{h(i)}$. We define analogously $w_{max}(\cI')$ and $w(\cI')$ w.r.t. widths. Let also $a(\cI'):=\sum_{i\in\cI'}{a(i)}$, where $a(i):= h(i) \cdot w(i)$ corresponds to the \emph{area} of task $i$. We will start by showing a couple of simple lower bounds for the optimal peak $OPT$ that will be used extensively along this work.


\begin{proposition}\label{prop:LB_OPT} $OPT\ge \max \{h_{max}(\cI), \sum_{i\in \cI: w(i)>W/2}{h(i)}, a(\cI)/W\}$.
\end{proposition}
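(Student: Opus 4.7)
The plan is to establish each of the three lower bounds separately; all three are essentially counting arguments about a fixed optimal schedule $P(\cdot)$.

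For $OPT\ge h_{max}(\cI)$, pick a task $i^\ast$ achieving the maximum height. Since every task must be scheduled on a nonempty subpath $P(i^\ast)$, at least one edge $e\in P(i^\ast)$ exists, and on that edge the total demand is already at least $h(i^\ast)=h_{max}(\cI)$. For $OPT\ge a(\cI)/W$, I would use a double-counting/area argument: summing the total demand over all $W$ edges gives
\[
\sum_{e\in E}\,\sum_{i:\, e\in P(i)} h(i)\;=\;\sum_{i\in\cI}h(i)\,|P(i)|\;=\;\sum_{i\in\cI}h(i)w(i)\;=\;a(\cI),
\]
while on the other hand this sum is at most $W\cdot\max_e\sum_{i:\,e\in P(i)}h(i)=W\cdot OPT$. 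Dividing yields the bound.

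The bound $OPT\ge \sum_{i\in\cI:\,w(i)>W/2} h(i)$ is the only one that requires a geometric observation. The key claim is that the subpaths of all tasks with $w(i)>W/2$ share a common edge. First, any two such subpaths must intersect, because if $P(i)$ and $P(j)$ had no common edge then $w(i)+w(j)\le W$, contradicting $w(i),w(j)>W/2$. Since subpaths of a path are intervals on the line, pairwise intersection upgrades to a common intersection by the one-dimensional Helly property. On that common edge $e^\ast$, every task $i$ with $w(i)>W/2$ contributes $h(i)$, so the peak at $e^\ast$ is at least $\sum_{i:\,w(i)>W/2}h(i)$, proving the bound.

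Finally, the proposition follows by taking the maximum of the three bounds. I do not expect any serious obstacle: the only place where one has to be mildly careful is the pigeonhole step in the third bound, to ensure it works for both parities of $W$ with integer-valued widths — but the inequality $w(i)+w(j)>W$ immediately forces the edge-sets to overlap regardless of parity, so nothing delicate is needed.
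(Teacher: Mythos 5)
Your proof is correct and matches the paper's argument in substance: the first bound from a single edge of a widest/tallest task, the last by averaging total demand over the $W$ edges, and the middle one by observing that all wide tasks must overlap on some edge. The only cosmetic difference is in that middle bound, where the paper directly names the witness edge $e_{\lceil W/2 \rceil}$ that every task of width greater than $W/2$ must cover, whereas you derive the existence of a common edge via pairwise intersection plus the one-dimensional Helly property — both are valid, the paper's is just slightly more constructive.
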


\begin{proof} Since the total demand of any edge used by the task of largest height is at least $h_{\max}(\cI)$, it holds that $OPT \ge h_{\max}(\cI)$. Also notice that, in any scheduling, the tasks of width larger than $W/2$ use the edge $e_{\lceil W/2 \rceil}$, being then the total demand of this edge (and consequently $OPT$) at least $\sum_{i\in \cI: w(i)>W/2}{h(i)}$. Finally, the last bound follows from an averaging argument and the fact that the sum over the edges of the total demand on each edge is equal to $a(\cI)$. \end{proof}




\subsection{Demand Profile and Left-Pushing}\label{sec:left-push}

Consider 
a schedule $P(\cdot)$ of $\cI' \subseteq \cI$. We define the \emph{demand profile} $h(P)$ of $P(\cdot)$ as the vector that stores for each edge $e$ the total demand $\sum_{i\in \cI':e\in P(i)}h(i)$ of the tasks whose path contains $e$ (if the path of $i$ is not specified, then $i$ does not contribute to the demand profile). Since $W$ can be exponential in $n$, we need to store the demand profile in a more efficient way. This can be done by noticing that the number of times the total demand changes from an edge to the next one is at most $2n$ (when a task starts or finishes). Hence we just need to store the edges where the demand profile changes value and the corresponding demand. In particular, we can efficiently store the demand profile. Furthermore, we can efficiently update it, e.g., when augmenting an existing schedule by specifying the path $P(i)$ of one more task $i$, or when we modify the value of some $P(i)$ by \emph{shifting} tasks as we will discuss later.

%

Given a schedule $P(\cdot)$ of $\cI'\subseteq \cI$ and $i\in \cI'$, a \emph{left-shifting} of $i$ in $P(\cdot)$ is the operation of replacing $P(i)$ with the path $P'(i)$ of length $w(i)$ that starts one edge to the left of $P(i)$. 
Clearly\kam{,} this operation is allowed only if $P(i)$ does not start at the leftmost edge of $G$. Consider a schedule $P(\cdot)$ with peak $\pi$, and let $\pi'\geq \pi$. A \emph{$\pi'$-left-pushing} of $P(\cdot)$ is the operation of iteratively performing left-shiftings in any order until it is not possible to continue\kamr{omitted ``,''} while guaranteeing that the peak is always at most $\pi'$. We will critically use left-pushings in our algorithms. Notice that a left-pushing can be computed in polynomial time (see Appendix~\ref{sec:light-PTAS} for \fab{some more} details). 

Intuitively, left-pushing accumulates the demand over the first edges while inducing a non-increasing demand profile to the right. For a node $t^*$ of the path and a value $Q\geq 0$, we will say that a (possibly partial) schedule $P(\cdot)$ is $(Q,t^*)$-sorted if the corresponding demand on the edges to the left of $t^*$ is at least $Q$\kamr{omitted ``,''} and on the edges to the right of $t^*$ the demand 
profile is non-increasing (see Figure~\ref{fig:classif}); if $t^*$ is the leftmost node we just say that the schedule is sorted. Our algorithms will first schedule some tasks\kamr{omitted ``,''} and then perform a left-pushing. After that, it will be possible to schedule the remaining tasks in a convenient way thanks to the properties of the resulting demand profile.

\subsection{Container-based Scheduling}

Similar\kamr{changed ``Similarly'' to ``Similar''} to recent work on related rectangle packing problems (e.g., \cite{GGHIKW17,BCJPS09}), we will exploit a \emph{container-based} scheduling approach. 
A \emph{container} $C$ can be interpreted as an artificial task, with its own width $w(C)$ (i.e. a number of edges) and height $h(C)$. Furthermore\kam{,} it is classified as \emph{vertical} or \emph{horizontal}, with a meaning which is explained later. The containers are scheduled as usual tasks in a DSP instance (in particular by defining a path $P(C)$ for each container $C$), with the goal of minimizing the peak $\pi$. We also define a packing of tasks into containers $C$ respecting the following constraints: if $C$ is vertical, the tasks $\cI(C)$ packed into $C$ must have height at most $h(C)$ and total width at most $w(C)$; if $C$ is horizontal, tasks $\cI(C)$ must have width at most $w(C)$ and total height at most $h(C)$. Intuitively, the tasks packed into a vertical (resp., horizontal) container induce a geometric packing of the rectangles associated with each task into the rectangle corresponding to the container, where the task rectangles are packed non-overlappingly one next to the other (resp., one on top of the other).
Any such packing and schedul\kam{ing} of containers naturally induces a schedule of the tasks: if $C$ is horizontal, tasks $\cI(C)$ are all scheduled starting on the leftmost edge of $P(C)$. Otherwise, tasks $\cI(C)$ are scheduled one after the other starting at the leftmost edge of $P(C)$. It is hopefully clear to the reader that the demand profile of such \kam{a} schedule of the tasks is dominated by the demand profile of the containers' schedule. In particular, if the latter has peak $\pi$, then the corresponding schedule of the tasks has \kam{a} no larger peak.

The general strategy is then as follows: we first show that there exists a convenient packing of tasks into a constant number of containers\kamr{removed ``,''} and that there exists a scheduling $P^*(\cdot)$ of these containers with \kam{a} small peak $\pi$. We also require that these containers are \emph{guessable}, meaning that we can guess their sizes by exploring a polynomial number of options. 
Once we guessed the correct set of containers, a $\pi$-left-pushing of $P^*(\cdot)$ can be computed by brute force (since they are constantly many tasks). 
Finally\kam{,} we pack tasks into containers, inducing a schedule of the tasks with peak $\pi$. 

This final step can be performed (almost completely) via a reduction to the 
\textsc{Generalized Assignment} problem (GAP). Recall that in GAP we are given a set of $k$ bins, where each bin $j$ has an associated capacity $C_j\geq 0$, and a set of $n$ items. For each item $i$ and bin $j$, the input specifies a size $s_{ij}\geq 0$ and a profit $p_{ij}\geq 0$ of item $i$ w.r.t. bin $j$. A feasible solution assigns each item to some bin\kamr{removed ``,''} so that the total size of the items assigned to each bin $j$ is at most $C_j$. Our goal is to maximize the total profit associated with this assignment. GAP admits a PTAS in the case of a constant number of bins (see Section E.2 in \cite{GGHIKW17arxiv}). 

\begin{lemma}\label{lem:containersPackPTAS} For any constant $\varepsilon'>0$, given a set of tasks $\cI'$ that can be packed into a given set of containers of constant cardinality, there is a polynomial\kam{-}time algorithm to pack $\cI''\subseteq \cI'$ with $a(\cI'')\geq (1-\varepsilon')a(\cI')$ into the mentioned containers. 
\end{lemma}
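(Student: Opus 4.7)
The plan is to reduce the container-packing problem to an instance of \textsc{Generalized Assignment} with a constant number of bins, and then invoke the PTAS for GAP cited in the paper (Section E.2 of~\cite{GGHIKW17arxiv}). Let the given set of containers be $\{C_1,\dots,C_k\}$ with $k$ constant, and let $\cI'$ be the set of tasks that admits a valid packing into these containers.

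First, I would build the GAP instance as follows. Create one bin per container, and use the tasks of $\cI'$ as items. For each task $i\in\cI'$ and each container $C_j$ I set, depending on the orientation of $C_j$:
\begin{itemize}
\item if $C_j$ is vertical, capacity $C_j^{\text{cap}} := w(C_j)$, and for each task $i$ with $h(i)\le h(C_j)$ let $s_{ij}:=w(i)$ and $p_{ij}:=a(i)=w(i)h(i)$; otherwise set $s_{ij}:=w(C_j)+1$ (so that $i$ cannot be assigned to $C_j$) and $p_{ij}:=0$;
\item if $C_j$ is horizontal, capacity $C_j^{\text{cap}} := h(C_j)$, and for each task $i$ with $w(i)\le w(C_j)$ let $s_{ij}:=h(i)$ and $p_{ij}:=a(i)$; otherwise $s_{ij}:=h(C_j)+1$ and $p_{ij}:=0$.
\end{itemize}
By the definition of the container-based packing (a vertical container stacks its tasks side-by-side; a horizontal one stacks them one on top of the other), a feasible assignment of a subset of items to bins under the size constraints above corresponds exactly to a valid packing of that subset of tasks into the containers, and vice versa; moreover the total profit equals the total area of the packed tasks.

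Next, since the hypothesis guarantees that \emph{all} of $\cI'$ can be packed into the containers, there exists a feasible GAP assignment of profit $a(\cI')$, so the optimal GAP value is at least $a(\cI')$. Because $k$ is a constant, I invoke the PTAS for GAP with a constant number of bins (with accuracy parameter $\varepsilon'$) to obtain in polynomial time a feasible assignment of profit at least $(1-\varepsilon')\,a(\cI')$. Translating back through the above correspondence, the selected items form a subset $\cI''\subseteq \cI'$ together with a valid packing into the containers and with $a(\cI'')\ge(1-\varepsilon')\,a(\cI')$, which is exactly the statement of the lemma.

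I do not expect any substantial obstacle: the encoding of ``item $i$ does not fit in container $j$'' via an oversized $s_{ij}$ is a minor modeling trick, the height/width condition on the individual tasks (enforced only for the dimension orthogonal to the packing direction of the container) is handled item-wise, and the entire difficulty of the lemma is absorbed by the cited GAP PTAS, whose applicability is justified by the assumption that the number of containers is constant.
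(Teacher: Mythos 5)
Your proposal is correct and matches the paper's proof essentially line for line: both reduce to GAP with one bin per container, encode the orientation-dependent capacity and size constraints in the same way, use area as profit, and invoke the PTAS for GAP with constantly many bins. The only cosmetic differences are that you set $s_{ij}$ to the capacity plus one rather than $+\infty$ and set $p_{ij}=0$ rather than $a(i)$ for infeasible pairs, neither of which changes anything.
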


\begin{proof} We define a GAP instance as follows: we create one bin per container, where the capacity of the bin is equal to the width of the container if it is vertical or the height of the container if it is horizontal. For each task $i$ we define an item that \kamr{changed ``which'' to ``that''} has uniform profit equal to its area $a(i)$ over all the bins. Given a task $i$ and a vertical  (resp., horizontal) container $j$, the size $s_{ij}$ of $i$ into bin $j$ is set to $w(i)$ (resp., $h(i)$) if task $i$ can be packed into container $j$ according to the mentioned rules. Otherwise we set $s_{ij}=+\infty$. The claim follows by applying the aforementioned PTAS for GAP with parameter $\varepsilon'$. 
\end{proof}

Notice that the above lemma allows us to pack all the tasks but a subset of small total area, hence we need to schedule somehow such \emph{leftover} tasks. This is not necessarily a trivial task; indeed, such tasks, though of small area, might have large height, and hence scheduling them on top of the rest might substantially increase the peak. To circumvent this issue we will identify special containers reserved \kam{for}\kamr{changed ``to'' to ``for''} tasks of large height where we will be able to pack all such tasks with no leftovers. We will then apply the PTAS from Lemma \ref{lem:containersPackPTAS} only to the remaining tasks and containers.

\section{A Simple $2$-Approximation for DSP}
\label{sec:2apx}

In order to introduce part of our ideas, in this section\kam{,} we present a simple $2$-approximation for DSP. As mentioned before, a $2$-approximation can also be achieved via Steinberg's algorithm~\cite{Steinberg97}, however\kam{,} that algorithm is substantially more complex (not surprisingly since it computes a \emph{geometric packing} of tasks interpreted as rectangles like in GSP). 

The following lemma exploits a modification of Next-Fit-Decreasing~\cite{CCGMV13}, the well\kam{-}known approximation algorithm for Bin Packing.

%
%
%

\begin{lemma}\label{lem:packing-narrow-sorted} Let $P(\cdot)$ be a sorted schedule of $\cI'\subseteq  \cI$  with peak at most $\pi$. For $\cI'':=\cI\setminus \cI'$, assume that:\fabr{Looks good to me}
\begin{itemize} 
\item $\pi\ge h_{\max}(\cI'') + \max\{a(\cI)/W, h_{\max}(\cI'')\}$,
\item $w_{\max}(\cI'')\le W/2$, and
\item $(W-w_{\max}(\cI''))(\pi-h_{\max}(\cI'')) + w_{\max}(\cI'')\cdot h_{\max}(\cI'') \ge a(\cI)$. \end{itemize}
Then it is possible to compute in polynomial time a schedule of $\cI$ having peak at most $\pi$. \end{lemma}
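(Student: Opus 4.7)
The plan is to use a Next-Fit-Decreasing-style algorithm adapted to DSP. Starting from the sorted (non-increasing) demand profile induced by $P(\cdot)$, I would sort the tasks in $\cI'':=\cI\setminus\cI'$ by non-increasing height and process them one by one, scheduling each task $i$ at the \emph{leftmost valid position}, i.e.\ the leftmost subpath of length $w(i)$ such that adding $h(i)$ on every edge of that subpath keeps the current peak at most $\pi$. After each placement I would apply a $\pi$-left-pushing so that we can keep reasoning about a structured profile. The final output is the schedule $P(\cdot)$ of $\cI'$ augmented with the placements of all tasks in $\cI''$.

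To argue correctness, I would show by contradiction that the greedy step never gets stuck. Suppose task $i^*\in\cI''$ with $h^*:=h(i^*)$ and $w^*:=w(i^*)$ cannot be placed. Let $h'$ denote the current demand profile, which has total mass $\sum_e h'(e) = a(\cI') + \sum_{\text{placed } i}a(i) \le a(\cI) - a(i^*)$, since at least $i^*$ is not yet scheduled. The failure condition says that every interval of $w^*$ consecutive edges contains at least one \emph{full} edge, where I call $e$ full if $h'(e) > \pi - h^*$. Using the non-increasing starting profile, the decreasing-height order, and the leftmost-valid rule (together with left-pushing), I would maintain the invariant that at the moment $i^*$ is being processed, the set of full edges for $i^*$ contains a prefix $\{e_1,\dots,e_{k^*}\}$ of length $k^* \ge W - w^* + 1$ (otherwise the suffix of edges starting at $e_{k^*+1}$ would accommodate $i^*$).

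From this prefix structure,
\[
a(\cI) - a(i^*) \;\ge\; \sum_e h'(e) \;\ge\; \sum_{j\le k^*} h'(e_j) \;>\; (W - w^*)(\pi - h^*),
\]
so $(W - w^*)(\pi - h^*) + w^*\,h^* < a(\cI)$. Let $f(w,h) := (W-w)(\pi - h) + w\,h$. Condition~1 gives $\pi \ge 2H \ge 2h$, so $\partial f/\partial w = 2h - \pi \le 0$; condition~2 gives $W_{\max} \le W/2$, hence $\partial f/\partial h = 2w - W \le 0$ on the relevant range. Since $w^* \le w_{\max}(\cI'')$ and $h^* \le h_{\max}(\cI'')$, monotonicity yields
\[
(W - w_{\max}(\cI''))(\pi - h_{\max}(\cI'')) + w_{\max}(\cI'')\cdot h_{\max}(\cI'')
\;=\; f(w_{\max}(\cI''), h_{\max}(\cI'')) \;\le\; f(w^*,h^*) \;<\; a(\cI),
\]
directly contradicting condition~3. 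Condition~1 also guarantees $\pi \ge h_{\max}(\cI'')+h_{\max}(\cI'')$, which is needed to make the top $h^*$-band available on top of edges where $h'(e) \le \pi-H$.

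The main obstacle is the invariant that the full edges form a prefix of length $\ge W - w^* + 1$. The initial profile is sorted, but after a single greedy placement monotonicity can be broken in the segment where the new task was placed. I would handle this either by reapplying a $\pi$-left-pushing after every insertion (and proving that in the non-increasing regime of the initial $\cI'$-profile this operation restores a usable staircase shape w.r.t.\ the threshold $\pi-h^*$) or by directly showing a weaker prefix-threshold invariant: for every height level $c$ that will serve as a future threshold $\pi-h(i)$, the set $\{e : h'(e) > c\}$ remains a prefix. This relies crucially on the decreasing-height order of $\cI''$ and on the leftmost-valid choice, which together guarantee that whenever the profile is raised, the raise happens immediately to the right of a previously full edge.
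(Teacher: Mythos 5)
Your high‑level strategy (schedule $\cI''$ greedily on top of the sorted profile and derive a contradiction from an area count, using the monotonicity of $A(h,w)=(W-w)(\pi-h)+hw$ in both arguments) is exactly the paper's, and your area/monotonicity calculation is correct and identical. The difference, and the problem, is the greedy rule and the invariant you need it to maintain.

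Your algorithm sorts $\cI''$ by non‑increasing height and places each task at its own leftmost valid window, optionally left‑pushing afterwards. You correctly flag that you need the ``full'' edges to form a prefix, and you call this the main obstacle. Unfortunately neither of the two fixes you propose closes it. Consider the sorted initial profile $(5,3,3,\dots)$, $\pi=6$, and a first task with $h=3,w=2$. Its leftmost valid window is edges $2,3$ and the profile becomes $(5,6,6,\dots)$. A left‑pushing cannot repair this: the task just placed cannot move left (edge $1$ is at $5$), and the $\cI'$‑tasks are already left‑pushed by assumption. Your weaker prefix‑threshold invariant also fails on the same example: for any future threshold $c$ with $5<c<6$, i.e.\ any next task of height $h'<1$, the set $\{e:h'(e)>c\}=\{2,3\}$ is not a prefix. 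So the statement ``if $i^*$ cannot be placed then the full edges contain a prefix of length $\ge W-w^*+1$'' is not established by your argument, and the contradiction does not follow.

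The paper sidesteps this by using a different greedy. It keeps a single pointer $e^{\mathrm{check}}$ that only moves rightwards, and at each $e^{\mathrm{check}}$ it tries \emph{every} still‑unplaced task (in an arbitrary fixed order), placing all that fit starting exactly at $e^{\mathrm{check}}$, before advancing to the next edge where the demand changes. Two invariants then come for free. First, the demand profile from $e^{\mathrm{check}}$ rightwards is always non‑increasing: a task placed at $e^{\mathrm{check}}$ contributes a step function that is itself non‑increasing on the suffix starting at $e^{\mathrm{check}}$, and the sum of two non‑increasing functions is non‑increasing (this is exactly why the paper does not try to preserve monotonicity of the \emph{whole} profile, which, as your example shows, is hopeless). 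Second, every edge strictly to the left of $e^{\mathrm{check}}$ has frozen demand strictly greater than $\pi-h(i)$ for every still‑unplaced $i$: when $e^{\mathrm{check}}$ was at that edge, $i$ was tried there and failed; since the profile was non‑increasing from that edge, the maximum in $i$'s window was at that edge, so the failure certifies $>\pi-h(i)$ there; and demands only increase thereafter and then freeze once $e^{\mathrm{check}}$ passes. Consequently, when some $i$ with width $w(i)$ cannot be placed, it is only because $e^{\mathrm{check}}>W-w(i)$, so more than $W-w(i)$ edges carry demand exceeding $\pi-h(i)$, and the area bound you wrote down applies verbatim. To repair your write‑up, replace the per‑task leftmost‑valid rule by this shared‑pointer rule (the left‑pushing and the decreasing‑height ordering are then unnecessary); the rest of your proof goes through unchanged.
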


\begin{proof} 
By slightly abusing notation, we will next use an edge label $e$ also to denote the position $i$ of $e$ in the sequence $e_1,\ldots,e_{W}$ of edges from left to right.
We do not modify the schedule of $\cI'$\kamr{removed ``,''} and schedule the remaining tasks $\cI''$ as follows. Let us fix an arbitrary order for tasks in $\cI''$, and let us initially define $e^{check}$ to be the leftmost edge. We scan completely the list of tasks $\cI''$ and, if the current task $i$ can be scheduled starting on edge $e^{check}$ while maintaining a peak of at most $\pi$, we do that and remove $i$ from $\cI''$; otherwise\kam{,} we keep $i$ in $\cI''$ and try with the next task. Once we consider the final task, we update $e^{check}$ to be the leftmost edge to the right of the current $e^{check}$ whose demand is different from the demand of the current $e^{check}$. We iterate the procedure on the new $e^{check}$ until all tasks are scheduled or we identify a task $i$ which cannot be scheduled.

It is not difficult to see that with this procedure, the demand profile from $e^{check}$ to its right is always non-increasing (restricted to these edges, scheduling a task is equivalent to summing up two non-decreasing profiles), and none of the remaining tasks can fit in any one of the edges to the left of $e^{check}$ (as we actually tried to place them there but it was not possible). Notice also that, if this procedure manages to schedule all the tasks, then the claimed peak is automatically achieved. So we will assume by contradiction that this is not the case. 

Let $i$ be a task that could not be scheduled. This could only happen due to $i$ being too wide for the current edge $e^{check}$ where it should be scheduled (and hence for any subsequent edge). This implies that $e^{check}>W-w(i)$ and hence there are more than $W-w(i)$ edges having demand larger than $\pi - h(i)$. \fab{Thus} the total area  
of the scheduled tasks plus task $i$ is strictly larger than 
$$
A(h(i),w(i)) := (W-w(i))\cdot (\pi - h(i)) + h(i)\cdot w(i).
$$ 
This expression is decreasing both as a function of $h(i)$ and as a function of $w(i)$. Indeed,
$$
\frac{\partial}{\partial h(i)} A(h(i),w(i)) = 2w(i)-W \le 0, \text{ and }  \frac{\partial}{\partial w(i)} A(h(i),w(i)) = 2h(i)-\pi \le 0,
$$ 
where we used the fact that, by assumption, $w(i)\le \frac{W}{2}$ and $\pi\ge 2h_{\max}(\cI'')\geq 2h(i)$. We conclude that
$$
A(h(i),w(i)) \ge A(h_{\max}(\cI''),w_{\max}(\cI'')) = (W-w_{\max}(\cI''))(\pi-h_{\max}(\cI'')) + w_{\max}(\cI'')\cdot h_{\max}(\cI'') \ge a(\cI),
$$ 
where in the last inequality we used the third assumption. This is a contradiction since a subset of tasks would have area strictly larger than the total area $a(\cI)$.
\end{proof}

We are now ready to provide a simple $2$-approximation.
\begin{corollary}\label{cor:2-apx} 
There exists a deterministic $2$-approximation for DSP. 
\end{corollary}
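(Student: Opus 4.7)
The plan is to combine Lemma~\ref{lem:packing-narrow-sorted} with a trivial sorted schedule of the ``wide'' tasks. Let $LB := \max\{h_{\max}(\cI), \sum_{i:w(i)>W/2} h(i), a(\cI)/W\}$, which by Proposition~\ref{prop:LB_OPT} satisfies $LB \le OPT$. Set the target peak $\pi := 2\cdot LB \le 2\cdot OPT$; since $OPT$ is integral and the relevant quantities are computable in polynomial time, no guessing is required.

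First, I would partition the tasks into $\cI' := \{i : w(i) > W/2\}$ (wide) and $\cI'' := \cI\setminus \cI'$ (narrow, so $w_{\max}(\cI'') \le W/2$). The wide tasks are scheduled by placing each one starting at the leftmost edge $e_1$. Because every wide task covers the entire middle region of $G$, the resulting demand profile is non-increasing (peak at $e_1$, dropping only as tasks end to the right), so this partial schedule is sorted in the sense of Section~\ref{sec:left-push}. Its peak equals $\sum_{i\in\cI'} h(i) \le LB \le \pi$, so the preconditions on $P(\cdot)$ in Lemma~\ref{lem:packing-narrow-sorted} are met.

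Next I would apply Lemma~\ref{lem:packing-narrow-sorted} to extend this schedule to the narrow tasks with peak at most $\pi$. The three hypotheses can be verified directly from the definition of $LB$. For the first, $h_{\max}(\cI'')+\max\{a(\cI)/W,h_{\max}(\cI'')\} \le LB + LB = \pi$. For the second, $w_{\max}(\cI'')\le W/2$ by construction. For the third, let $w^* := w_{\max}(\cI'')$ and $h^* := h_{\max}(\cI'')$; the same monotonicity argument used inside the proof of Lemma~\ref{lem:packing-narrow-sorted} (namely, $(W-w)(\pi-h)+wh$ is non-increasing in $h$ and in $w$ whenever $w\le W/2$ and $h\le \pi/2$) shows
\[
(W-w^*)(\pi-h^*)+w^*h^* \;\ge\; (W-\tfrac{W}{2})(\pi-\tfrac{\pi}{2})+\tfrac{W}{2}\cdot \tfrac{\pi}{2} \;=\; \tfrac{W\pi}{2} \;=\; W\cdot LB \;\ge\; a(\cI),
\]
where the final step uses $LB \ge a(\cI)/W$. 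Hence Lemma~\ref{lem:packing-narrow-sorted} yields a full schedule of peak at most $\pi \le 2\cdot OPT$.

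There is no substantial obstacle; the only subtlety is realizing that stacking all wide tasks at the leftmost edge is automatically sorted and already has peak at most $LB$, so one does not need any smarter initial placement to unlock the lemma. The whole construction is deterministic and runs in polynomial time.
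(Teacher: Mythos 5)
Your proposal is correct and follows essentially the same route as the paper: schedule all tasks with $w(i)>W/2$ starting at the leftmost edge to get a sorted partial schedule with peak at most $LB$, then invoke Lemma~\ref{lem:packing-narrow-sorted} with $\pi=2LB$, verifying the three hypotheses via the lower bounds of Proposition~\ref{prop:LB_OPT}. The only cosmetic difference is that you explicitly cite the monotonicity of $(W-w)(\pi-h)+wh$ to verify the third hypothesis, whereas the paper writes out the resulting bound directly; the underlying argument is identical.
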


\begin{proof} Let $\cI$ be an instance of DSP. We will first schedule the tasks $\cI'$ having width larger than $W/2$ starting on the leftmost edge. Let $\cI'':=\cI\setminus \cI'$. This partial schedule is sorted and has peak $\sum_{i\in \cI'}{h(i)} \le M:=\max \{h_{max}(\cI), \sum_{i\in \cI'}{h(i)}, a(\cI)/W\}$. Recall that, by Proposition~\ref{prop:LB_OPT}, $M\leq OPT$.
Define $\pi = 2M$, and observe that
$$ 
(\pi-h_{\max}(\cI''))(W-w_{\max}(\cI''))+h_{\max}(\cI'')\cdot w_{\max}(\cI'')\ge M\cdot (W/2) + M\cdot (W/2) =M\cdot W \ge a(\cI).
$$ 
Thus we can apply Lemma~\ref{lem:packing-narrow-sorted} with parameter $\pi=2M$. This provides a schedule with peak at most $2M \le 2OPT$.
\end{proof}

In the following sections\kam{,} we will extend the approach in the above $2$-approximation as follows. We will first compute a feasible solution of some given peak that includes all the tasks having height larger than some threshold and width larger than some threshold. Then we will left-push this schedule to add some structure to the demand profile. Finally, we schedule the remaining tasks by means of a generalization of Lemma \ref{lem:packing-narrow-sorted} (Lemma~\ref{lem:packing-narrow-Qtsorted}) which considers $(Q,t^*)$-sorted schedules (rather than just sorted ones).




\section{A ($5/3+\eps$)-Approximation for \DSP}\label{sec:approx-alg}

In this section we will prove Theorem \ref{thr:53}.
In order to attain the claimed result, we will provide first some useful definitions and preprocessing lemmas.

Let us assume that the optimal value $OPT$ is known to the algorithm (this assumption can be dropped by approximately guessing this value, introducing an extra $(1+\varepsilon)$ factor in the approximation). We start by classifying the tasks in the instance according to their widths and heights (see Figure~\ref{fig:classif}). Let $\mu, \delta$, $\mu<\delta\leq \eps$, be two constant parameters to be fixed later. We say that a task $i$ is:
\begin{itemize}
	\item \emph{tall} if $h(i)> \frac{2}{3}OPT$,
	\item \emph{large} if $h(i) \in (\delta OPT, \frac{2}{3}OPT]$ and $w(i)>\varepsilon W$,
	\item \emph{horizontal} if $h(i) \le \mu OPT$ and $w(i) > \varepsilon W$,
	\item \emph{narrow} if $h(i) \le \frac{2}{3} OPT$ and $w(i) \le \varepsilon W$, or
	\item \emph{medium} if $h(i) \in (\mu OPT, \delta OPT]$ and $w(i)>\varepsilon W$.
\end{itemize}
	
\begin{figure}
    \centering
		\resizebox{!}{150pt}{\begin{tikzpicture}
			\fill[pattern color = lightgray, pattern = north east lines] (0,0) rectangle (2,5.25);
			\fill[pattern color = lightgray, pattern = north west lines] (2,2.5) rectangle (6,5.25);
			\fill[color = lightgray!55] (0,5.25) rectangle (6,7.25);
			\fill[color = lightgray!55] (2,1.25) rectangle (6,2.5);
			\fill[pattern color = lightgray, pattern = north west lines] (2,0) rectangle (6,1.25);
			
			
			\draw[dashed] (2,0) -- (2,7.75);
			\draw[dashed] (6,0) -- (6,7.75);
			
			\draw[dashed] (0,1.25) -- (6.5,1.25);
			\draw[dashed] (0,2.5) -- (6.5,2.5);
			\draw[dashed] (0,5.25) -- (6.5,5.25);
			\draw[dashed] (0,7.25) -- (6.5,7.25);
			
			
			\draw[->] (0,0) -- (6.5,0);
			\draw[->] (0,0) -- (0,7.75);
			
			
			\draw[ultra thick] (0,0) rectangle (2,5.25);
			\draw[ultra thick] (2,0) rectangle (6,1.25);
			\draw[ultra thick] (2,1.25) rectangle (6,2.5);
			\draw[ultra thick] (2,2.5) rectangle (6,5.25);
			\draw[ultra thick] (0,5.25) rectangle (6,7.25);
			
			
			\draw (0,0) -- (0,-0.25);
			\draw (0,-0.25) node[anchor=north] {$0$};
			\draw (2,0) -- (2,-0.25);
			\draw (2,-0.25) node[anchor=north] {$\varepsilon W$};
			\draw (6,0) -- (6,-0.25);
			\draw (6,-0.25) node[anchor=north] {$W$};
			
			\draw (0,0) -- (-0.25,0);
			\draw (-0.25,0) node[anchor=east] {$0$};
			\draw (0,1.25) -- (-0.25,1.25);
			\draw (-0.25,1.25) node[anchor=east] {$\mu OPT$};
			\draw (0,2.5) -- (-0.25,2.5);
			\draw (-0.25,2.5) node[anchor=east] {$\delta OPT$};
			\draw (0,5.25) -- (-0.25,5.25);
			\draw (-0.25,5.25) node[anchor=east] {$\frac{2}{3} OPT$};
			\draw (0,7.25) -- (-0.25,7.25);
			\draw (-0.25,7.25) node[anchor=east] {$OPT$};
			
			
			\draw (1,0.625) node {\textbf{\small Narrow}};
			\draw (1,1.875) node {\textbf{\small Narrow}};
			\draw (1,3.825) node {\textbf{\small Narrow}};
			\draw (1,6.25) node {\textbf{\small Tall}};
			
			
			\draw (4,0.625) node {\textbf{\small Horizontal}};
			\draw (4,1.875) node {\textbf{\small Medium}};
			\draw (4,3.825) node {\textbf{\small Large}};
			\draw (4,6.25) node {\textbf{\small Tall}};
			
			\draw (0,7.75) node[anchor=south] {$h(i)$};
			\draw (6.5,0) node[anchor=west] {$w(i)$};
			\end{tikzpicture}
		}
	\hspace{40pt}
		\resizebox{!}{150pt}{\begin{tikzpicture}
			\fill[pattern color = lightgray, pattern = north east lines] (0,0) rectangle (1,5);
            \fill[pattern color = lightgray, pattern = north east lines] (1,0) rectangle (1.5,6);
            \fill[pattern color = lightgray, pattern = north east lines] (1.5,0) rectangle (3,4.65);
            \fill[pattern color = lightgray, pattern = north east lines] (3,0) rectangle (4,5);
            \fill[pattern color = lightgray, pattern = north east lines] (4,0) rectangle (4.5,4);
            \fill[pattern color = lightgray, pattern = north east lines] (4.5,0) rectangle (5,3);
            \fill[pattern color = lightgray, pattern = north east lines] (5,0) rectangle (6,2);
            
	        \draw[ultra thick] (0,5) -- (1,5) -- (1,6) -- (1.5,6) -- (1.5,4.65) -- (3,4.65) -- (3,5) -- (4,5) -- (4,4) -- (4.5,4) -- (4.5,3) -- (5,3) -- (5,2) -- (6,2) -- (6,0);
            
			
			
			
			
			
			\draw[->] (0,0) -- (6.5,0);
			\draw[->] (0,0) -- (0,7.75);
			
			
			
			
			\draw (0,0) -- (0,-0.25);
			\draw (0,-0.25) node[anchor=north] {$0$};
			\draw (4,0) -- (4,-0.25);
			\draw (4,-0.25) node[anchor=north] {$t^*$};
			\draw (6,0) -- (6,-0.25);
			\draw (6,-0.25) node[anchor=north] {$W$};
			
			\draw (0,0) -- (-0.25,0);
			\draw (-0.25,0) node[anchor=east] {$0$};
			\draw (0,4.6) -- (-0.25,4.6);
			\draw (-0.25,4.6) node[anchor=east] {$Q$};
			
			\draw [dashed] (0,4.6) -- (6,4.6);
			\draw [dashed] (4,0) -- (4,5);
			
			
			

			\draw (0,7.75) node[anchor=south] {Demand};
			\draw (6.5,0) node[anchor=west] {Edges};
			\end{tikzpicture}
		}
	\caption{(Left) Classification of the tasks according to Section~\ref{sec:approx-alg}. (Right) Representation of a $(Q,t^*)$-sorted demand profile.}\label{fig:classif}
		\end{figure}
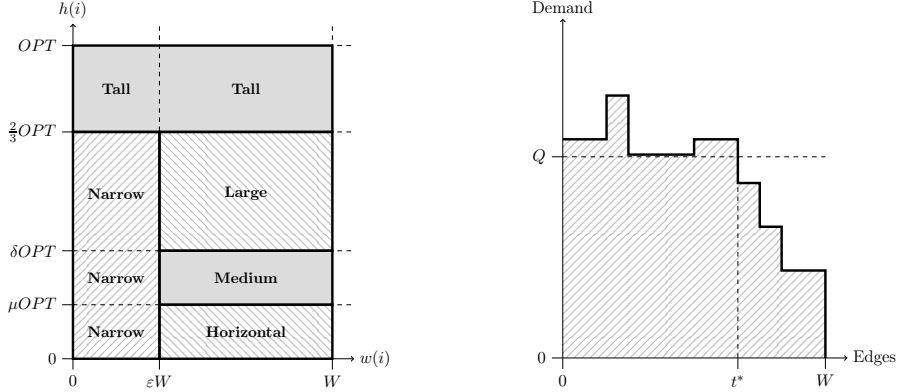

We denote by $\mathcal{T}, \mathcal{L}, \mathcal{H}, \mathcal{N}$ and $\mathcal{M}$ the sets of tall, large, horizontal, narrow and medium tasks respectively. As the following lemma states, it is possible to choose $\mu$ and $\delta$ in such a way that the two parameters differ by a large factor and the total height of medium tasks is small.
	
\begin{lemma}\label{lem:mediumrectanglesarea}
	Given a polynomial-time computable function $f : (0, 1) \rightarrow (0, 1)$, with $f(x) < x$, and any constant $\varepsilon\in (0,1)$, we can compute in polynomial time a set $\Delta$ of $\frac{2}{\varepsilon^2}$ many positive real numbers upper bounded by $\varepsilon$, such that there is at least one number $\delta \in \Delta$ so that, by choosing
	$\mu = f(\delta)$, one has 
	$a(\mathcal{M})\leq \varepsilon^2 \cdot OPT \cdot W$ (hence $h(\mathcal{M}) \le \varepsilon OPT$).		
\end{lemma}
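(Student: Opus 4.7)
The plan is to construct $\Delta$ as a geometrically-decreasing sequence of thresholds obtained by iterating $f$ starting from $\varepsilon$, and then to combine a disjointness observation on the induced height-bands with the area lower bound $a(\cI)\le OPT\cdot W$ (i.e.\ Proposition~\ref{prop:LB_OPT}) through a pigeonhole argument.

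Concretely, set $K:=2/\varepsilon^{2}$ (rounding up if needed, which only affects constants), let $\delta_{1}:=\varepsilon$, and define iteratively $\delta_{k+1}:=f(\delta_{k})$ for $k=1,\dots,K-1$; put $\Delta:=\{\delta_{1},\dots,\delta_{K}\}$. Since $f$ is polynomial-time computable and $K$ is a constant, $\Delta$ can be computed in polynomial time. All the $\delta_{k}$ lie in $(0,\varepsilon]$ because $\delta_{1}=\varepsilon$ and because $f$ takes values in $(0,1)$ with $f(x)<x$, so the sequence is strictly decreasing and stays positive.

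For each $k$, let $\mathcal{M}_{k}$ denote the set of medium tasks corresponding to the choice $\delta=\delta_{k}$ and $\mu=f(\delta_{k})=\delta_{k+1}$; that is, tasks with $w(i)>\varepsilon W$ and $h(i)\in(\delta_{k+1}OPT,\delta_{k}OPT]$. The height intervals $(\delta_{k+1}OPT,\delta_{k}OPT]$ are pairwise disjoint (consecutive intervals meet only at a single point, which is excluded from one side and included in the other), so the sets $\mathcal{M}_{k}$ are pairwise disjoint subsets of $\cI$. Therefore $\sum_{k=1}^{K}a(\mathcal{M}_{k})\le a(\cI)\le OPT\cdot W$, where the last inequality uses Proposition~\ref{prop:LB_OPT}. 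By averaging, some index $k^{\ast}$ satisfies $a(\mathcal{M}_{k^{\ast}})\le OPT\cdot W/K=\varepsilon^{2}\cdot OPT\cdot W/2\le\varepsilon^{2}\cdot OPT\cdot W$, and picking $\delta:=\delta_{k^{\ast}}$, $\mu:=f(\delta_{k^{\ast}})$ yields the first claim. For the parenthetical bound on $h(\mathcal{M})$, every medium task has $w(i)>\varepsilon W$, so $a(i)>\varepsilon W\cdot h(i)$; summing over $i\in\mathcal{M}$ gives $h(\mathcal{M})<a(\mathcal{M})/(\varepsilon W)\le\varepsilon\cdot OPT$.

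I do not anticipate a real obstacle here: the entire argument is a standard shifting/pigeonhole scheme, and the only point requiring care is to chain the thresholds so that the medium-height intervals across different choices of $\delta$ are disjoint, which is precisely what the recursion $\delta_{k+1}:=f(\delta_{k})$ guarantees; any slower recursion with $\delta_{k+1}\le f(\delta_{k})$ would also work, but this choice gives the cleanest telescoping and the tightest count.
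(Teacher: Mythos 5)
Your proposal is correct and follows essentially the same approach as the paper's proof: iterate $f$ starting from $\varepsilon$ to obtain $K=2/\varepsilon^2$ thresholds, observe that the resulting medium-task height bands are pairwise disjoint, bound the total area by $OPT\cdot W$ via Proposition~\ref{prop:LB_OPT}, and pick the cheapest band by averaging, with the same width-based conversion from the area bound to the height bound. The only (cosmetic) difference is that you track the half-open interval conventions $(\delta_{k+1}OPT,\delta_k OPT]$ so that $\mathcal{M}$ is exactly one of your bands, whereas the paper's $I_j$ uses $[y_{j+1},y_j)$ and asserts $\mathcal{M}\subseteq I_{\overline{j}}$ somewhat loosely; your version is the slightly cleaner bookkeeping.
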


\begin{proof} Let $y_1 = \varepsilon$ and, for each $j \in \{1,\dots,|\Delta|\}$, define $y_{j+1} = f(y_j)$. For each $j \leq |\Delta|$, let $I_j = \{i \in \mathcal{I} : h(i) \in [y_{j+1}, y_j)\}$. Note that $y_j$'s are decreasing since $f(x) < x$. Observe that $I_{j'}$ is disjoint from $I_{j''}$ for every $j' \neq j''$, and the total area of tasks in $\bigcup I_j$ is at most $W \cdot OPT$. Thus, there exists a value $\overline{j}$ such that the total area of the tasks in $I_{\overline{j}}$ is at most $\frac{2 OPT\cdot W}{|\Delta|} = \varepsilon^2 \cdot OPT\cdot W$. Choosing $\delta = y_{\overline{j}}$ and $\mu = y_{\overline{j}+1}$ verifies all the conditions of the lemma as in that case $\mathcal{M}\subseteq I_{\overline{j}}$. Notice that, since every task in $\mathcal{M}$ has width at least $\varepsilon W$, we have that $h(\mathcal{M}) \le \varepsilon OPT$.
\end{proof}

Function $f$ will be given later. From now on, we will assume that $\mu$ and $\delta$ are chosen according to Lemma~\ref{lem:mediumrectanglesarea}. \fab{Notice that this implies that $\mu,\delta=O_\eps(1)$}. The rest of this section is organized as follows. In Section \ref{sec:containerTL} we define a container-based scheduling of $\cT\cup \cL$. In Section \ref{sec:containerH} we extend this in order to include also $\cH$. In Section \ref{sec:schedulingALL} we schedule the remaining tasks and prove Theorem \ref{thr:53}.

\subsection{Containers for Tall and Large Tasks}
\label{sec:containerTL}

In this section\kam{,} we define a packing of tall and large tasks into a constant number of guessable containers. This packing can be computed exactly, i.e. with no leftovers (in particular, we will not use Lemma \ref{lem:containersPackPTAS} to compute such packing). To that aim\kam{,} we will exploit the following structural result.
\begin{lemma}\label{lem:structural_5_3} Let $P(\cdot)$ be an optimal schedule of $\cI$ (hence with peak $OPT$). There exists a packing $P'(\cdot)$ with peak at most $\frac{5}{3}OPT$ satisfying that all the tall tasks are scheduled one after the other starting on the leftmost edge in non-increasing order of height. \end{lemma}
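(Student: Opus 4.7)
The plan is to construct $P'$ explicitly from $P$ in three stages. First, I would note that any two tall tasks cannot share any edge in a feasible schedule of peak $OPT$, since their combined height exceeds $\frac{4}{3}OPT > OPT$. Hence in $P$ the tall tasks appear in a left-to-right order $T_{\pi(1)}, \dots, T_{\pi(k)}$ with disjoint supports of total width $W_T := \sum_{j=1}^{k} w(T_{\pi(j)}) \le W$. This linear ordering is what permits a free rearrangement of the tall tasks into any desired order.

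Second, I would place the tall tasks at positions $[1, W_T]$ of $P'$ sorted by non-increasing height. The tall demand profile $q_T$ is then non-increasing on $[1, W_T]$, zero on $(W_T, W]$, with $\max q_T \le h(T_1) \le OPT$. I then classify each non-tall task $i$ by how its path $[l_i,r_i]$ in $P$ interacts with tall supports: \emph{inner} (contained in a single tall task's support), \emph{outer} (contained in a single gap between consecutive tall tasks), or \emph{straddling} (crossing at least one tall-task boundary). A straddling task coexists at some edge with a tall task of height $>\frac{2}{3}OPT$, so $h(i)<\frac{1}{3}OPT$. Inner tasks go into their associated tall task's new position via the uniform shift that moves that tall task; outer tasks move to the consolidated gap region $(W_T,W]$ of $P'$ via the shift that concatenates the gaps in their original order; straddling tasks are placed in $(W_T, W]$ whenever $w(i)\le W-W_T$, and otherwise anchored to the right end of the strip so as to extend into $[1, W_T]$.

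The peak bound follows from an edge-wise analysis. At $e \in [1, W_T]$ carrying tall task $T_j$ in $P'$, the tall + inner contribution equals $h(T_j)+(\text{non-tall under }T_j\text{ at the corresponding }P\text{ edge})\le OPT$, while outer tasks contribute zero. At $e \in (W_T,W]$, outer tasks contribute at most the non-tall demand at the corresponding gap edge of $P$ (which is $\le OPT$), while tall + inner vanish. Thus, excluding straddling tasks, the peak at every edge is at most $OPT$. Adding straddling yields peak at most $\frac{5}{3}OPT$ provided the cumulative straddling contribution at any single edge is at most $\frac{2}{3}OPT$.

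The main obstacle is controlling the straddling contribution. Since each straddling task has height $<\frac{1}{3}OPT$, the $\frac{2}{3}OPT$ bound holds as long as at most two straddling tasks stack at any edge. To guarantee this, I would process straddling tasks in decreasing width order, using the right-anchored placement for wide ones (which forces a nearly unique location) and balancing the narrower ones across $(W_T,W]$. The verification relies on an area argument: the slack $\frac{5}{3}OPT\cdot W - A_T$ above the $OPT$-capped tall + inner + outer contributions comfortably accommodates the total straddling area, and the uniform per-task height bound $<\frac{1}{3}OPT$ enables a local stacking argument that closes the peak inequality.
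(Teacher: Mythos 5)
Your decomposition into inner, outer, and straddling tasks essentially matches the paper's decomposition into valley, mountain, and crossing tasks, and your treatment of the inner and outer tasks --- shifting them rigidly together with the consolidated tall blocks and gaps --- reproduces the paper's construction and yields peak at most $OPT$ on those tasks. The genuine divergence, and the gap, is in the handling of the straddling (crossing) tasks. The paper's key observation is that these tasks need not be moved at all: keeping every crossing task at exactly its original edges in $P$ already gives a demand profile of peak at most $\frac{2}{3}OPT$. At any valley edge the co-located tall task uses more than $\frac{2}{3}OPT$ of the $OPT$ budget, so the crossing demand there is below $\frac{1}{3}OPT$; and at any mountain edge $e$, every crossing task using $e$ must, by path contiguity, also use the nearest valley edge to the left or to the right of $e$, so the crossing demand at $e$ is at most the sum of the crossing demands at those two valley edges, i.e., below $\frac{2}{3}OPT$. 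Overlaying this unmodified crossing profile on the reordered valley/mountain schedule immediately gives peak at most $\frac{5}{3}OPT$, with no repacking of crossing tasks required.

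Your plan relocates the straddling tasks instead, and the proposed mechanism does not control their peak. Take the extreme case $W_T = W$ (the tall tasks tile the whole strip): then $W - W_T = 0$, so every straddling task has $w(i) > W - W_T$ and your rule right-anchors all of them at edge $W$. With $k$ tall tasks there can be up to $k-1$ straddling tasks, each of height close to $\frac{1}{3}OPT$ and crossing a distinct tall-task boundary in $P$; right-anchoring stacks them all on edge $W$ with demand on the order of $\frac{k-1}{3}OPT$, on top of the near-$OPT$ inner contribution already present there. The claim that ``at most two straddling tasks stack at any edge'' is not established by processing in width order nor by the area-slack argument: a sufficient area budget is necessary but far from sufficient for a low peak, and your placement rule can force all the straddling area onto a single edge. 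This step is a real gap. The paper's cure is precisely to leave the crossing tasks where they already are, which makes the $\frac{2}{3}OPT$ bound for them essentially free.
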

\begin{proof} 
Let $\cT$ be the tall tasks (having height larger than $\frac{2}{3}OPT$). Notice that the paths of these tasks in $P(\cdot)$ need to be edge disjoint. Let us classify the edges into \emph{valley edges} if some task in $\mathcal{T}$ uses that edge in $P(\cdot)$ and \emph{mountain edges} otherwise (see also Figure~\ref{fig:structural_5_3A}). We let $\cI_{mnt}$ be the (\emph{mountain}) tasks whose path in $P(\cdot)$ consists solely of mountain edges, $\cI_{vll}$ be the (\emph{valley}) tasks whose path in $P(\cdot)$ consists solely of valley edges (notice that this set includes $\cT$), and $\cI_{crs}:=\cI\setminus (\cI_{vll}\cup \cI_{mnt})$ the remaining (\emph{crossing}) tasks.

We next define a modified partial schedule $P'(\cdot)$ of $\cI_{vll}\cup \cI_{mnt}$ as follows. Let us reorder the edges of the path (and the tasks accordingly) so that valley edges appear to the left and mountain edges appear to the right in the path (maintaining their relative order). Furthermore\kam{,} we rearrange the valley edges so that \fab{tasks in} $\mathcal{T}$ \fab{are scheduled from left to right in non-increasing order of height}.
Observe that by construction $\cI_{mnt}$ are scheduled on $W-w(\mathcal{T})$ edges (i.e. the total number of mountain edges). Since we temporarily removed crossing tasks, this induces a feasible packing of $\cI_{vll}\cup \cI_{mnt}$. 
The resulting packing $P'(\cdot)$ clearly has \kam{a} peak \kam{of} at most $OPT$.

Consider next the schedule $P(\cdot)$ \emph{restricted to} $\cI_{crs}$. We claim that this schedule has peak at most $\frac{2}{3}OPT$.  
%
%
%
%
Indeed, notice first that the demand of valley edges is at most $\frac{1}{3}OPT$. 
Consider next a mountain edge $e$. Let $e_{\ell}$ be the rightmost valley edge to the left of $e$ (if any), and define $e_r$ symmetrically to the right of $e$. Any task in $\cI_{crs}$ using $e$ must also use $e_{\ell}$ or $e_{r}$ (or both). Hence the total demand on $e$ is at most the total demand on $e_{\ell}$ plus the total demand on $e_{r}$, thus at most $\frac{2}{3}OPT$. The claim follows by combining the schedule of $\cI_{crs}$ (taken from $P(\cdot)$) with the above schedule $P'(\cdot)$ of $\cI_{vll}\cup \cI_{mnt}$ (see also Figure~\ref{fig:structural_5_3B}). 
\end{proof}

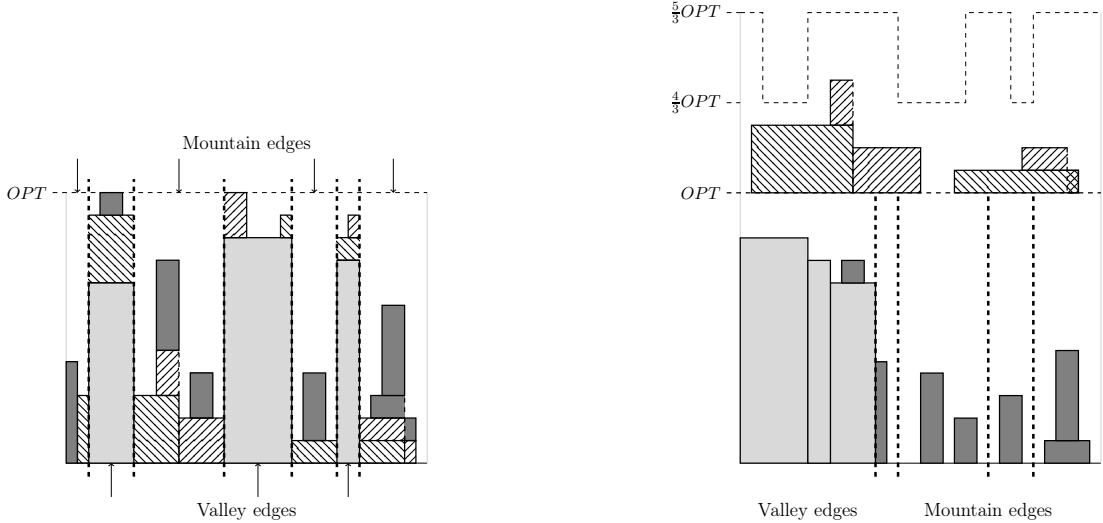
\begin{figure}
	\centering
	\begin{subfigure}{0.47\textwidth}
	\centering
	\resizebox{0.72\textwidth}{!}{\begin{tikzpicture}
	\draw (-0.3,10) node[anchor=east] {\color{white} $\frac{5}{3}OPT$};
	\draw (0,0) -- (8,0);
	\draw[lightgray!70] (0,0) -- (0,6);
	\draw[lightgray!70] (8,0) -- (8,6);
	
	
	\draw[thick, fill=lightgray!60] (0.5,0) rectangle (1.5,4);
	\draw[thick, fill=lightgray!60] (3.5,0) rectangle (5,5);
	\draw[thick, fill=lightgray!60] (6,0) rectangle (6.5,4.5);
	
	
	\draw[thick, fill=gray] (2,2.5) rectangle (2.5,4.5);
	\draw[thick, fill=gray] (2.75,1) rectangle (3.25,2);
	\draw[thick, fill=gray] (5.25,0.5) rectangle (5.75,2);
	\fill[gray] (6.75,1) rectangle (7.5,1.5);
	\fill[gray] (7.5,0.5) rectangle (7.75,1);
	\draw[thick, dashed] (7.5,0.5) -- (7.5,1.5);
	\draw[thick] (7.5,0.5) -- (7.75,0.5) -- (7.75,1) -- (7.5,1);
	\draw[thick] (7.5,1) -- (6.75,1) -- (6.75,1.5) -- (7.5,1.5);
	\draw[thick, fill=gray] (7,1.5) rectangle (7.5,3.5);
	\draw[thick, fill=gray] (0,0) rectangle (0.25,2.25);
	
	
	\fill[pattern=north west lines] (0.25,0) rectangle (0.5,1.5);
	\fill[pattern=north west lines] (0.5,4) rectangle (1.5,5.5);
	\fill[pattern=north west lines] (1.5,0) rectangle (2.5,1.5);
	\draw[thick] (0.5,0) -- (0.25,0) -- (0.25,1.5) -- (0.5,1.5);
	\draw[thick] (0.5,5.5) -- (1.5,5.5);
	\draw[thick] (0.5,4) -- (1.5,4);
	\draw[thick] (1.5,0) -- (2.5,0) -- (2.5,1.5) -- (1.5,1.5);
	\fill[pattern=north east lines] (2.5,0) rectangle (3.5,1);
	\fill[pattern=north east lines] (2,1.5) rectangle (2.5,2.5);
	\fill[pattern=north east lines] (3.5,5) rectangle (4,6);
	\draw[thick] (2.5,0) -- (3.5,0);
	\draw[thick] (3.5,1) -- (2.5,1);
	\draw[thick] (3.5,5) -- (4,5) -- (4,6) -- (3.5,6);
	\draw[thick] (2.5,1.5) -- (2,1.5) -- (2,2.5) -- (2.5,2.5);
	\draw[thick, dashed] (2.5,0) -- (2.5,1);
	\draw[thick, dashed] (2.5,1.5) -- (2.5,2.5);
	\fill[pattern=north west lines] (4.75,5) rectangle (5,5.5);
	\fill[pattern=north west lines] (5,0) rectangle (6,0.5);
	\fill[pattern=north west lines] (6,4.5) rectangle (6.5,5);
	\fill[pattern=north west lines] (6.5,0) rectangle (7.5,0.5);
	\draw[thick] (5,5) -- (4.75,5) -- (4.75,5.5) -- (5,5.5);
	\draw[thick] (5,0) -- (6,0);
	\draw[thick] (5,0.5) -- (6,0.5); 
	\draw[thick] (6,4.5) -- (6.5,4.5);
	\draw[thick] (6,5) -- (6.5,5);
	\draw[thick] (6.5,0) -- (7.5,0) -- (7.5,0.5) -- (6.5,0.5);
	\draw[thick, fill=gray] (0.75,5.5) rectangle (1.25,6);
	\fill[pattern=north east lines] (6.25,5) rectangle (6.5,5.5);
	\fill[pattern=north east lines] (6.5,0.5) rectangle (7.5,1);
	\fill[pattern=north east lines] (7.5,0) rectangle (7.75,0.5);
	\draw[thick] (6.5,5) -- (6.25,5) -- (6.25,5.5) -- (6.5,5.5);
	\draw[thick] (6.5,1) -- (7.5,1);
	\draw[thick,dashed] (7.5,0.5) -- (7.5,1);
	\draw[thick] (7.5,0) -- (7.75,0) -- (7.75,0.5) -- (7.5,0.5);
	
	\draw[ultra thick, dashed] (0.5,-0.3) -- (0.5,6.3);
	\draw[ultra thick, dashed] (1.5,-0.3) -- (1.5,6.3);
	\draw[ultra thick, dashed] (3.5,-0.3) -- (3.5,6.3);
	\draw[ultra thick, dashed] (5,-0.3) -- (5,6.3);
	\draw[ultra thick, dashed] (6,-0.3) -- (6,6.3);
	\draw[ultra thick, dashed] (6.5,-0.3) -- (6.5,6.3);
	
	\draw[->] (0.25,6.75) -- (0.25,6);
	\draw[->] (2.5,6.75) -- (2.5,6);
	\draw[->] (5.5,6.75) -- (5.5,6);
	\draw[->] (7.25,6.75) -- (7.25,6);
	\draw (4,6.75) node[anchor=south] {\large Mountain edges};
	\draw[->] (1,-0.75) -- (1,0);
	\draw[->] (4.25,-0.75) -- (4.25,0);
	\draw[->] (6.25,-0.75) -- (6.25,0);
	\draw (4,-0.75) node[anchor=north] {\large Valley edges};
	\draw[dashed] (-0.3,6) -- (8,6);
	\draw (-0.3,6) node[anchor=east] {$OPT$};
	\end{tikzpicture}}
\caption{A scheduling of peak $OPT$. Light gray rectangles correspond to tasks in $\mathcal{T}$, dark gray ones represent mountain and valley tasks, and dashed ones the crossing tasks.}\label{fig:structural_5_3A}
	\end{subfigure}
	\qquad
	\begin{subfigure}{0.47\textwidth}
	\centering
	\resizebox{0.72\textwidth}{!}{\begin{tikzpicture}
	\draw (0,0) -- (8,0);
	\draw[lightgray!70] (0,0) -- (0,10);
	\draw[lightgray!70] (8,0) -- (8,10);
	
	
	\draw[thick, fill=lightgray!60] (2,0) rectangle (3,4);
	\draw[thick, fill=lightgray!60] (0,0) rectangle (1.5,5);
	\draw[thick, fill=lightgray!60] (1.5,0) rectangle (2,4.5);
	
	
	\draw[thick, fill=gray] (4,0) rectangle (4.5,2);
	\draw[thick, fill=gray] (4.75,0) rectangle (5.25,1);
	\draw[thick, fill=gray] (5.75,0) rectangle (6.25,1.5);
	\draw[thick, fill=gray] (6.75,0) rectangle (7.75,0.5);
	\draw[thick, fill=gray] (7,0.5) rectangle (7.5,2.5);
	\draw[thick, fill=gray] (3,0) rectangle (3.25,2.25);
	
	
	\draw[thick, pattern=north west lines] (0.25,6) rectangle (2.5,7.5);
	\fill[pattern=north east lines] (2.5,6) rectangle (4,7);
	\fill[pattern=north east lines] (2,7.5) rectangle (2.5,8.5);
	\draw[thick] (2.5,6) -- (4,6) -- (4,7) -- (2.5,7);
	\draw[thick] (2.5,7.5) -- (2,7.5) -- (2,8.5) -- (2.5,8.5);
	\draw[thick, dashed] (2.5,6) -- (2.5,7);
	\draw[thick, dashed] (2.5,7.5) -- (2.5,8.5);
	\draw[thick, pattern=north west lines] (4.75,6) rectangle (7.5,6.5);
	\draw[thick, fill=gray] (2.25,4) rectangle (2.75,4.5);
	\fill[pattern=north east lines] (6.25,6.5) rectangle (7.25,7);
	\fill[pattern=north east lines] (7.5,6) rectangle (7.25,6.5);
	\draw[thick, dashed] (7.25,7) -- (7.25,6);
	\draw[thick] (7.25,6.5) -- (6.25,6.5) -- (6.25,7) -- (7.25,7);
	\draw[thick] (7.25,6) -- (7.5,6) -- (7.5,6.5) -- (7.25,6.5);
	
	\draw[ultra thick, dashed] (3,-0.3) -- (3,6);
	\draw[ultra thick, dashed] (3.5,-0.3) -- (3.5,6);
	\draw[ultra thick, dashed] (5.5,-0.3) -- (5.5,6);
	\draw[ultra thick, dashed] (6.5,-0.3) -- (6.5,6);
	
	\draw (5.5,-0.75) node[anchor=north] {\large Mountain edges};
	\draw (1.5,-0.75) node[anchor=north] {\large Valley edges};
	\draw[dashed, thick] (-0.3,6) -- (8,6);
	\draw (-0.3,6) node[anchor=east] {$OPT$};
	\draw[dashed] (0,10) -- (0.5,10) -- (0.5,8) -- (1.5,8) -- (1.5,10) -- (3.5,10) -- (3.5,8) -- (5,8) -- (5,10) -- (6,10) -- (6,8) -- (6.5,8) -- (6.5,10) -- (8,10);
	\draw[dashed] (-0.3,8) -- (0,8);
	\draw (-0.3,8) node[anchor=east] {$\frac{4}{3}OPT$};
	\draw[dashed] (-0.3,10) -- (0,10);
	\draw (-0.3,10) node[anchor=east] {$\frac{5}{3}OPT$};
	\end{tikzpicture}}
\caption{Structured solution having peak at most $\frac{5}{3}OPT$, where tasks in $\mathcal{T}$ are placed one next to the other, starting at the leftmost edge and sorted non-increasingly by height.}\label{fig:structural_5_3B}
\end{subfigure}
	\caption{Depiction of the proof of Lemma~\ref{lem:structural_5_3}.}\label{fig:structural_5_3}
\end{figure}

%

We will next assume that tall tasks are scheduled as in the above lemma. By increasing the peak by $\eps OPT$ (up to $(\frac{5}{3}+\eps)OPT$), one can define a set of $O_{\eps}(1)$ (tall) containers where such tasks can be packed (respecting the mentioned order and with no leftovers). Consider the demand profile of tall tasks in the considered schedule, and round it to the next multiple of $\eps OPT$. Consider the tasks $\cT_k$ corresponding to the value $k\cdot \eps OPT$ in the rounded profile. Notice that these tasks are scheduled consecutively along some path $P_k$. We create a vertical container $C_k$ of height $k\cdot \eps OPT$ and width $|E(P_k)|$, pack $\cT_k$ into $C_k$, and schedule $C_k$ on $P_k$. Clearly\kam{,} we need to create at most $1/\eps$ containers. Notice also that the dimensions of these containers can be guessed in polynomial time since there is a constant number of options for the height, and the widths correspond to the total width of a subsequence of tall tasks in the considered ordering by non-increasing height (breaking ties arbitrarily).

It remains to consider large tasks. Since they are at most $\frac{1}{\eps \delta}=O_{\eps}(1)$ many, it is sufficient to define a distinct (large) container for each one of them and pack the large tasks accordingly. We schedule the large containers exactly as in the solution guaranteed by Lemma \ref{lem:structural_5_3}. Clearly tall and large containers can be scheduled together with \kam{a} peak \kam{of} at most $(\frac{5}{3}+\eps)OPT$ by the above construction. 


\subsection{Containers for Horizontal Tasks}
\label{sec:containerH}

In this section\kam{,} we define a packing of horizontal tasks into a constant number of guessable containers. These containers can be scheduled together with the tall and large containers with small enough peak\kam{s}. This will induce a convenient schedule of non-narrow tasks. 

Let us focus on the schedule of tall and large containers with \kam{a} peak \kam{of} at most $(\frac{5}{3}+\eps)OPT$ from the previous section. Consider now the demand profile of such container schedule. Since the demand profile of tall containers has at most $1/\varepsilon=O_{\eps}(1)$ jumps, and the demand profile of large containers has at most $2/(\varepsilon \delta)=O_{\eps}(1)$ jumps,\fabr{I added a remark before that $\mu,\delta$ are $O_{\eps}(1)$, check!} then the overall demand profile has $O_{\eps}(1)$ jumps. 

Assume next that horizontal tasks are scheduled as in Lemma \ref{lem:structural_5_3}: notice that such tasks can be scheduled with the tall and large containers without increasing the peak. This implies that the demand profile of horizontal tasks is upper bounded (on each coordinate) by the difference between $(5/3+\varepsilon)OPT$ and the demand profile of tall and large containers (see Figure~\ref{fig:lemma11}). Under these conditions, it is possible to build containers for horizontal tasks, using the standard linear grouping technique, as the following lemma shows. 

\begin{figure}
\centering
	\resizebox{0.47\textwidth}{!}{\begin{tikzpicture} 
	\draw (6.3,10) node[anchor=west] {\textcolor{white}{\Large $\left(\frac{5}{3}+\varepsilon\right)OPT$}};
	\draw (0,0) -- (6,0);
	\draw[lightgray!60] (0,0) -- (0,10);
	\draw[lightgray!60] (6,0) -- (6,10);
	
	\draw[fill=gray!50] (0,0) rectangle (1,5.75);
	\draw[fill=gray!40] (1,0) rectangle (1.5,5.25);
	\draw[fill=gray!30] (1.5,0) rectangle (2.5,4.75);
	\draw[fill=gray!20] (2.5,0) rectangle (3,4.25);
	
	\fill[pattern=north east lines, pattern color=gray!90] (0,5.75) rectangle (0.5,7.5);
	\fill[pattern=north east lines, pattern color=gray!90] (0.5,5.75) rectangle (1,7);
	\fill[pattern=north east lines, pattern color=gray!90] (1,5.25) rectangle (1.5,7);
	\fill[pattern=north east lines, pattern color=gray!90] (1.5,4.75) rectangle (2,8);
	\fill[pattern=north east lines, pattern color=gray!90] (2,4.75) rectangle (2.5,7.5);
	\fill[pattern=north east lines, pattern color=gray!90] (2.5,4.25) rectangle (3,6);
	\fill[pattern=north east lines, pattern color=gray!90] (3,0) rectangle (3.5,2.5);
	\fill[pattern=north east lines, pattern color=gray!90] (3.5,0) rectangle (4.5,1.5);
	\fill[pattern=north east lines, pattern color=gray!90] (4.5,0) rectangle (5.5,1);
	\fill[pattern=north east lines, pattern color=gray!90] (5.5,0) rectangle (6,3);
	
	\fill[pattern=horizontal lines, pattern color=gray!30] (0,8.25) rectangle (0.5,10);
	\fill[pattern=horizontal lines, pattern color=gray!30] (0.5,7.75) rectangle (0.75,10);
	\fill[pattern=horizontal lines, pattern color=gray!30] (0.75,7.5) rectangle (1.25,10);
	\fill[pattern=horizontal lines, pattern color=gray!30] (1.25,9.25) rectangle (2,10);
	\fill[pattern=horizontal lines, pattern color=gray!30] (2,8.25) rectangle (2.25,10);
	\fill[pattern=horizontal lines, pattern color=gray!30] (2.25,9) rectangle (2.75,10);
	\fill[pattern=horizontal lines, pattern color=gray!30] (2.75,7) rectangle (3.5,10);
	\fill[pattern=horizontal lines, pattern color=gray!30] (3.5,5) rectangle (4,10);
	\fill[pattern=horizontal lines, pattern color=gray!30] (4,2.75) rectangle (4.75,10);
	\fill[pattern=horizontal lines, pattern color=gray!30] (4.75,4) rectangle (5.25,10);
	\fill[pattern=horizontal lines, pattern color=gray!30] (5.25,6.5) rectangle (6,10);
	
	\draw[ultra thick] (0,5.75) -- (1,5.75) -- (1,5.25) -- (1.5,5.25) -- (1.5,4.75) -- (2.5,4.75) -- (2.5,4.25) -- (3,4.25) -- (3,0) -- (0,0) -- (0,5.75);
	
	\draw[ultra thick] (0,5.75) -- (0,7.5) -- (0.5,7.5) -- (0.5,7) -- (1.5,7) -- (1.5,8) -- (2,8) -- (2,7.5) -- (2.5,7.5) -- (2.5,6) -- (3,6) -- (3,0);
	\draw[ultra thick] (3,2.5) -- (3.5,2.5) -- (3.5,1.5) -- (4.5,1.5) -- (4.5,1) -- (5.5,1) -- (5.5,3) -- (6,3) -- (6,0) -- (3,0) -- (3,2.5);
	
	\draw[ultra thick] (6,6.5) -- (6,10) -- (0,10) -- (0,8.25) -- (0.5,8.25) -- (0.5,7.75) -- (0.75,7.75) -- (0.75,7.5) -- (1.25,7.5) -- (1.25,9.25) -- (2,9.25) -- (2,8.25) -- (2.25,8.25) -- (2.25,9) -- (2.75,9) -- (2.75,7) -- (3.5,7) -- (3.5,5) -- (4,5) -- (4,2.75) -- (4.75,2.75) -- (4.75,4) -- (5.25,4) -- (5.25,6.5) -- (6,6.5);
	
	\draw[dashed] (-0.3,10) -- (0,10);
	\draw (-0.3,10) node[anchor=east] {\Large $\left(\frac{5}{3}+\varepsilon\right)OPT$};
	
	\draw (3,9.625) node {\large \textbf{Horizontal}};
	\draw (1.41,3) node {\large \textbf{Tall}};
	\draw (1.375,6.375) node {\large \textbf{Large}};
	\draw (4.5,0.625) node {\large \textbf{Large}};
	\end{tikzpicture}}
\caption{The demand profile of tall and large tasks in the schedule obtained from Lemma~\ref{lem:structural_5_3} has $O_{\varepsilon}(1)$ jumps, bounding the profile of (sliced) horizontal tasks (on top).
}\label{fig:lemma11}
\end{figure}
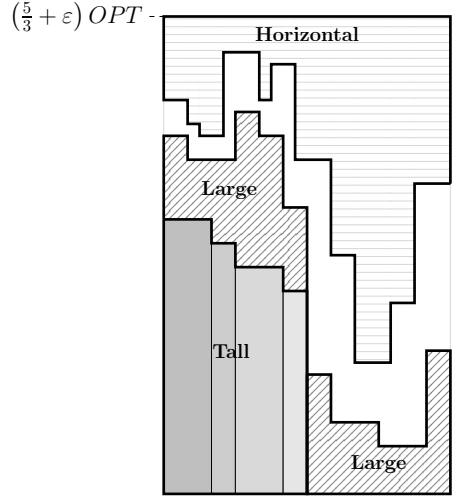

\begin{lemma}\label{lem:container-horizontal} Suppose there exists a schedule of $\cH$ 
such that its demand profile is upper bounded (vectorially) by a demand profile $D$ with $O_{\varepsilon}(1)$ jumps. Then there exists a container packing for $\mathcal{H}$ into $O_{\eps}(1)$ horizontal guessable containers with demand profile upper bounded by $D$ plus $4\varepsilon OPT$ on each coordinate.
\end{lemma}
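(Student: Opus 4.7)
The plan is to apply a Kenyon--Rémila-style linear grouping argument adapted to the piecewise-constant ceiling $D$. I would first partition the edge set into the $K=O_\eps(1)$ maximal subintervals $J_1,\ldots,J_K$ on which $D$ takes constant values $D_1,\ldots,D_K$. A horizontal task $i\in\cH$ is \emph{aligned} with $J_k$ if its scheduled path lies entirely inside $J_k$, and \emph{crossing} otherwise; the number of crossing tasks active at any single edge is at most $K-1=O_\eps(1)$.

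For each $J_k$, I would sort the aligned tasks by non-increasing width and partition them into $\lceil 1/\eps\rceil$ consecutive groups of total height at most $\eps D_k\le\eps OPT$. The first group is discarded (to be accounted for separately); in each remaining group I round every width up to the group's maximum width. This yields $O(1/\eps)$ distinct rounded widths per interval, and each rounded-width class in each interval becomes a single horizontal container whose width equals the class width and whose height equals the total height of the class's tasks. The crossing tasks are packed into one additional container of width $W$: since each has height at most $\mu OPT$ and $\mu=f(\delta)$ can be chosen arbitrarily small (say $\mu\le\eps/K$), this extra container has height at most $\eps OPT$. The total number of containers is thus $K\cdot O(1/\eps)+1=O_\eps(1)$, and their widths and heights lie in polynomially bounded guessable sets (widths among original horizontal task widths or jump positions of $D$, heights on a polynomial-granularity grid of $OPT$).

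The profile of these containers exceeds $D$ by at most $4\eps OPT$, accounted for as follows: one $\eps OPT$ for the crossing-tasks container, one for the union of discarded first groups across the $K$ intervals (each with height $\le\eps OPT/K$ after tightening the grouping threshold), one for the width-rounding slack, and one for rounding container heights to the guessable grid. The main obstacle is showing that widening the widths within each aligned group does not push the local profile above $D_k+O(\eps OPT)$; this is the standard Kenyon--Rémila staircase argument: by sortedness the aligned tasks of each group fit under a staircase bounded by $D_k$, and widening them to the group's maximum width substitutes that staircase by a rectangle of the same total height, with the extra mass lying above the original staircase but within $D_k+\eps OPT/K$, summing to $\eps OPT$ over the groups of $J_k$ and thus fitting in the overall $4\eps OPT$ slack.
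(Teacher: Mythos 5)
Your aligned/crossing decomposition and per-interval linear grouping has two genuine gaps. First, the claim that ``the number of crossing tasks active at any single edge is at most $K-1$'' is false: arbitrarily many horizontal tasks can each span the same jump of $D$ and cover a common edge (only their \emph{total demand} there is bounded by $D$), and since each has height only $\leq \mu\cdot OPT$ there can be $\Theta(D(e)/(\mu OPT))$ of them at a single edge. The total height of crossing tasks can therefore be as large as $\Theta(OPT/\varepsilon)$ (total area $W\cdot OPT$ divided by minimum width $\varepsilon W$), so they cannot fit into a single width-$W$ container of height $\varepsilon OPT$; the single extra crossing container simply does not exist. Second, the Kenyon--R\'emila staircase substitution you invoke for the aligned tasks of $J_k$ requires them to form a left-aligned, width-sorted pile, but nothing in the hypothesis (nor in your construction) gives this. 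The aligned tasks are scheduled at arbitrary positions within $J_k$ with total height possibly $\approx D_k/\varepsilon$; after your rounding, one container per rounded-width class stacks every task of that class, and the container schedule you need to exhibit would peak at $\Theta(D_k/\varepsilon)$ at the left end of $J_k$. (Minor: with group height $\varepsilon D_k$ you need $\Theta(1/\varepsilon^2)$ groups per interval, not $\lceil 1/\varepsilon\rceil$.)

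The paper avoids both issues by a different decomposition. It horizontally \emph{slices} each task into unit-height slices (so there is no aligned/crossing distinction to manage), performs a single global linear grouping on all slices sorted by width (discarding $O(\varepsilon OPT)$ height), and then left-shifts slices against the ceiling $D$. The crucial observation is that after left-shifting, every slice starts either at an edge where $D$ jumps up or immediately after some other slice's ending edge; since there are $O_\varepsilon(1)$ distinct rounded widths and any starting position is the sum of at most $1/\varepsilon$ of them, the set of possible starting edges has size $O_\varepsilon(1)$. Containers are indexed by (starting edge, rounded width), scheduled at those starting edges, and the resulting demand profile is bounded by $D$ by construction; a final conversion from slice containers back to task containers pays another $O(\varepsilon OPT)$ for tasks whose slices straddle two containers. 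To rescue your plan you would need both to slice (so that left-pushing yields a true staircase) and to replace the per-interval crossing bookkeeping with such a global positional argument.
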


\begin{proof}

Let us assume by now that horizontal tasks are \emph{horizontally sliced}, meaning that each task $i$, having height $h(i)$ and width $w(i)$, is replaced by $h(i)$ sibling slices, which are tasks of width $w(i)$ and height $1$. The schedule of the slices is the same as for the corresponding task. In order to reduce the possible number of distinct slice widths to a constant\kam{,} we will use the technique of linear grouping while increasing the final peak by at most $\wal{2}\varepsilon OPT$. We start by considering all the slices in a pile, 
one on top of the other and sorted non-increasingly by width from bottom to top (and putting sibling slices consecutively). Since these slices have \kam{a} width at least $\varepsilon W$ and total area at most $OPT\cdot W$, the pile has total height at most $\frac{1}{\varepsilon} OPT$. Starting from the bottom, we partition the pile into groups $G_1,\ldots,G_q$ of height exactly $\varepsilon OPT$ (except possibly for $G_q$ which may have \kam{a} smaller height). We remove from the solution the slices in $G_1$ and any slice in $G_2$ which used to have a sibling slice in $G_1$, and temporarily remove the corresponding tasks. Observe that we are removing the tasks whose slices are fully contained in $G_1$ plus at most one extra task. In particular, the total height of the removed tasks is at most $(\varepsilon + \mu)OPT \le 2\varepsilon OPT$ (here we use $\mu\leq \eps$).

Next\kam{,} we round up the widths of the remaining slices as follows: for $i=2,\ldots,q$, the width of slices (still) in $G_i$ are rounded to the smallest width of any slice originally in $G_{i-1}$. We call this set of slices the rounded slices, and next focus on packing them. Notice that rounded slices have at most $1/\varepsilon^2$ distinct widths. This also induces a matching between each rounded slice $a$ and a distinct original slice $b$, so that $w(b)\geq w(a)$. In particular, we can schedule each such $a$ starting on the first edge of $P(b)$ without increasing the overall peak.


Now we left-shift the horizontal slices in the solution as much as possible while still \fab{obtaining a schedule whose demand profile is} upper bounded by $D$. \fab{Let $e$ be the starting edge of some slice $S$ at the end of the process. Notice that one of the following cases holds: (1) $D$ increases on edge $e$ (including as a special case when $e$ is the leftmost edge of $G$) or (2) the edge $f$ to the left of $e$ is the ending edge of some other slice $S'$. Indeed otherwise it would be possible to left-shift $S$ while respecting all the constraints. This implies that the possible positions for the starting edge of any slice can be obtained by considering the $O_\eps(1)$ edges where $D$ increases and then adding the total width of a few slices. Notice that there are at most $1/\eps^2$ such widths, and we can sum up at most $1/\eps$ of them (since horizontal slices have width\kam{s} at least $\eps W$).} 
%
\fab{Altogether}, the number of possible starting edges for the slices is $O_{\eps}(1)$.

Consider the leftmost possible such edge $e$ and all the rounded slices $S_e$ starting on $e$ in this left-shifted schedule. We partition $S_e$ by width $w$, and for each such width $w$ and corresponding set of slices $S_{e,w}$, we construct a horizontal container of width $w$ and height $h(S_{e,w})$ where we pack $S_{e,w}$. We repeat this procedure for each possible starting edge $e$, obtaining in the end $K\le O_{\eps}(1)$ containers in total where we packed all the rounded slices. Notice that the width of each container is the width of some rounded slice, which in turn is the width of some task. Hence the widths of the containers are guessable in the usual sense.

We next turn the above packing of rounded slices into a feasible packing of tasks (into the same containers). First of all, we repack the rounded slices as follows. We consider all the slices $S_w$ of a given width $w$ in any order where sibling slices appear consecutively, and all the containers $C_w$ of that width in any order.  We pack each slice $s\in S_w$ in the first container $C\in C_w$ where $s$ still fits. Notice that $h(S_w)=h(C_w)$ by construction, hence we repack all rounded slices this way.  
Next we consider the tasks $i$ whose slices are all contained in the \emph{same} container $C$, and pack $i$ into $C$. By construction this packing is feasible. We add the tasks which are not packed this way to the set of removed tasks defined earlier. 
We round up the heights of the containers to the next multiple of $\frac{\varepsilon}{K}OPT$, hence making such heights guessable. This way the peak increases at most by $\varepsilon OPT$.

Consider the set of removed tasks. Recall that the tasks removed in the initial rounding phase have total height at most $2\eps OPT$. In the following packing phase\kam{,} we remove at most one task per container, hence these removed tasks have height at most $K\cdot \mu\cdot OPT \le \varepsilon OPT$. \fab{Here we assume that $\mu\leq \eps/K$: this can be achieved by choosing $f(x)=x/K$ in Lemma ~\ref{lem:mediumrectanglesarea}.}\fabr{Check my f}
Hence the removed tasks altogether have total height at most $3\varepsilon OPT$: we pack these tasks in one extra (guessable) horizontal container of width $W$ and height $3\varepsilon OPT$. 
\end{proof}

From the above construction it is possible to derive a schedule of non-narrow tasks of small enough peak.
\begin{lemma}\label{lem:packing_notnarrow} 
It is possible to compute in polynomial-time a feasible schedule of $\mathcal{I}\setminus\mathcal{N}$ with peak at most $\left(\frac{5}{3}+7\varepsilon\right)OPT$. 
\end{lemma}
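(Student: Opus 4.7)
My plan is to assemble a schedule by placing four groups of containers in sequence---tall, large, horizontal, medium---each contributing only an $O(\eps)OPT$ additive overhead on top of the $(5/3)OPT$ baseline from Lemma~\ref{lem:structural_5_3}.

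First, I would take the structural schedule of Lemma~\ref{lem:structural_5_3}, of peak at most $(5/3)OPT$, in which the tall tasks occupy a prefix of the path ordered non-increasingly by height. Applying Section~\ref{sec:containerTL} verbatim turns $\cT$ into at most $1/\eps$ guessable vertical containers (by rounding its piecewise-constant demand profile up to multiples of $\eps OPT$) and makes each task of $\cL$ its own container; this is $O_\eps(1)$ containers in total, scheduling together with peak at most $(5/3+\eps)OPT$ and inducing a combined demand profile with only $O_\eps(1)$ jumps.

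Next, I would invoke Lemma~\ref{lem:container-horizontal} on the residual profile $D := (5/3+\eps)OPT - D_{\mathrm{TL}}$, where $D_{\mathrm{TL}}$ is the profile of the tall$+$large containers just built. Since $D_{\mathrm{TL}}$ pointwise dominates the tall$+$large contribution in the structural schedule, the horizontal profile inherited from that schedule is pointwise bounded by $D$; and $D$ inherits $O_\eps(1)$ jumps. Lemma~\ref{lem:container-horizontal} then produces $O_\eps(1)$ guessable horizontal containers for $\cH$ whose profile is pointwise bounded by $D+4\eps OPT$, giving a combined peak of at most $(5/3+5\eps)OPT$. Medium tasks are essentially free: each has width $>\eps W$ and height $>\mu OPT$, so $\abs{\cM}\le 1/(\eps\mu)=O_\eps(1)$ and $h(\cM)\le \eps OPT$ by Lemma~\ref{lem:mediumrectanglesarea}. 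I would make each medium task its own vertical container and schedule them all starting at the leftmost edge, raising the peak by at most $\eps OPT$ to $(5/3+6\eps)OPT$.

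Algorithmically, I would guess the $O_\eps(1)$ container dimensions and positions in polynomial time and compute a left-pushing of this constant-sized container instance by brute force. Tall, large, and medium containers admit a trivial exact population with their respective tasks. For $\cH$ I would invoke the GAP-based PTAS of Lemma~\ref{lem:containersPackPTAS} with parameter $\eps'=\eps^2$ and then accommodate the at-most $\eps^2\cdot a(\cH)\le \eps^2\cdot OPT\cdot W$ area of leftover horizontal tasks in one additional guessable horizontal container of width $W$ and height $\eps OPT$; this height bound holds because every horizontal task has width $\ge \eps W$, so the sum of leftover heights is at most $\eps^2 OPT\cdot W/(\eps W)=\eps OPT$. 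This last step is the main obstacle: because Lemma~\ref{lem:container-horizontal} is only an existence statement, we cannot directly recover its specific assignment of $\cH$ to containers and must rely on the lossy GAP PTAS; the saving observation is that horizontal tasks are simultaneously wide and short, which is exactly what lets the PTAS leftover fit into a single thin wide extra container and yields the final $(5/3+7\eps)OPT$ peak bound.
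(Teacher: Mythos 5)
Your proposal follows the paper's own proof essentially step for step: Lemma~\ref{lem:structural_5_3} plus the constructions of Sections~\ref{sec:containerTL} and \ref{sec:containerH} to obtain $O_\eps(1)$ guessable containers for $\cT$, $\cL$, $\cH$ at peak $(\frac{5}{3}+5\eps)OPT$, medium tasks absorbed for $+\eps OPT$ via Lemma~\ref{lem:mediumrectanglesarea}, and the GAP-based PTAS of Lemma~\ref{lem:containersPackPTAS} with $\eps'=\eps^2$ whose leftover (total height $\leq\eps OPT$ since horizontal tasks have width $\geq\eps W$) goes into one extra width-$W$ horizontal container for a final $+\eps OPT$. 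The only cosmetic difference is that you place the medium tasks as $O_\eps(1)$ singleton vertical containers at the leftmost edge while the paper uses a single horizontal container of width $W$ and height $\eps OPT$; these induce the same demand profile and the same $+\eps OPT$ overhead.
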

\begin{proof}
Consider the guessable containers for tall, large\kam{,} and horizontal tasks and the corresponding schedule as described before. This schedule has \kam{a} peak \kam{of} at most $(\frac{5}{3}+\eps)OPT+4\eps OPT$. By Lemma \ref{lem:mediumrectanglesarea} the medium tasks fit into \kam{a}\kamr{changed ``an'' to ``a''} horizontal container of width $W$ and height $\eps OPT$. Altogether this leads to a packing of $\cL\cup \cT\cup \cH \cup \cM=\cI\setminus \cN$ into $O_\eps(1)$ guessable containers that can be scheduled with peak at most $(\frac{5}{3}+6\eps)OPT$. 

It is easy to pack $\cL\cup \cT\cup \cM$ into the corresponding containers. For $\cH$ we apply Lemma~\ref{lem:containersPackPTAS} with $\varepsilon' = \varepsilon^2$ to assign the horizontal tasks to them, obtaining a set of horizontal unplaced tasks of \kam{an} area at most $\varepsilon^2 \cdot W \cdot OPT$ (hence of total height at most $\eps OPT$). The latter tasks can be placed into an extra horizontal container of height $\varepsilon OPT$ and width $W$. The resulting set of containers can be scheduled with \kam{a} peak at most $(\frac{5}{3}+7\eps)OPT$, and such a schedule can be efficiently computed as already discussed.
\end{proof}

\subsection{Scheduling Narrow Tasks}
\label{sec:schedulingALL}

At this point it just remains to schedule the narrow tasks. For this goal we need the following generalization of Lemma \ref{lem:packing-narrow-sorted} that considers $(Q,t^*)$-sorted partial schedules. Recall that for a node $t^*$ of the path and a value $Q\geq 0$, a schedule is $(Q,t^*)$-sorted if the demand to the left of $t^*$ is at least $Q$, and to the right of $t^*$ the demand profile is non-increasing. 
\begin{lemma}\label{lem:packing-narrow-Qtsorted} Let $\cI$ be an instance of \DSP and $\alpha>0$. Suppose we are given a $((1+\alpha)OPT,t^*)$-sorted schedule of $\cI' \subseteq \cI$. \fab{Let} $\cI'':=\cI\setminus \cI'$ \fab{and} assume that:
\begin{itemize}
    \item \fab{The peak of the schedule} is at most $\pi$, with $\pi\ge (1+\alpha)OPT + h_{\max}(\cI'')$, and
    \item $w_{\max}(\cI'') \le \frac{\alpha}{2(\alpha+1)}W$.
\end{itemize}
Then it is possible to compute in polynomial time a schedule of $\cI$ with peak at most $\pi$. 
\end{lemma}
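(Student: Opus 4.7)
The plan is to reduce the task to the region to the right of $t^*$ (of width $W':=W-t^*$) and to invoke (the proof of) Lemma~\ref{lem:packing-narrow-sorted} there, using the non-increasing portion of the initial demand profile as the sorted initial partial schedule and scheduling $\cI''$ on top of it via the same greedy left-to-right procedure, with $e^{check}$ starting at $t^*$ rather than at the leftmost edge of $G$.

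The first step is to bound $t^*$. By $((1+\alpha)OPT,t^*)$-sortedness the initial schedule has demand at least $(1+\alpha)OPT$ on each of the $t^*$ edges to the left of $t^*$, so these edges account for a total demand of at least $(1+\alpha)OPT\cdot t^*$; combined with $a(\cI)\le OPT\cdot W$ from Proposition~\ref{prop:LB_OPT} this gives $t^*\le W/(1+\alpha)$ and hence $W'\ge \alpha W/(1+\alpha)$. The hypothesis $w_{\max}(\cI'')\le \frac{\alpha}{2(1+\alpha)}W$ then yields $w_{\max}(\cI'')\le W'/2$. Letting $D_L$ (resp.\ $D_R$) denote the sum of the initial demand profile over the edges to the left (resp.\ right) of $t^*$, the same global area bound also gives
\[
D_R+a(\cI'')\ =\ a(\cI)-D_L\ \le\ OPT\bigl(W-(1+\alpha)t^*\bigr).
\]

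Next, I would verify the three hypotheses of Lemma~\ref{lem:packing-narrow-sorted} applied to the right-region subproblem (strip of width $W'$, sorted initial profile of area $D_R$, remaining tasks $\cI''$, target peak $\pi$). Since every task of $\cI$ has height at most $OPT$, $h_{\max}(\cI'')\le OPT$ and $(D_R+a(\cI''))/W'\le OPT$ (because $(W-(1+\alpha)t^*)/(W-t^*)\le 1$), so $\pi\ge (1+\alpha)OPT+h_{\max}(\cI'')$ implies both $\pi\ge 2h_{\max}(\cI'')$ and $\pi\ge h_{\max}(\cI'')+(D_R+a(\cI''))/W'$. The width condition was already shown. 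Writing $w^*:=w_{\max}(\cI'')$ and $h^*:=h_{\max}(\cI'')$, the main step is the area inequality
\[
(W'-w^*)(\pi-h^*)+w^*h^*\ \ge\ D_R+a(\cI'').
\]
Using $\pi-h^*\ge (1+\alpha)OPT$, the left-hand side is at least $(W'-w^*)(1+\alpha)OPT+w^*h^*$, so subtracting the upper bound $OPT(W-(1+\alpha)t^*)$ on the right-hand side yields, after simplification,
\[
\alpha OPT\cdot W - w^*\bigl((1+\alpha)OPT-h^*\bigr)\ \ge\ \alpha OPT\cdot W - \tfrac{\alpha W}{2(1+\alpha)}\cdot (1+\alpha)OPT\ =\ \tfrac{1}{2}\alpha OPT\cdot W\ \ge\ 0,
\]
where I used $w^*\le \frac{\alpha W}{2(1+\alpha)}$ and $h^*\ge 0$.

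With all three hypotheses verified, the greedy procedure from the proof of Lemma~\ref{lem:packing-narrow-sorted} schedules all of $\cI''$ inside the right region while maintaining peak at most $\pi$; combining with the given schedule of $\cI'$ gives the desired schedule of $\cI$. The only mildly delicate point I anticipate is that the ``initial partial schedule'' of the right-region subinstance is not literally a schedule of a subset of tasks of that subinstance but rather the profile induced by $\cI'$ on those edges; however, the proof of Lemma~\ref{lem:packing-narrow-sorted} uses only the non-increasing shape of this profile and the area accounting above, so the adaptation is routine.
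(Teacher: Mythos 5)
Your proof is correct and follows essentially the same approach as the paper's: bound $t^*$ via the $(1+\alpha)OPT$ demand on the left part, restrict to the region right of $t^*$, and apply (the proof of) Lemma~\ref{lem:packing-narrow-sorted} there after verifying its three hypotheses with the same area accounting, culminating in $\alpha\,OPT\cdot W - w^*\bigl((1+\alpha)OPT-h^*\bigr)\ge 0$. The only cosmetic difference is that the paper formalizes the non-increasing profile on the right as a set $\tilde{\mathcal{I}}'$ of auxiliary unit-width tasks so that Lemma~\ref{lem:packing-narrow-sorted} can be invoked as a black box, whereas you apply its proof directly to the profile --- a distinction you correctly flag and which is indeed routine.
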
 
\begin{proof}
By overloading notation, let $t^*$ also denote the number of edges to the left of $t^*$.
Notice first that $W-t^*\ge 2w_{\max}(\cI'')$. Indeed otherwise, since the input schedule is $((1+\alpha)OPT,t^*)$-sorted, the total area of the tasks in $\cI'$ would be at least 
$$
t^* \cdot (1+\alpha)OPT > (W-2w_{\max}(\fab{\cI''}))(1+\alpha)OPT \ge W\cdot OPT
$$ 
which is not possible. \kam{Roughly speaking, to} prove the desired claim\kam{,} we will apply Lemma~\ref{lem:packing-narrow-sorted} to the demand profile induced by the edges to the right of $t^*$. In more detail, we consider a new instance defined by a path with $\tilde{W}=W-t^*$ edges and a set of tasks $\tilde{\cI}$ consisting of $\tilde{\cI}'':=\cI''$ plus a set $\tilde{\cI}'$ of $\tilde{W}$ tasks having width $1$ and, for each edge $e$ to the right of $t^*$, height equal to the total demand on edge $e$ in the original schedule for $\cI'$. To see that the required hypotheses are satisfied, notice that by scheduling the tasks in $\tilde{\cI}'$ one next to the other sorted non-increasingly by height we obtain a sorted partial schedule of \kam{a} peak at most $\pi$, where 
$$
\pi\ge h_{\max}(\cI'') + (1+\alpha)OPT\ge h_{\max}(\tilde{\cI}'') + \max\{a(\tilde{\cI})/\tilde{W}, h_{\max}(\tilde{\cI}'')\}.
$$ 
The last inequality above holds since $a(\tilde{\cI}) \le OPT\cdot W - a(\cI') \le OPT(W-(1+\alpha)t^*)$. Finally, we notice that $w_{\max}(\tilde{\cI}'') \le \tilde{W}/2$, and
\begin{eqnarray*} 
& & (\tilde{W} - w_{\max}(\tilde{\cI}''))(\pi-h_{\max}(\tilde{\cI}'')) + w_{\max}(\tilde{\cI}'')\cdot h_{\max}(\tilde{\cI}'') \\ & \ge & (\tilde{W}-w_{\max}(\tilde{\cI}''))(1+\alpha)OPT \\ & = & OPT(W - t^*(1+\alpha)) + \alpha W\cdot OPT - w_{\max}(\cI'')(1+\alpha)OPT \\ & \ge & a(\tilde{\cI}) + \alpha W\cdot OPT - \frac{\alpha}{2}W\cdot OPT \ge a(\tilde{\cI}).
\end{eqnarray*}
Hence all the conditions of Lemma~\ref{lem:packing-narrow-sorted} apply. Given that the demand profile of the partial schedule for $\tilde{\cI}'$ is the same as the demand profile induced by the edges to the right of $t^*$ in the original schedule for $\cI'$, we can schedule $\cI''$ on top of the input schedule without exceeding the peak $\pi$.
\end{proof}

We now have all the ingredients to prove Theorem \ref{thr:53}. 
\begin{proof}[Proof of Theorem \ref{thr:53}]
Consider the schedule of $\cI\setminus \cN$ with peak at most $\pi:= \left(\frac{5}{3}+7\varepsilon\right)OPT$ provided by Lemma \ref{lem:packing_notnarrow}. We perform a \wal{$\pi$}-left-pushing of this schedule, however without left-shifting any tall task. Let us prove that this partial schedule of $\cI':=\cI\setminus \cN$ satisfies all the required properties of Lemma~\ref{lem:packing-narrow-Qtsorted} with parameter $\pi$. First of all, there exists a node $t^*$ for which (1) every edge to the left of $t^*$ (if any) has demand larger than $(1+7\varepsilon)OPT$, and (2) the demand profile to the right of $t^*$ is non-increasing. Indeed, if (1) does not hold, then there exists an edge having demand less than $(1+7\varepsilon)OPT$ and the following edge has demand larger than $(1+7\varepsilon)OPT$. But this means that some task which is not tall can be left-shifted (as there can be only one tall task per edge); similarly, if (2) does not hold, there is a pair of contiguous edges to the right of $t^*$ where the demand strictly increases from left to right. But since the tall tasks are sorted non-increasingly by height, this implies that there exists a task \kam{that} \kamr{changed ``which'' to ``that''} is not tall that can be left-shifted. In conclusion, the solution is $((1+7\varepsilon)OPT,t^*)$-sorted and also $w_{\max}(\mathcal{N}) \le \varepsilon W \le \frac{7\varepsilon}{2(1+7\varepsilon)}W$ for $\varepsilon$ small enough. Since $\pi\geq (1+7\varepsilon)OPT + h_{\max}(\mathcal{N})$, by Lemma~\ref{lem:packing-narrow-Qtsorted} we obtain a feasible schedule of peak at most $\pi$. The claim follows by scaling $\eps$ appropriately.
\end{proof}

\bibliographystyle{abbrv}
\bibliography{references}

\begin{thebibliography}{10}

\bibitem{AKPP17}
A.~Adamaszek, T.~Kociumaka, M.~Pilipczuk, and M.~Pilipczuk.
\newblock Hardness of approximation for strip packing.
\newblock {\em {ACM} Trans. Comput. Theory}, 9(3):14:1--14:7, 2017.

\bibitem{AW15}
A.~Adamaszek and A.~Wiese.
\newblock A quasi-ptas for the two-dimensional geometric knapsack problem.
\newblock In {\em Proceedings of the Twenty-Sixth Annual {ACM-SIAM} Symposium
  on Discrete Algorithms ({SODA})}, pages 1491--1505. {SIAM}, 2015.

\bibitem{ABCGJKLP13}
S.~Alamdari, T.~C. Biedl, T.~M. Chan, E.~Grant, K.~R. Jampani, S.~Keshav,
  A.~Lubiw, and V.~Pathak.
\newblock Smart-grid electricity allocation via strip packing with slicing.
\newblock In {\em 13th International Symposium on Algorithms and Data
  Structures ({WADS})}, volume 8037, pages 25--36. Springer, 2013.

\bibitem{AGLW18}
A.~Anagnostopoulos, F.~Grandoni, S.~Leonardi, and A.~Wiese.
\newblock A mazing $(2+\varepsilon)$-approximation for unsplittable flow on a
  path.
\newblock {\em {ACM} Transactions on Algorithms}, 14(4):55:1--55:23, 2018.

\bibitem{BCR80}
B.~S. Baker, E.~G.~C. Jr., and R.~L. Rivest.
\newblock Orthogonal packings in two dimensions.
\newblock {\em {SIAM} Journal on Computing}, 9(4):846--855, 1980.

\bibitem{BCJPS09}
N.~Bansal, A.~Caprara, K.~Jansen, L.~Pr{\"{a}}del, and M.~Sviridenko.
\newblock A structural lemma in 2-dimensional packing, and its implications on
  approximability.
\newblock In {\em Algorithms and Computation, 20th International Symposium
  ({ISAAC})}, volume 5878, pages 77--86. Springer, 2009.

\bibitem{BGKMW15}
J.~Batra, N.~Garg, A.~Kumar, T.~M{\"{o}}mke, and A.~Wiese.
\newblock New approximation schemes for unsplittable flow on a path.
\newblock In {\em Proceedings of the Twenty-Sixth Annual {ACM-SIAM} Symposium
  on Discrete Algorithms ({SODA})}, pages 47--58. {SIAM}, 2015.

\bibitem{BDGS15}
I.~Bladek, M.~Drozdowski, F.~Guinand, and X.~Schepler.
\newblock On contiguous and non-contiguous parallel task scheduling.
\newblock {\em J. Sched.}, 18(5):487--495, 2015.

\bibitem{BKKRT04}
A.~L. Buchsbaum, H.~J. Karloff, C.~Kenyon, N.~Reingold, and M.~Thorup.
\newblock {OPT} versus {LOAD} in dynamic storage allocation.
\newblock {\em {SIAM} Journal on Computing}, 33(3):632--646, 2004.

\bibitem{BHLWY16}
M.~Burcea, W.~Hon, H.~Liu, P.~W.~H. Wong, and D.~K.~Y. Yau.
\newblock Scheduling for electricity cost in a smart grid.
\newblock {\em Journal of Scheduling}, 19(6):687--699, 2016.

\bibitem{CCKR11}
G.~C{\u{a}}linescu, A.~Chakrabarti, H.~J. Karloff, and Y.~Rabani.
\newblock An improved approximation algorithm for resource allocation.
\newblock {\em {ACM} Transactions on Algorithms}, 7(4):48:1--48:7, 2011.

\bibitem{CB76}
E.~G. Coffman and J.~L. Bruno.
\newblock {\em Computer and job-shop scheduling theory / edited by E. G.
  Coffman, Jr. ; coauthors, J. L. Bruno ... [et al.]}.
\newblock Wiley New York, 1976.

\bibitem{CCGMV13}
E.~G. Coffman~Jr., J.~Csirik, G.~Galambos, S.~Martello, and D.~Vigo.
\newblock {\em Bin Packing Approximation Algorithms: Survey and
  Classification}, pages 455--531.
\newblock Springer New York, 2013.

\bibitem{CGJT80}
E.~G. Coffman~Jr., M.~R. Garey, D.~S. Johnson, and R.~E. Tarjan.
\newblock Performance bounds for level-oriented two-dimensional packing
  algorithms.
\newblock {\em {SIAM} Journal on Computing}, 9(4):808--826, 1980.

\bibitem{DJKRT21}
M.~A. Deppert, K.~Jansen, A.~Khan, M.~Rau, and M.~Tutas.
\newblock Peak demand minimization via sliced strip packing.
\newblock {\em CoRR}, abs/2105.07219, 2021.

\bibitem{DMT04}
P.~Dutot, G.~Mouni{\'{e}}, and D.~Trystram.
\newblock Scheduling parallel tasks approximation algorithms.
\newblock In J.~Y. Leung, editor, {\em Handbook of Scheduling - Algorithms,
  Models, and Performance Analysis}. Chapman and Hall/CRC, 2004.

\bibitem{GGHIKW17}
W.~Gálvez, F.~Grandoni, S.~Heydrich, S.~Ingala, A.~Khan, and A.~Wiese.
\newblock Approximating geometric knapsack via {L}-packings.
\newblock In {\em 58th {IEEE} Annual Symposium on Foundations of Computer
  Science ({FOCS})}, pages 260--271. {IEEE} Computer Society, 2017.

\bibitem{GGHIKW17arxiv}
W.~Gálvez, F.~Grandoni, S.~Heydrich, S.~Ingala, A.~Khan, and A.~Wiese.
\newblock Approximating geometric knapsack via {L}-packings.
\newblock {\em CoRR}, abs/1711.07710, 2017.

\bibitem{GGIK16}
W.~Gálvez, F.~Grandoni, S.~Ingala, and A.~Khan.
\newblock Improved pseudo-polynomial-time approximation for strip packing.
\newblock In {\em 36th {IARCS} Annual Conference on Foundations of Software
  Technology and Theoretical Computer Science ({FSTTCS})}, volume~65, pages
  9:1--9:14. Schloss Dagstuhl - Leibniz-Zentrum f{\"{u}}r Informatik, 2016.

\bibitem{GGJJKR20}
W.~Gálvez, F.~Grandoni, A.~Jabal~Ameli, K.~Jansen, A.~Khan, and M.~Rau.
\newblock A tight (3/2+{\(\epsilon\)}) approximation for skewed strip packing.
\newblock In {\em Approximation, Randomization, and Combinatorial Optimization.
  Algorithms and Techniques, ({APPROX/RANDOM})}, volume 176, pages 44:1--44:18.
  Schloss Dagstuhl - Leibniz-Zentrum f{\"{u}}r Informatik, 2020.

\bibitem{GIU15}
F.~Grandoni, S.~Ingala, and S.~Uniyal.
\newblock Improved approximation algorithms for unsplittable flow on a path
  with time windows.
\newblock In {\em Approximation and Online Algorithms - 13th International
  Workshop, ({WAOA})}, volume 9499, pages 13--24. Springer, 2015.

\bibitem{GMWZ18}
F.~Grandoni, T.~M{\"{o}}mke, A.~Wiese, and H.~Zhou.
\newblock A {(5/3} + {\(\epsilon\)})-approximation for unsplittable flow on a
  path: placing small tasks into boxes.
\newblock In {\em Proceedings of the 50th Annual {ACM} {SIGACT} Symposium on
  Theory of Computing ({STOC})}, pages 607--619. {ACM}, 2018.

\bibitem{HJPS14}
R.~Harren, K.~Jansen, L.~Pr{\"{a}}del, and R.~van Stee.
\newblock A {(5/3} + {\(\epsilon\)})-approximation for strip packing.
\newblock {\em Computational Geometry}, 47(2):248--267, 2014.

\bibitem{HS09}
R.~Harren and R.~van Stee.
\newblock Improved absolute approximation ratios for two-dimensional packing
  problems.
\newblock In {\em Approximation, Randomization, and Combinatorial Optimization.
  Algorithms and Techniques, 12th International Workshop ({APPROX})}, volume
  5687, pages 177--189. Springer, 2009.

\bibitem{HJRS20}
S.~Henning, K.~Jansen, M.~Rau, and L.~Schmarje.
\newblock Complexity and inapproximability results for parallel task scheduling
  and strip packing.
\newblock {\em Theory of Computing Systems}, 64(1):120--140, 2020.

\bibitem{J12}
K.~Jansen.
\newblock A $(3/2+\varepsilon)$ approximation algorithm for scheduling moldable
  and non-moldable parallel tasks.
\newblock In {\em 24th {ACM} Symposium on Parallelism in Algorithms and
  Architectures ({SPAA})}, pages 224--235. {ACM}, 2012.

\bibitem{JR19ESA}
K.~Jansen and M.~Rau.
\newblock Closing the gap for pseudo-polynomial strip packing.
\newblock In {\em 27th Annual European Symposium on Algorithms ({ESA})}, volume
  144, pages 62:1--62:14. Schloss Dagstuhl - Leibniz-Zentrum f{\"{u}}r
  Informatik, 2019.

\bibitem{JR19}
K.~Jansen and M.~Rau.
\newblock Improved approximation for two dimensional strip packing with
  polynomial bounded width.
\newblock {\em Theoretical Computer Science}, 789:34--49, 2019.

\bibitem{JS09}
K.~Jansen and R.~Solis{-}Oba.
\newblock Rectangle packing with one-dimensional resource augmentation.
\newblock {\em Discrete Optimization}, 6(3):310--323, 2009.

\bibitem{JT10}
K.~Jansen and R.~Th{\"{o}}le.
\newblock Approximation algorithms for scheduling parallel jobs.
\newblock {\em {SIAM} Journal on Computing}, 39(8):3571--3615, 2010.

\bibitem{KSKL13}
M.~M. Karbasioun, G.~Shaikhet, E.~Kranakis, and I.~Lambadaris.
\newblock Power strip packing of malleable demands in smart grid.
\newblock In {\em Proceedings of {IEEE} International Conference on
  Communications, ({ICC})}, pages 4261--4265. {IEEE}, 2013.

\bibitem{KR00}
C.~Kenyon and E.~R{\'{e}}mila.
\newblock A near-optimal solution to a two-dimensional cutting stock problem.
\newblock {\em Mathematics of Operations Research}, 25(4):645--656, 2000.

\bibitem{LTWYC90}
J.~Y. Leung, T.~W. Tam, C.~S. Wong, G.~H. Young, and F.~Y.~L. Chin.
\newblock Packing squares into a square.
\newblock {\em Journal of Parallel and Distributed Computing}, 10(3):271--275,
  1990.

\bibitem{LLW19}
F.~Liu, H.~Liu, and P.~W.~H. Wong.
\newblock Greedy is optimal for online restricted assignment and smart grid
  scheduling for unit size jobs.
\newblock In {\em Approximation and Online Algorithms - 17th International
  Workshop ({WAOA})}, volume 11926, pages 217--231. Springer, 2019.

\bibitem{LLW20}
F.~Liu, H.~Liu, and P.~W.~H. Wong.
\newblock Non-preemptive scheduling in a smart grid model and its implications
  on machine minimization.
\newblock {\em Algorithmica}, 82(12):3415--3457, 2020.

\bibitem{MW15}
T.~M{\"{o}}mke and A.~Wiese.
\newblock A $(2+\varepsilon)$-approximation algorithm for the storage
  allocation problem.
\newblock In {\em Automata, Languages, and Programming - 42nd International
  Colloquium ({ICALP})}, volume 9134, pages 973--984. Springer, 2015.

\bibitem{MW20}
T.~M{\"{o}}mke and A.~Wiese.
\newblock Breaking the barrier of 2 for the storage allocation problem.
\newblock In {\em 47th International Colloquium on Automata, Languages, and
  Programming ({ICALP})}, volume 168, pages 86:1--86:19. Schloss Dagstuhl -
  Leibniz-Zentrum f{\"{u}}r Informatik, 2020.

\bibitem{NW16}
G.~Nadiradze and A.~Wiese.
\newblock On approximating strip packing with a better ratio than 3/2.
\newblock In R.~Krauthgamer, editor, {\em Proceedings of the Twenty-Seventh
  Annual {ACM-SIAM} Symposium on Discrete Algorithms ({SODA})}, pages
  1491--1510. {SIAM}, 2016.

\bibitem{RKS15}
A.~Ranjan, P.~P. Khargonekar, and S.~Sahni.
\newblock Offline first fit scheduling in smart grids.
\newblock In {\em 2015 {IEEE} Symposium on Computers and Communication
  ({ISCC})}, pages 758--763. {IEEE} Computer Society, 2015.

\bibitem{S94}
I.~Schiermeyer.
\newblock Reverse-fit: {A} 2-optimal algorithm for packing rectangles.
\newblock In J.~van Leeuwen, editor, {\em Algorithms ({ESA}) - Second Annual
  European Symposium}, volume 855, pages 290--299. Springer, 1994.

\bibitem{S80}
D.~D. Sleator.
\newblock A 2.5 times optimal algorithm for packing in two dimensions.
\newblock {\em Information Processing Letters}, 10(1):37--40, 1980.

\bibitem{Steinberg97}
A.~Steinberg.
\newblock A strip-packing algorithm with absolute performance bound 2.
\newblock {\em {SIAM} Journal on Computing}, 26(2):401--409, 1997.

\bibitem{THLW13}
S.~Tang, Q.~Huang, X.~Li, and D.~Wu.
\newblock Smoothing the energy consumption: Peak demand reduction in smart
  grid.
\newblock In {\em Proceedings of the {IEEE} International Conference on
  Computer Communications ({INFOCOM})}, pages 1133--1141. {IEEE}, 2013.

\bibitem{YM17}
S.~Yaw and B.~Mumey.
\newblock Scheduling non-preemptible jobs to minimize peak demand.
\newblock {\em Algorithms}, 10(4):122, 2017.

\bibitem{YMML14}
S.~Yaw, B.~Mumey, E.~McDonald, and J.~Lemke.
\newblock Peak demand scheduling in the smart grid.
\newblock In {\em 2014 {IEEE} International Conference on Smart Grid
  Communications ({SmartGridComm})}, pages 770--775. {IEEE}, 2014.

\end{thebibliography}


\newpage
\appendix

\section{Hardness of Approximation for \textsc{Square-DSP}}\label{sec:hardness}

For several rectangle packing problems it is usually the case that they are NP-hard (or APX-hard) even when restricted to instances consisting solely of squares~\cite{LTWYC90}. This holds also for DSP, as the following theorem shows.

\begin{theorem}
\label{thm:hardness_pecs}
For any $\varepsilon > 0$, there exists no polynomial-time $(3/2 - \varepsilon)$-approximation algorithm for \DSPS unless NP $=$ P.
\end{theorem}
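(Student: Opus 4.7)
The plan is to prove Theorem~\ref{thm:hardness_pecs} by a polynomial-time reduction from the NP-complete \Part problem, mimicking the classical $3/2$-inapproximability argument for \DSP and for \GSP but with all tasks forced to be squares.

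Given a \Part instance with positive integers $a_1,\ldots,a_n$ summing to $2S$, I would construct a \DSPS instance as follows. For a suitable integer scaling factor $K$, create one ``partition'' square of side $K a_i$ for each $i$, together with a constant number of ``structural'' squares whose sides are of order $KS$ (for example, a pair of large squares of side $KS$ together with one or two auxiliary squares of tuned sides). The strip width is set to $W=2KS$, chosen so that the total width of the partition squares exactly matches $W$. The value of $P$ in the gap is the peak forced by the structural squares plus the maximum height of a partition square.

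The completeness direction is direct: given a partition $A \dot\cup B$ with $\sum_{i\in A}a_i=\sum_{i\in B}a_i=S$, I would place the structural squares in a ``canonical'' way that produces a demand profile consisting of a central plateau of height $\approx KS$ and two valleys of width $KS$ on either side. Then I schedule the $A$-squares sequentially in one valley and the $B$-squares sequentially in the other; because each valley has exactly the right total width, no partition square overlaps another, so the peak is exactly $P$.

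The soundness direction is the main obstacle, and is where the $3/2$ factor must be squeezed out. I would argue it in two steps. First, a case analysis on the relative horizontal positions of the structural squares shows that, up to the obvious left--right symmetry, any placement other than the canonical one already yields a peak of at least $(3/2)P$: this is engineered by choosing the sides of the structural squares so that any overlap configuration between two of them accumulates at least $\tfrac{3}{2}KS$ demand on some edge. Second, conditioned on the structural squares being in the canonical configuration, the partition squares have total width $2KS$ equal to the combined valley width, and any placement outside the two valleys makes a partition square of side $\Omega(KS)$ sit on top of the plateau, pushing the peak to $(3/2)P$. Thus a schedule of peak $<(3/2)P$ forces the partition squares to split between the two valleys with sums $\leq S$ each; since their total width is $2S$, this yields a valid partition. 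The delicate point is calibrating the sides of the structural squares, and possibly restricting WLOG to \Part instances in which every $a_i=\Omega(S/n)$ (or reducing from 3-\Part to the same effect), so that the forbidden ``partial overflow'' cases quantitatively exceed the $(3/2)P$ threshold. Scaling $\eps$ appropriately then yields the claimed $(3/2-\eps)$-inapproximability.
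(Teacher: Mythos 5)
Your approach has a genuine gap, and it is exactly at the point you flag as ``the delicate point.'' The problem is that your construction relies on small partition squares (side $Ka_i$) sitting inside ``valleys'' of much larger height $\approx KS$, and your soundness argument assumes that the total valley \emph{width} constrains how the partition squares can be split. But \DSPS is not geometric: any number of small squares may contribute demand to the same edge, so if the peak budget minus the structural demand in a valley is $\Omega(KS)$ while the partition squares have height only $Ka_i \ll KS$, you can freely stack many partition squares on the same edges. Their total \emph{area} is $K^2\sum a_i^2$, which is tiny compared with the available valley capacity, so even a No instance would admit a low-peak schedule — the width-balance constraint you want to extract never kicks in. This is precisely the gap between \DSP and \GSP that the paper stresses; a ``plateau and valleys'' gadget of the sort you describe is plausible for geometric packing, but does not transfer to demand packing.

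The paper sidesteps this by using \emph{no} structural squares at all and making every square roughly the same size: task $i$ has side $C + a_i$ with $C \gg \sum a_j$, and $W = nC + B$. Because every side exceeds $C$, no three squares can coexist on an edge without blowing past $3(C+1)$; and because the total width is $2W$, every edge must carry exactly two squares. This forces the schedule to decompose into exactly two left-to-right chains, so the problem really does reduce to splitting the squares into two groups of equal total width. A further subtlety you do not address: because the common additive term $C$ appears in every side, equal total widths $(\lvert S_1\rvert C + \sum_{S_1} a_i = \lvert S_2\rvert C + \sum_{S_2} a_i)$ force \emph{both} equal cardinalities and equal sums. Hence the reduction must come from \textsc{Balanced Partition}, not plain \Part; the paper first proves \textsc{Balanced Partition} NP-complete (via a padding reduction from \Part) and then reduces from it. Your reduction from plain \Part would not produce the right Yes/No correspondence even if the stacking issue were repaired.
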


In order to prove this result, we show a gap-producing reduction from the NP-complete \BP\ problem, formally defined as follows. 

\begin{definition}[\BP]
In an instance of the \BP\ problem, we are given a set of $2n$ positive integers $A = \{a_1, a_2, \ldots, a_{2n}\}$. The goal is to decide whether there exists a partitioning of $A$ into $A_1$ and $A_2$ such that $\abs{A_1} = \abs{A_2} = n$, and the sum of the numbers in each of the two sets is equal to a target value $B = \left(\sum_{j = 1}^{2n} a_j \right) / 2$.
\end{definition}

We start first by proving that the \BP\ problem is NP-complete. This result is folklore by now but, for the sake of completeness, we bring a complete proof.

\begin{theorem}
\BP\ is NP-complete.
\end{theorem}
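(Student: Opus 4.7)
The plan is to show Balanced Partition is in NP and then to prove NP-hardness via a reduction from the classical \Part{} problem, which is known to be NP-complete. Membership in NP is immediate: given a candidate partition $(A_1, A_2)$, one can verify in polynomial time that $|A_1| = |A_2| = n$ and that $\sum_{a \in A_1} a = \sum_{a \in A_2} a = B$.

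For NP-hardness, I would give a polynomial-time reduction from \Part. Recall that in \Part{} one is given $n$ positive integers $\{a_1, \ldots, a_n\}$ with $\sum_i a_i = 2B$, and must decide whether some subset sums to $B$. From such an instance I construct a Balanced Partition instance on $2n$ elements by setting $b_i := a_i + 1$ for $i \in \{1, \ldots, n\}$ and $b_i := 1$ for $i \in \{n+1, \ldots, 2n\}$. The total sum is $2B + 2n$, so the target for each side is $B + n$. All $b_i$ are positive integers and the encoding length grows polynomially, so the reduction is valid.

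The correctness argument is straightforward. Suppose $(A_1, A_2)$ is a balanced partition, and let $A_1$ contain exactly $k$ of the ``augmented'' elements $\{b_1, \ldots, b_n\}$, with corresponding index set $S_1 \subseteq \{1, \ldots, n\}$, and $n-k$ of the trailing $1$'s. Then
\[
\sum_{b \in A_1} b \;=\; \sum_{i \in S_1} a_i + k + (n-k) \;=\; \sum_{i \in S_1} a_i + n,
\]
so the balancing condition forces $\sum_{i \in S_1} a_i = B$, yielding a solution to \Part. Conversely, any subset $S \subseteq \{1, \ldots, n\}$ with $\sum_{i \in S} a_i = B$, of size $|S| = k$, gives a balanced partition by placing in $A_1$ the $k$ augmented elements indexed by $S$ together with any $n - k$ of the trailing $1$'s, and the rest in $A_2$; both sides then contain exactly $n$ elements and sum to $B + n$.

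The main technical point (and the only place one could slip up) is verifying that the size constraint $|A_1| = |A_2| = n$ interacts correctly with the modified sums: the $+1$ shift per augmented element is exactly compensated by taking $k$ augmented and $n-k$ padding elements on one side, which is why the construction decouples ``which subset is chosen'' from ``how many elements are chosen.'' No parity tricks or large multipliers are required, and the reduction works uniformly for all $n$.
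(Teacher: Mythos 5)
Your proof is correct, but it takes a genuinely different (and cleaner) route than the paper. The paper reduces \Part{} to \BP{} by constructing, for each candidate split-size $k \in \{1,\dots,\lfloor n/2\rfloor\}$, a \emph{separate} instance $\cI'_k$: it pads the original $n$ numbers with a large element $\alpha = (n-2k+1)C$ and $n-2k+1$ elements equal to $C$ (with $C$ chosen larger than $\sum_j a_j$), then argues soundness by showing $\alpha$ and the $C$'s must be distributed in a rigid way. That is effectively a Turing-style reduction producing $\Theta(n)$ instances, and its correctness hinges on a size-of-$C$ argument to rule out bad splits. Your reduction produces a \emph{single} \BP{} instance: you add $1$ to every original number and append $n$ unit-valued padding elements. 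The key observation --- that in any equal-cardinality split each side receives exactly $n$ unit contributions from the ``$+1$'' shifts and the padding $1$'s combined, so the cardinality constraint is automatically neutralized and the equal-sum condition collapses to a subset of the $a_i$'s summing to $B$ --- does all the work without any magnitude gadget. This is a standard Karp reduction, shorter to verify, and it sidesteps the paper's need to introduce $C$ and quantify over $k$. Both are valid; yours is the one I would prefer to present. (One small stylistic point: you should, as you did, note that membership in NP is trivial, since the paper states NP-completeness but its proof only addresses hardness.)
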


\begin{proof}
We will reduce the \Part\ problem to the balanced variant in polynomial time. Given an instance $\cI$ of the \Part\ problem with $n$ numbers $a_1, a_2, \ldots, a_n$, we construct an instance $\cI'_k$ of the \BP\ problem for each $k \in \{1, 2, \ldots, \floor{n / 2} \}$. Let $C$ be $(\sum_{j = 1}^n a_j) + 1$. For each $k$, the instance $\cI'_k$ is defined as follows. Let $\cI'_k$ have all the initial numbers, $a_1, a_2, \ldots, a_n$. Add the set \textsf{dummy} of $n - 2k + 2$ extra numbers where $\textsf{dummy} = \{ \alpha, \beta_1, \beta_2, \ldots, \beta_{n - 2k + 1} \}$ in which $\alpha = (n - 2k + 1)C$ and $\beta_i = C$ for each $i \in [n - k + 1]$. We claim that $\cI$ is a \emph{Yes} instance of the \Part\ problem if and only if at least one $\cI'_k$ is a \emph{Yes} instance of the \BP\ problem.\\

\noindent\textbf{Completeness:} Assume that $\cI$ is a \emph{Yes} instance of the \Part\ problem. Let $S_1$ and $S_2$ be the two sets of equal sum, say $B$. These sets may not necessarily have the same cardinality. With sum renumbering, assume that $S_1 = \{a_1, a_2, \ldots, a_k \}$ and $S_2 = \{a_{k + 1}, a_{k + 2}, \ldots, a_n \}$ for some $k \in [\floor{n / 2}]$. It is easy to see that the instance $\cI'_k$ of the \BP\ problem is a \emph{Yes} instance, since we can make the sets $S'_1 = \{a_1, a_2, \ldots, a_k, \beta_1, \beta_2, \ldots, \beta_{n - 2k + 1}\}$ and $S'_2 = \{a_{k + 1}, a_{k + 2}, \ldots, a_n, \alpha \}$ both with the sum $B + (n - 2k + 1)C$ and cardinality $n - k + 1$.\\

\noindent\textbf{Soundness:} Now assume that $\cI$ is a \emph{No} instance of the \Part\ problem. We claim that no $\cI'_k$ can be a \emph{Yes} instance of the \BP\ problem either. For the sake of contradiction, assume $\cI'_k$ is a \emph{Yes} instance with two partitions $S'_1$ and $S'_2$ of the same sum and cardinality. Note that if no $\beta_i$ is placed in the same set as $\alpha$, we reach a contradiction since we then can find two sets $S_1 = \{a_1, a_2, \ldots, a_k \}$ and $S_2 = \{a_{k + 1}, a_{k + 2}, \ldots, a_n \}$ of the same sum of the original \Part\ instance $\cI$. So with some renumbering, we can assume we have $S'_1 = \{a_1, a_2, \ldots, a_{k'}, \alpha, \beta_1, \beta_2, \ldots, \beta_\ell \}$ and $S'_2 = \{ a_{k' + 1} + a_{k' + 2}, \ldots, a_n, \beta_{\ell + 1}, \beta_{\ell + 2}, \ldots, \beta_{n - 2k + 1} \}$ for some $k'$ and $\ell$ in which:
\[ \sum_{j = 1}^{k'} a_j + (n - 2k + 1)C + \ell \cdot C = \sum_{j' = k' + 1}^n a_j' + (n - 2k + 1 - \ell)C.\]
This implies that 
\[ 2\ell \cdot C = \sum_{j' = k' + 1}^n a_j' - \sum_{j = 1}^{k'} a_j \leq \sum_{j' = k' + 1}^n a_j' < C,\]
which is a contradiction.
\end{proof}

\begin{proof}[Proof of Theorem \ref{thm:hardness_pecs}.]
Assume an instance $\cI$ of the \BP\ problem is given. Based in this instance, we define an instance $\cI'$ of \DSPS. Let $a_{max}$ denote the maximum value among the integers in $A$. Define $C$ as $1 / \varepsilon \cdot \sum_{j = 1}^{2n} a_j$, where $\varepsilon$ is chosen such that $1 / \varepsilon$ is a large but constant integer. Note that $C > 1 / \varepsilon \cdot a_{max}$. Let $\cI'$ have $2n$ tasks, where each task $i$ has width and height $C + a_i$ for $i \in [2n]$. Our goal is to schedule the $2n$ tasks into a path of $W=n \cdot C + B$ edges while minimizing the peak. Based on the hardness of the \BP\ problem, we show that it is hard to distinguish between the case where an schedule with peak $2C(1 + \varepsilon)$ exists and the case where the minimum peak is larger than $3(C + 1)-\varepsilon$. \\

\noindent
\textbf{Completeness:} Assume that $\cI$ is a \textsf{Yes} instance, meaning that a partitioning $A = A_1 \dot\cup A_2$ exists that satisfies the cardinality and sum constraints. Define two shelves of squares, $S_i = \{ j | a_j \in A_i  \}$ for $i = 1, 2$, and schedule them starting at the leftmost edge. The width of each shelf is equal to $n \cdot C + B$ and the peak is at most $2 \cdot (C + a_{max}) < 2C(1 + \varepsilon)$.\\

\noindent
\textbf{Soundness:} Now, consider a \textsf{No} instance $\cI$. We claim that, for the corresponding \DSPS instance, no schedule with peak smaller than or equal to $3(C + 1)-\varepsilon$ exists. For the sake of contradiction, assume that it is the case. Since the size of each task is at least $C + 1$, it means that in the optimal solution for $\cI'$, no three tasks use the same edge. Also, the total width of the tasks is $2W = 2(nC + b)$, so no edge can have less than two tasks. This allows us to split the tasks $S$ into two sets $S_1$ and $S_2$. We start at the leftmost edge and pick one of the two tasks placed on this edge arbitrarily and put in $S_1$. Since every edge has exactly two tasks, immediately to the right of this task at least one another task must start. We put it in $S_1$ as well and proceed until we reach the rightmost edge, breaking ties arbitrarily along the way. We set $S_2 = S \backslash S_1$. It remains to show that each set has exactly $n$ tasks. Assume otherwise; let $S_1$ be composed of tasks $s_1,\ldots, s_{n + k}$, and $S_2$ be the tasks $s_{n + k + 1}, \ldots, s_{2n}$ for some $k$, $1 \leq k \leq n$. Since the tasks are placed one next to the other in each shelf, we have that $(n + k)C + \sum_{j = 1}^{n + k} a_j = (n - k)C + \sum_{j' = n + k + 1}^{2n} a_{j'}$. Therefore $\sum_{j' = n + k + 1}^{2n} a_{j'} \geq 2k \cdot C$, which is a contradiction for any value of $k > 0$ by our choice of $C$.

As a result, assuming that NP $\neq$ P, no polynomial-time algorithm can approximate the \DSPS problem within a factor of $\frac{3(C + 1)-\varepsilon}{2C(1 + \varepsilon)} = 3/2- \varepsilon'$, for some $\varepsilon' = O(\eps)$. \end{proof}

\section{Comparison between \DSP and \GSP}\label{sec:DSPvsGSP}

In this Section we provide instances where a gap between the optimal values they achieve interpreted as DSP and GSP instances can be observed. First we discuss the general case and then the case of \DSPS. It is worth noticing that, for the general case, an analogous proof can be derived from the results in~\cite{BDGS15}.

\begin{lemma}\label{lem:SPvsPM} There exists an instance of DSP with optimal peak $4$ such that the corresponding GSP instance has optimal peak $5$.
\end{lemma}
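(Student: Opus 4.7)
The plan is to use the eight-task instance $\cI$ depicted in Figure~\ref{fig:gap_instance}, set on a strip of width $W=7$: two ``tall'' tasks of size $2\times 3$, two ``long'' tasks of size $4\times 1$, one task of size $3\times 1$, one $1\times 1$ task, and two tasks of size $1\times 2$. The total area equals $28=4W$, so the area-based lower bound is $4$. A direct edge-by-edge check on the schedule shown in Figure~\ref{fig:gap_instance} confirms that the demand on each of the seven edges is exactly $4$, so the DSP optimum of $\cI$ equals $4$.

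For the GSP lower bound I would argue by contradiction, assuming a geometric packing of peak at most $4$. Three elementary facts then hold. (i) The two tall tasks cannot share any edge, since $3+3>4$. (ii) The two long tasks must share at least one edge because $4+4>W$, and on each such \emph{overlap} edge only $2$ units of residual vertical room remain, which is insufficient for a tall task. (iii) Since $a(\cI)=4W$, each column of the packing must be filled to height exactly $4$, leaving no slack. Consequently the two tall tasks occupy two disjoint pairs of consecutive non-overlap edges.

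I would then run a finite case analysis on the unordered pair $\{l_3,l_4\}\subseteq\{1,2,3,4\}$ of starting edges of the long tasks (tasks $3$ and $4$ being identical) and on the placements of the tall tasks. Most subcases die immediately: either no two disjoint pairs of consecutive non-overlap edges exist, or the width-$3$ task $5$ has no run of three consecutive edges all with strictly positive residual height. In every surviving configuration the tall tasks sit on the pairs $\{1,2\}$ and $\{6,7\}$, and task $5$ is forced onto the middle edges $3, 4, 5$.

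The main obstacle is closing these surviving subcases. In each of them the tall task on $\{1,2\}$ and the long task covering edge $2$ together completely tile the (full) column at edge $2$, so the vertical offset of that long task is either $0$ or $3$; the same holds at edge $6$ for the long task covering it. Combined with the requirement that the two long tasks sit at distinct heights on their shared overlap edges, this leaves exactly two vertical configurations. For each, placing task $5$ on edges $3,4,5$ at either of its two admissible vertical levels leaves the entire packing with at most a single $1\times 2$ free vertical strip. Since the two $1\times 2$ tasks each need their own such strip, this yields the desired contradiction, proving that the GSP optimum of $\cI$ is at least $5$.
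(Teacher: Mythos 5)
Your proof proposal is correct in outline, but it takes a genuinely different route from the paper. You both work with the same instance and the same area lower bound to establish the DSP optimum of $4$, but your GSP lower bound is column-centric: you fix the starting edges $\{l_3,l_4\}\subseteq\{1,\dots,4\}$ of the two $4\times 1$ tasks, use the observations that (a) the tall $2\times 3$ tasks must avoid the overlap of the long tasks, and (b) every column must be filled to exactly $4$, to force the tall tasks to $\{1,2\},\{6,7\}$ and task $5$ to edges $3$--$5$, and then close the remaining subcases via vertical-offset constraints (the long task at column $2$ must sit at offset $0$ or $3$, etc.) and a count of $1\times 2$-sized slots. The paper instead argues row-wise over the $7\times 4$ grid: the two $2\times 3$ rectangles cannot both touch the same horizontal boundary (else the two $4\times 1$'s could not be placed), so one touches the top and one the bottom, forcing the $4\times 1$'s into the extreme rows, task $5$ into a middle row, the tall rectangles to the two vertical boundaries, and finally showing the two $1\times 2$ tasks plus the remaining $4\times 1$ cannot coexist in the top two rows. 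The row-based argument reaches the contradiction with essentially no enumeration, whereas your approach requires sifting roughly ten pairs $\{l_3,l_4\}$ and several tall-task placements per pair; on the other hand, your version relies only on column-wise saturation and does not need the initial ``same boundary'' observation, so it is arguably more mechanical and easier to verify by hand. One small caveat: you attribute the final contradiction in all surviving subcases to the vertical-offset argument, but in three of the four survivors (namely $\{1,3\},\{2,3\},\{2,4\}$) the residual columns already have an area/shape mismatch with the remaining tasks $6,7,8$ before any offsets are considered; only $\{1,4\}$ genuinely needs the offset analysis. This is a matter of presentation rather than correctness, but a clean write-up should separate the cases accordingly.
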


\begin{proof} Consider the following \DSP instance $\cI$, where $W=7$ and the set of tasks consists of the following eight elements (see Figure~\ref{fig:gap_instance} for a depiction):
	\begin{itemize}
		\item Two tasks of width $2$ and height $3$ (tasks $1$ and $2$ in the figure),
		\item Two tasks of width $4$ and height $1$ (tasks $3$ and $4$ in the figure),
		\item One task of width $3$ and height $1$ (task $5$ in the figure), 
		\item One task of width $1$ and height $1$ (task $6$ in the figure), and
		\item Two tasks of width $1$ and height $2$ (tasks $7$ and $8$ in the figure).
	\end{itemize}
	
	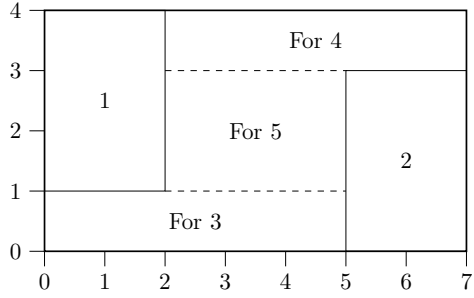
\begin{figure}
		\centering
		\resizebox{0.37\textwidth}{!}{\begin{tikzpicture}
		\draw[thick] (0,0) rectangle (7,4);
		\draw (0,0) -- (-0.25,0);
		\draw (-0.25,0) node[anchor=east] {$0$};
		\draw (0,1) -- (-0.25,1);
		\draw (-0.25,1) node[anchor=east] {$1$};
		\draw (0,2) -- (-0.25,2);
		\draw (-0.25,2) node[anchor=east] {$2$};
		\draw (0,3) -- (-0.25,3);
		\draw (-0.25,3) node[anchor=east] {$3$};
		\draw (0,4) -- (-0.25,4);
		\draw (-0.25,4) node[anchor=east] {$4$};
		\draw (-0.25,5) node[anchor=east] {\color{white!70}$5$};
		\draw (0,0) -- (0,-0.25);
		\draw (0,-0.25) node[anchor=north] {$0$};
		\draw (1,0) -- (1,-0.25);
		\draw (1,-0.25) node[anchor=north] {$1$};
		\draw (2,0) -- (2,-0.25);
		\draw (2,-0.25) node[anchor=north] {$2$};
		\draw (3,0) -- (3,-0.25);
		\draw (3,-0.25) node[anchor=north] {$3$};
		\draw (4,0) -- (4,-0.25);
		\draw (4,-0.25) node[anchor=north] {$4$};
		\draw (5,0) -- (5,-0.25);
		\draw (5,-0.25) node[anchor=north] {$5$};
		\draw (6,0) -- (6,-0.25);
		\draw (6,-0.25) node[anchor=north] {$6$};
		\draw (7,0) -- (7,-0.25);
		\draw (7,-0.25) node[anchor=north] {$7$};
		\draw (0,1) rectangle (2,4);
		\draw (1,2.5) node {$1$};
		\draw (5,0) rectangle (7,3);
		\draw (6,1.5) node {$2$};
		\draw[dashed] (2,1) -- (5,1);
		\draw[dashed] (2,3) -- (5,3);
		\draw (2.5,0.5) node {For $3$};
		\draw (3.5,2) node {For $5$};
		\draw (4.5,3.5) node {For $4$};
		\end{tikzpicture}}
		\caption{If we assume by contradiction that some optimal solution of height $4$ for the GSP instance described in Lemma~\ref{lem:SPvsPM} exists, it must have this structure.}\label{fig:gap_structure}
	\end{figure}
	
	As it is possible to see in Figure~\ref{fig:gap_instance}, the optimal solution has peak $4$ (since $OPT\ge a(\cI)/W=4$). We will show now that there is no solution for the corresponding GSP instance of height $4$, which would conclude the proof.
	
	Suppose by contradiction that there exists a solution to the corresponding GSP instance of height $4$. Let us imagine for the sake of presentation that we draw a grid of unit-size cells over the rectangular region $[0,7] \times [0,4]$, defining four rows of height $1$ and seven columns of width $1$. First of all, notice that in any feasible packing of the rectangles into the region, rectangles $1$ and $2$ cannot be touching the top (resp. bottom) boundary of the region at the same time. If that is the case, then the rectangles $3$ and $4$ do not fit in the region as they cannot be placed in the same row and none of them fits in the rows which are partially occupied by rectangles $1$ and $2$. So let us assume w.l.o.g. that rectangle $1$ touches the top boundary and rectangle $2$ touches the bottom boundary. Since they both partially occupy the middle rows of the region, rectangles $3$ and $4$ must be placed one touching the bottom boundary and the other touching the top boundary. This implies that rectangle $5$ has to be placed in one of the middle rows (in the other rows there is just one cell free), forcing us to place rectangles $1$ and $2$ one touching the left boundary and the other touching the right boundary (see Figure~\ref{fig:gap_structure}). Suppose rectangle $5$ is assigned to the second row from bottom to top (the other case being symmetric). Then in the two topmost rows we have to pack two rectangles of height $2$ plus a rectangle of width $4$ which is not possible as their total width is larger than the space left due to rectangle $1$. This contradicts the fact that there is a feasible solution for the \GSP instance $\cI$ of height $4$. \end{proof}

Now we will prove that even for the case of square tasks, the optimum packing for the two problems of \DSPS and \GSPS can exhibit a gap.


\begin{lemma}\label{lem:gap_squares} There exists an instance of \DSPS such that the optimal schedule has peak $11$ but every feasible solution for the corresponding \GSPS instance has height at least $12$. \end{lemma}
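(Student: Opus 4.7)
The plan is to exhibit the instance pictured in Figure~\ref{fig:gap_instance_PEC_squares}: two $6\times 6$, two $5\times 5$, one $3\times 3$, two $2\times 2$, and four $1\times 1$ square tasks on a strip of width $W=13$. Its total area equals $143 = 11\cdot 13$, so Proposition~\ref{prop:LB_OPT} immediately gives peak $\geq 11$, and the figure exhibits a DSP schedule attaining peak exactly $11$. The substance of the lemma is to rule out any GSP packing of height $11$.

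Assume such a packing existed. Since $143 = 11\cdot 13$, the packing would have to tile every cell of the $13\times 11$ grid. I would first argue that the two $6\times 6$s occupy disjoint columns (since $6+6>11$ forbids stacking) and thus together span $12$ of the $13$ columns, leaving exactly one ``free'' column. Next, no $6\times 6$ can sit at an interior row position $y\in\{1,2,3,4\}$: for $y\in\{1,4\}$ a height-$1$ sliver of six cells would require six unit squares to be covered, but only four exist; for $y\in\{2,3\}$ the height-$2$ sliver of $12$ cells forces the use of both $2\times 2$s together with all four $1\times 1$s, after which the companion height-$3$ sliver of $18$ cells has only the $3\times 3$ (area $9$) available and leaves nine uncovered cells. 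Hence each $6\times 6$ sits at rows $[0,5]$ or $[5,10]$.

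The remainder is a case split according to (i) whether the two $6\times 6$s share a horizontal half or split top and bottom, and (ii) which of columns $0$, $6$, $12$ is the free column. In every aligned case and in the split cases with the free column on the outside, a uniform argument works: each $5\times 5$ inside a $6\times 6$'s columns must avoid that $6\times 6$'s rows, so both $5\times 5$s lie entirely in rows $[0,4]$ or entirely in rows $[6,10]$, leaving a six-cell segment of the free column (its rows $[5,10]$ or $[0,4]$, as appropriate) reachable only by $1\times 1$ squares — impossible with only four. The remaining case, split $6\times 6$s with the free column at column $6$, requires a closer look: I would enumerate the two legal placements of a $5\times 5$ spanning column $6$ (at columns $[2,6]$ in rows $[6,10]$, and at columns $[6,10]$ in rows $[0,4]$) and verify, for each combination of zero, one, or two such spanning $5\times 5$s, either that the $3\times 3$ has no three-column window simultaneously free in three consecutive rows, or that the forced task-profile on column $6$ (which must sum to exactly $11$) clashes with the placements of the $5\times 5$s inside the $6\times 6$ columns.

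The main obstacle is bookkeeping the last sub-case; the lever that keeps it short is that perfect packing forces each column's task-heights to sum to exactly $11$, narrowing the possible covers of the free column to a small list of subsets (both $5\times 5$s plus a $1\times 1$, one $5\times 5$ plus two $2\times 2$s plus two $1\times 1$s, or the $3\times 3$ plus both $2\times 2$s plus all four $1\times 1$s, etc.), each of which either misplaces the $3\times 3$ or runs out of $2\times 2$s or $1\times 1$s once combined with the forced $5\times 5$ positions.
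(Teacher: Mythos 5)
Your proof uses the same instance and the same overall route as the paper's: establish $OPT\geq 11$ by area, exhibit a peak-$11$ DSP schedule, force the two $6\times 6$s into disjoint column ranges leaving one free column, and derive a contradiction case by case via the ``six-cell segment of the free column reachable only by unit squares (of which there are only four)'' trick, which is exactly what the paper does for its $x_2=6$ case. You differ in two sub-arguments. You force $y\in\{0,5\}$ for every $6\times 6$ globally via a sliver argument; the paper only needs this in its $x_2=7$ case and gets it faster by area (if $y_1\notin\{0,5\}$, tasks $2,3,4$ must all sit in a $7\times 11$ region of area $77<86$). More substantially, for the remaining hard case (free column $6$, $6\times 6$s split top and bottom) the paper closes it in one line: the uncovered region decomposes into two $7\times 5$ rectangles joined only at a single unit cell, no task of side at least $2$ can straddle that joint, and a $7\times 5$ rectangle cannot contain two of tasks $3,4,5$, so three tasks cannot be distributed over two slots. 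Your proposed enumeration of the spanning-$5\times 5$ positions and the column-$6$ profile would also succeed --- in every sub-case there is indeed no $3\times 3$ window left for task $5$ --- but you only sketch it, and the bookkeeping you yourself flag as the main obstacle is precisely what the paper's two-regions observation sidesteps; I would replace the enumeration with that observation. (Minor: in the six-cell-segment step the blocked rows are those spanned by the $6\times 6$ adjacent to the free column, namely $[0,5]$ or $[5,10]$; your $[0,4]$ is off by one.)
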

\begin{proof} 
Consider a \DSPS with $W=13$ and containing the following set $\cI$ of tasks (see Figure~\ref{fig:gap_instance_PEC_squares} for a depiction):
\begin{itemize}
    \item Two tasks of height/width $6$ (tasks $1$ and $2$ in the figure),
    \item Two tasks of height/width $5$ (tasks $3$ and $4$ in the figure),
    \item One task of height/width $3$ (task $5$ in the figure),
    \item Two tasks of height/width $2$ (tasks $6$ and $7$ in the figure), and
    \item Four tasks of height/width $1$ (tasks $8,9,10$ and $11$ in the figure).
\end{itemize}

Since $a(\cI)=11\cdot 13$, we have that $OPT\ge 11$. Figure~\ref{fig:gap_instance_PEC_squares} shows that the optimal peak is at most $11$ and hence it is exactly $11$.

    
Assume by contradiction that there exists a feasible packing for the corresponding \GSPS instance of height at most $11$. Consider $\mathcal{K}$ to be the region $[0,0]\times[13,11]$ in the plane, and let $(x_i,y_i)$ be the coordinate of the bottom-left corner of task $i$ in the solution. Notice that $\mathcal{K}$ must be completely filled with tasks. 

We can assume that $x_1\le x_2$, and since tasks $1$ and $2$ have height $6$ and the height of $\mathcal{K}$ is $11$, it must hold that $x_1\le x_2+6$. Hence, w.l.o.g. there are two cases to consider:\begin{itemize}
    \item $x_1=0$ and $x_2=6$:\\
    In this case the region $[12,y_2]\times[13,y_2+6]$ can only contain squares of size $1$, and they cannot fill the region completely, so this case cannot happen.
    \item $x_1=0$ and $x_2=7$:\\
    We show that $y_1,y_2\in \{0,5\}$; Assume that $y_1 \not \in \{0,5\}$. Then tasks $2,3$ and $4$ must be packed inside the region $[6,0]\times[13,11]$ since they can not be packed above or below task $1$. Since $a(j_2)+a(j_3)+a(j_4)>77$, this is not possible, hence proving the claim.
    
    Note that if $y_1=y_2$ then, similarly to the previous case, the area in $[6,y_1]\times[7,y_1+6]$ can only contain tasks of size $1$ and they cannot fill this region completely. So we can assume that $(x_1,y_1)=(0,0)$ and $(x_2,y_2)=(7,5)$.
    
    Now every remaining rectangle is either packed in $[6,0]\times[13,5]$ or in $[0,6]\times[7,11]$. However, among tasks $3, 4$ and $5$, it is not possible to place two of them in one of the previously mentioned rectangular region together, contradicting the existence of a feasible solution of height $11$.
\end{itemize}
\end{proof}

\section{A PTAS for \DSP with short tasks}\label{sec:light-PTAS}

In this section we will prove Theorem~\ref{lem:light-PTAS} restated below.

\lightPTAS*

Before proving the result in detail we provide a couple of required technical lemmas regarding the computation of $\pi$-left-pushing of a given schedule $P(\cdot)$. First of all, we prove that such a solution can be indeed computed efficiently.

\begin{lemma}\label{lem:leftPushingPolyTime}
    Given a feasible schedule $P(\cdot)$ with peak $\pi$ for an instance $\cI$, one can compute a $\pi'$-left-pushing of $P(\cdot)$, with $\pi'\ge\pi$, in polynomial time.
\end{lemma}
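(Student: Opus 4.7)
The plan is to give an explicit polynomial-time algorithm and prove that its output is a $\pi'$-left-pushing. The naive strategy of performing single left-shifts one at a time terminates but may require $\Omega(W)$ steps, so the key idea is to shift each task \emph{maximally} (in one step) and to process tasks in a carefully chosen order so that a single pass suffices.

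First I would describe the data structures: maintain the demand profile $D$ in compact form as a sorted list of at most $2n$ breakpoints (as discussed in Section~\ref{sec:left-push}), together with the current starting edge $s(i)$ of every task $i$. Given a task $i$ with current starting edge $s(i) > 1$, its \emph{maximum left-shift} places $i$ at the smallest feasible starting position, which equals $\max(1, e^{\ast}+1)$ where $e^{\ast}$ is the largest edge in $[1, s(i)-1]$ with $D(e^{\ast}) > \pi' - h(i)$; if no such edge exists, $i$ shifts all the way to $1$. Note that $D$ on $[1, s(i)-1]$ does not include $i$'s contribution (which lies in $[s(i), s(i)+w(i)-1]$), so this computation is immediate from the breakpoint representation and can be carried out in $O(n)$ time, as can the subsequent update of $D$.

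The algorithm itself is then simply: order the tasks by non-decreasing initial starting edge, and process them in this order, applying the maximal left-shift to each. I would then argue that one pass suffices, i.e., the output is already a $\pi'$-left-pushing. Fix any task $i$ and let $s^{\ast}(i)$ be its position right after it is processed. By construction, either $s^{\ast}(i)=1$ or $D(s^{\ast}(i)-1) > \pi' - h(i)$ at that moment. The only events that can occur after $i$ is processed are maximal shifts of tasks $j$ with initial starting edge at least that of $i$, hence with $s_j^{\textup{old}}\geq s^{\ast}(i)$ at the moment $j$ is shifted (since $j$ has not moved yet). When $j$ shifts from $s_j^{\textup{old}}$ to $s_j^{\textup{new}}$, the edges whose demand \emph{decreases} are exactly those in $[s_j^{\textup{old}}, s_j^{\textup{old}}+w(j)-1]\setminus[s_j^{\textup{new}}, s_j^{\textup{new}}+w(j)-1]$, all of which are $\geq s_j^{\textup{old}}\geq s^{\ast}(i)$. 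In particular, $D(s^{\ast}(i)-1)$ can only stay the same or increase, so $i$'s blocking condition is preserved and $i$ is still not single-shiftable. Applying this to every task yields that no single left-shift is possible in the final configuration, which is precisely the definition of a $\pi'$-left-pushing.

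Finally, for running time, each task is processed once; computing the maximum shift and updating the compact profile each cost $O(n)$, giving $O(n^{2})$ overall. The main obstacle in the argument is the potential worry that a later task's shift could \emph{free up} an edge to the left of an earlier task and thereby re-enable it; the ordering by initial starting edge is precisely what rules this out, because the decrease range of any subsequent shift is pushed to the right of $s^{\ast}(i)$. I would also remark that the argument relies only on the fact that $j$'s original path lies entirely in $[s^{\ast}(i),\infty)$, which is why it is crucial to process by initial (not current) starting edge and to never revisit a task.
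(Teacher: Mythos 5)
Your proposal is correct and is essentially the same algorithm as in the paper: order tasks by their initial starting edge (left to right), process them once in this order, and perform a maximal left-shift of each. The paper phrases this recursively (compute the left-pushing of $\cI\setminus\{n\}$ keeping task $n$ fixed, then shift task $n$), but unfolding that recursion yields precisely your single left-to-right pass. One small difference is that you also spell out why no earlier task becomes shiftable again — that the decrease range of any later shift is entirely to the right of an already-settled task — a point the paper asserts but does not argue in detail; your version therefore gives a slightly more complete correctness argument for the same construction.
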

\begin{proof}
    Let $1,...,n$ be the tasks sorted according to their starting edge in $P(\cdot)$ from left to right. Let $S_i$ be the starting edge of task $i$. First, inductively, we compute a $\pi'$-left-pushing of $\cI\setminus\{n\}$ and do not left-shift task $n$. Since we only left-shifted the tasks, the demand on the edges from $S_n$ to $e_W$ cannot increase. Thus, we reach a feasible solution such that its peak does not exceed $\pi'$. Now we compute the starting time of task $n$, $s^*$, if we left-shift this task as much as possible. Note that $s^*$ can only be either the leftmost edge or some edge $e$ such that some previous task finishes next to the left of $e$, as otherwise at least one more unit of left-shifting is possible for task $n$. Now, using this fact, we have at most $n$ possibilities for $s^*$ and we can compute this value in polynomial time. Note that if we call the obtained schedule as $P'(\cdot)$, then $P'(\cdot)$ is indeed a $\pi'$-left-pushing of $P(\cdot)$.     
\end{proof}

The following lemma summarizes the useful properties we can get when computing a left-pushing.

\begin{lemma}\label{lem:prop-left-push} Given a feasible schedule $P(\cdot)$ with peak $\pi$ for an instance $\cI$, the $\pi'$-left-pushing of $P(\cdot)$ for $\pi'\ge \pi$, let us say $P'(\cdot)$, satisfies the following properties:\begin{enumerate}
    \item There exists a node $t^*$ such that $P'(\cdot)$ is $(\pi'-h_{\max}(\cI), t^*)$-sorted, and
    \item every $i\in \cI$ has a starting edge in $\mathcal{E}'$ of the form $\sum_{j\in \cI'}{w(j)}$ for some $\cI' \subseteq \cI\setminus \{i\}$ ($0$ if $\cI'$ is empty).
\end{enumerate}\end{lemma}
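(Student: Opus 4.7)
\medskip

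The plan is to analyze the demand profile of $P'(\cdot)$ edge by edge, exploiting the fact that after the $\pi'$-left-pushing no single task can be left-shifted without creating an edge with demand exceeding $\pi'$. Let $d_k$ denote the demand of $P'(\cdot)$ on edge $e_k$, and identify the nodes of $G$ with indices $0,1,\dots,W$, so that a task $i$ starting at node $p_i$ occupies edges $e_{p_i+1},\dots,e_{p_i+w(i)}$. The key local observation, which I will use in both parts, is this: whenever a task $i$ has $p_i = k \geq 1$, the failure to left-shift $i$ (which would only change demands on $e_{k}$ and $e_{k+w(i)}$) yields $d_k + h(i) > \pi'$, and in particular $d_k > \pi' - h_{\max}(\cI)$.

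For property~1, I would take $t^*$ to be the node immediately preceding the rightmost edge with $d_{k} < d_{k+1}$ (the rightmost \emph{up-jump}), or the leftmost node of $G$ if no such edge exists. By construction, the profile is non-increasing on the edges to the right of $t^*$. To prove that every edge to the left of $t^*$ has demand at least $\pi'-h_{\max}(\cI)$, I would use reverse induction from $t^*$ leftwards: at each step, either the profile strictly increases moving to the right (so some task must start at that node and the local observation applies directly) or it does not (and the lower bound is inherited from the edge immediately to the right, already handled by induction).

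For property~2, I would first establish the stronger structural statement: if a task $i$ has $p_i > 0$, then some task $j \neq i$ ends exactly at node $p_i$. Letting $S$ and $E$ denote the total heights of tasks starting and ending at node $p_i$, respectively, the identity $d_{p_i+1} = d_{p_i} + S - E$, combined with $d_{p_i} + h(i) > \pi' \geq d_{p_i+1}$ from the local observation and $h(i) \leq S$, forces $E > 0$; since $i$ itself starts at $p_i$, any task ending there must be distinct from $i$. Iterating this produces a chain $i \to j_1 \to j_2 \to \cdots \to j_m$ in which each $j_\ell$ ends at the starting node of its predecessor. The chain must terminate because the starting positions $p_{j_1} > p_{j_2} > \cdots$ form a strictly decreasing sequence of nonnegative integers, and so the terminal task satisfies $p_{j_m} = 0$. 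Strict decrease also ensures that the $j_\ell$ are pairwise distinct and none equals $i$, so summing widths along the chain yields $p_i = \sum_{\ell=1}^m w(j_\ell)$, which is the desired representation with $\cI' = \{j_1,\dots,j_m\} \subseteq \cI \setminus \{i\}$ (taking $\cI' = \emptyset$ when $p_i = 0$).

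The one place I anticipate a mild subtlety is in property~2, where multiple tasks may simultaneously start or end at the same node; this is handled cleanly by working with the aggregated sums $S$ and $E$ rather than individual tasks, and by using the strict monotonicity of starting positions along the chain to rule out repetitions. The rest is bookkeeping driven by the single local inequality $d_{p_i} + h(i) > \pi'$.
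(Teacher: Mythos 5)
Your proof is correct and takes essentially the same approach as the paper: both parts hinge on the single observation that a blocked left-shift of a task starting at a node forces the demand on the preceding edge above $\pi'$ minus that task's height, Property~1 follows by choosing $t^*$ at the boundary of the non-increasing tail, and Property~2 follows by chasing finishing tasks leftward to node~$0$. Your chain construction for Property~2 is phrased forward rather than as the paper's contradiction via a leftmost violator, and it is somewhat more careful in arguing that the chain tasks are pairwise distinct and different from $i$ (the paper is a bit loose about whether the recursively obtained index set could contain $i$), but this is a refinement of the same argument rather than a genuinely different route.
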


\begin{proof}

We now show a proof of the two properties:\\

\noindent
\textbf{1.} Suppose that there exists a node $k$ such that the demand on the edge to the left of $k$ is smaller than $\pi'-h_{\max}(\cI)$ and the demand on the edge to the right of $k$ is larger than $\pi'-h_{\max}(\cI)$. This implies that some task starts at the edge to the right of $k$, but then it is possible to left-shift this task without surpassing the threshold of $\pi'$ which is a contradiction. At this point we know that there exists $k'$ such that every edge to the left of $k'$ has demand larger than $\pi'-h_{\max}(\cI)$, and let $k^*$ be the rightmost such node. Similarly to the previous case, if after $k^*$ there exists a node $k$ such that the load to the left of $k$ is smaller than the demand to the right of $k$, then again there must exist a task starting to the right of $k$ and, since their demands are at most $\pi'-h_{\max}(\cI)$, left-shifting such task does not violate the threshold of $\pi'$ which is a contradiction.\\

\noindent
\textbf{2.} Suppose there exists a task not satisfying the claim, and let $i$ be the leftmost such task in $P'(\cdot)$. It is easy to see that $i$ cannot start at the leftmost edge and also that the demand on the edge just to the left of $P'(i)$ is larger than $\pi'-h(i)$ as otherwise a left-shifting of $i$ is possible. Due to $i$ being the leftmost task, no task $i'$ can finish just to the left of $P'(i')$, as otherwise the number of edges before $P'(i)$ would be the sum of some widths in $\cI$ plus $w(i')$, thus fulfilling the claim for $i$. This implies that every task using the edge just to the left of $P'(i)$ must also use edge $P'(i)$. But then the total demand just to the left of $P'(i)$ would be at most the total demand on $P'(i)$ minus $h(i)$, which is at most $\pi'-h(i)$.
\end{proof}

We can now proceed with the proof of Lemma~\ref{lem:light-PTAS}, where at some point in the proof we will make use of the following concentration bound which was proved in~\cite{CCKR11}.

\begin{lemma}\label{lem:calinescu}[Calinescu et al.~\cite{CCKR11}] Let $X_1, X_2, \dots, X_n$ be independent random variables and let $0\le\beta_1, \beta_2,\dots,\beta_n\le 1$ be real numbers, where for each $i=1,2,\dots,n$, $X_i=\beta_i$ with probability $p_i$ and $X_i=0$ otherwise. Let $X=\sum_{i=1}^{n}{X_i}$ and $\mu=\mathbb{E}[X]$. Then

\begin{enumerate}
    \item The variance of $X$, $\sigma^2(X)$, is at most $\mu$, and
    \item For any $0<\lambda<\sqrt{\mu}$, $\mathbb{P}[X>\mu+\lambda \sqrt{\mu}] < e^{-\frac{\lambda^2}{2}(1-\lambda/\sqrt{\mu})}$.
\end{enumerate}

\end{lemma}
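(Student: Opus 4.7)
The plan is to treat the two parts independently: part~(1) by direct variance computation and part~(2) by a standard Chernoff-style moment-generating-function argument tailored to the fact that each $X_i$ takes the value $\beta_i\in[0,1]$ rather than $1$.

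For part~(1), I would first compute $\mathbb{E}[X_i]=p_i\beta_i$ and $\mathbb{E}[X_i^2]=p_i\beta_i^2$, so that $\mathrm{Var}(X_i)=p_i\beta_i^2(1-p_i)\le p_i\beta_i^2\le p_i\beta_i$, where the last inequality uses $\beta_i\in[0,1]$. By independence of the $X_i$'s, $\sigma^2(X)=\sum_i \mathrm{Var}(X_i)\le \sum_i p_i\beta_i=\mu$, which is the desired bound. This part is essentially a routine calculation; no obstacle is expected.

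For part~(2), the plan is the standard Chernoff machinery. For any $t>0$, by Markov applied to $e^{tX}$ we have $\mathbb{P}[X>a]\le e^{-ta}\,\mathbb{E}[e^{tX}]$. To bound the moment generating function I would use the convexity inequality $e^{t\beta_i}\le 1+\beta_i(e^t-1)$, valid for $\beta_i\in[0,1]$ (the function $x\mapsto e^{tx}$ lies below the chord joining $(0,1)$ and $(1,e^t)$ on $[0,1]$). This yields $\mathbb{E}[e^{tX_i}]=1+p_i(e^{t\beta_i}-1)\le 1+p_i\beta_i(e^t-1)\le \exp(p_i\beta_i(e^t-1))$, and by independence $\mathbb{E}[e^{tX}]\le \exp(\mu(e^t-1))$. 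Setting $a=\mu+\lambda\sqrt{\mu}=\mu(1+\delta)$ with $\delta:=\lambda/\sqrt{\mu}\in(0,1)$ and optimizing by $t=\ln(1+\delta)$ gives the classical bound
\[
\mathbb{P}[X>\mu(1+\delta)]\;\le\; \left(\frac{e^{\delta}}{(1+\delta)^{1+\delta}}\right)^{\!\mu}=\exp\!\bigl(-\mu\bigl[(1+\delta)\ln(1+\delta)-\delta\bigr]\bigr).
\]

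It remains to show that this exponent dominates the target $\mu\delta^{2}(1-\delta)/2=\lambda^{2}(1-\lambda/\sqrt{\mu})/2$; this is the one step I expect to be the main obstacle, since it is a purely analytic inequality that has to hold uniformly in $\delta\in(0,1)$. Concretely, I would prove $g(\delta):=(1+\delta)\ln(1+\delta)-\delta-\tfrac{1}{2}\delta^{2}(1-\delta)\ge 0$ on $[0,1)$. The idea is to expand $(1+\delta)\ln(1+\delta)-\delta=\sum_{k\ge 2}\frac{(-1)^{k}\delta^{k}}{k(k-1)}=\tfrac{\delta^{2}}{2}-\tfrac{\delta^{3}}{6}+\tfrac{\delta^{4}}{12}-\cdots$ and match it against $\tfrac{\delta^{2}}{2}-\tfrac{\delta^{3}}{2}$, reducing the claim to $\tfrac{\delta^{3}}{3}+\sum_{k\ge 4}\frac{(-1)^{k}\delta^{k}}{k(k-1)}\ge 0$; the alternating tail can be grouped in consecutive pairs to show each pair is non-negative for $\delta\in(0,1)$. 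Alternatively one can verify $g(0)=g'(0)=g''(0)=0$ and check that $g'''(\delta)=\tfrac{1}{(1+\delta)^{2}}+3\ge 0$, so $g$ is convex-increasing from $0$ and hence non-negative. Combining this inequality with the Chernoff bound displayed above yields the stated tail estimate, and a brief remark is needed that the strict inequality $\lambda<\sqrt{\mu}$ ensures $\delta<1$ so all the steps are valid.
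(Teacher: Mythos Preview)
The paper does not give its own proof of this lemma: it is quoted verbatim from Calinescu et al.\ and used as a black box, so there is no ``paper's proof'' to compare against. Your argument is the standard one and is correct in substance.

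One small slip: in your third-derivative check you write $g'''(\delta)=\tfrac{1}{(1+\delta)^{2}}+3$, but differentiating $g''(\delta)=\tfrac{1}{1+\delta}-1+3\delta$ gives $g'''(\delta)=-\tfrac{1}{(1+\delta)^{2}}+3$. The conclusion is unaffected since $-\tfrac{1}{(1+\delta)^{2}}+3\ge 2>0$ for $\delta\ge 0$, and then the chain $g(0)=g'(0)=g''(0)=0$ together with $g'''>0$ indeed forces $g\ge 0$ on $[0,1)$ (you should spell out that chain rather than say ``convex-increasing'', which $g'''\ge 0$ alone does not imply). The strict inequality in the tail bound follows because $g(\delta)>0$ strictly for $\delta\in(0,1)$ and $\mu>0$ (the hypothesis $0<\lambda<\sqrt{\mu}$ guarantees both).
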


\begin{proof}[Proof of Theorem~\ref{lem:light-PTAS}] 

Let $\delta>0$ be a constant that we will specify later. We will partition the tasks into two sets according to their widths: we will say that a task $i$ is \emph{horizontal} if $w(i)>\delta\cdot W$ and otherwise we will say it is \emph{narrow}. Consider by now only the horizontal tasks in $\cI$, and assume that the value $OPT$ is known. Thanks to Lemma~\ref{lem:prop-left-push}, by computing an $OPT$-left-pushing of the optimal solution, we know there exists a set $\mathcal{E}_H\subseteq E$ that can be computed in polynomial time such that the starting edge of every task belongs to $\mathcal{E}_H$. Indeed, edges in $\mathcal{E}_H$ correspond to the sum of widths of some horizontal tasks, implying that the number of widths in the sum must be at most $\frac{1}{\delta}$. Hence, all the possible starting edges are of the form $\displaystyle\sum_{i\in \cI'}{w(i)}$ where $|\cI'| \le \frac{1}{\delta}$. The set $\mathcal{E}_H$ consisting of these edges has size at most $n^{1/\varepsilon - 1}$ and can clearly be computed in polynomial time.

With the following integer program we can compute a feasible solution corresponding to a $OPT$-left-pushing of some scheduling for these tasks. We define a variable $x_{i,k}$ for each task $i$ and starting edge $k$ in the previously computed set $\mathcal{E}_H$ (if task $i$ cannot be scheduled starting at edge $k$ this variable is not considered): \[\begin{array}{rrcl} \min & \lambda & & \\ s.t. & \displaystyle\sum_{k\in \mathcal{E}_H}{x_{i,k}} & = & 1 \qquad \qquad \forall i \text{ horizontal} \\ & \displaystyle\sum_{i\text{ hor. }}{\displaystyle\sum_{k' \in \mathcal{E}_H(i,q)}{h(i) \cdot x_{i,k'}}} & \le & OPT  \qquad \forall q \in \mathcal{E}_H \\ & x_{i,k} & \in & \{0,1\} \qquad \forall i\text{ horizontal}, k \in \mathcal{E}_H, \end{array}\] where, given $i\in \cI$ and $q\in \{1,\dots,W\}$, $\mathcal{E}_H(i,e)$ is the set of edges $k\in \mathcal{E}_H$ such that, if $i$ has $k$ as starting edge, then it uses edge $e$. In other words, the second family of constraints is ensuring that the total demand of the constructed solution is at most $OPT$ in every edge (which can be done with polynomially many constraints thanks to the size of $\mathcal{E}_H$).

We will consider the canonical linear relaxation of the formulation, and let $\Vec{x}$ be an optimal solution to this LP (which can be computed in polynomial time). In order to derive a feasible solution we will use \emph{Randomized Rounding with Alterations}, a technique previously used in similar settings for Packing and Scheduling problems~\cite{CCKR11,MW15,AW15}. In a first stage, for each task $i$, we will sample one starting edge $k$ according to the probability distribution induced by $\{x_{i,k}\}_{k\in\mathcal{E}_H}$. Now, in a second stage, we scan the starting edges $k$ from left to right, and the sampled tasks $i$ starting at node $k$ according to the sample in any order, and we add $i$ to the current solution as long as the obtained peak is no more than $(1+\varepsilon)OPT$. Observe that this is a dependent rounding where each task $i$ is finally scheduled in the solution with marginal probability at most $x_{i,k}$.

Suppose we are applying the previous procedure, and let $k$ be a fixed edge in that order. Let $\tilde{X}_{i,k} \in \{0,1\}$ be equal to $1$ if and only if $i$ is scheduled starting at edge $k$ in the first stage, and similarly we define $\tilde{Y}_{i,k}$ to be $1$ if and only $i$ is scheduled starting at edge $k$ in the second stage. Notice that $\tilde{Y}_{i,k} \le \tilde{X}_{i,k}$ deterministically. By stochastic domination, we have that \[\mathbb{P}\left[ \displaystyle\sum_{i\text{ hor. }}{\tilde{Y}_{i,k}\cdot h(i)} > (1+\varepsilon)OPT\right] \le \mathbb{P}\left[ \displaystyle\sum_{i\text{ hor. }}{\tilde{X}_{i,k}\cdot h(i)} > (1+\varepsilon)OPT\right].\] To upper bound the latter quantity we will consider two cases:

\begin{itemize}
    \item If $\mu\le\frac{3}{4\delta}$, then we can use Chebyshev's inequality for the variable $Z:=\displaystyle\sum_{i\text{ hor. }}{\frac{\tilde{X}_{i,k}\cdot h(i)}{\delta OPT}}$ (notice that thanks to Lemma~\ref{lem:calinescu} it holds that $\sigma(Z) \le \sqrt{\mu}$), from where we obtain that \begin{eqnarray*} \mathbb{P}\left[ \displaystyle\sum_{i\text{ hor. }}{\tilde{X}_{i,k}\cdot h(i)} > (1+\varepsilon)OPT\right] & = & \mathbb{P}\left[Z>\frac{1+\varepsilon}{\delta}\right] \\ & \le & \mathbb{P}\left[|Z-\mu| > \left(\frac{1+\varepsilon}{\delta} - \frac{3}{4\delta}\right) \cdot \frac{\sigma(Z)}{\sqrt{\mu}}\right] \\ & \le & \frac{16\mu \delta^2}{(1+4\varepsilon)^2} \le \varepsilon
    \end{eqnarray*} for $\delta \le \frac{\varepsilon}{4}$.
    
    \item If $\mu>\frac{3}{4\delta}$, we first set $\lambda = \frac{1+\varepsilon - \mu\delta}{\delta \sqrt{\mu}}$ so that $\mu + \lambda\sqrt{\mu} = \frac{1+\varepsilon}{\delta}$. Notice that $\mu = \displaystyle\sum_{i\text{ hor. }}{\frac{x_{i,k}\cdot h(i)}{\delta OPT}} \le \frac{1}{\delta}$ due to the constraints in the LP.
    
    Now, it is not difficult to see that $\lambda$ is decreasing as a function of $\mu$, implying that $\lambda \ge \frac{1+4\varepsilon}{\sqrt{12\delta}}$. Furthermore, we have that $1-\frac{\lambda}{\sqrt{\mu}} = 2-\frac{1+\varepsilon}{\delta\mu} \ge \frac{2}{3}$, and thus also $\lambda<\sqrt{\mu}$. Now we can use Lemma~\ref{lem:calinescu} applied to the variables $\{X_{i,k} h(i)/(\delta OPT)\}_{i\text{ hor. }}$ and their sum $Z$ and obtain \begin{eqnarray*} \mathbb{P}\left[ \displaystyle\sum_{i \text{ hor. }}{\tilde{X}_{i,k}\cdot h(i)} > (1+\varepsilon)OPT\right] & = & \mathbb{P}\left[Z>\mu + \lambda \sqrt{\mu}\right] \\ & < & e^{-\frac{\lambda^2}{2}(1-\lambda/\sqrt{\mu})} \\ & < & e^{-\frac{2}{9} \frac{(1+4\varepsilon)^2}{12\delta}} \le \varepsilon
    \end{eqnarray*} for $\delta\le \frac{(1+4\varepsilon)^2}{54} \ln{\frac{1}{\varepsilon}}$.
\end{itemize}

This implies that we get a solution with peak at most $(1+\varepsilon)OPT$ and the probability that a task is not scheduled is at most $\varepsilon$. As a consequence, in expectation the total area of tasks that were not placed is at most $\varepsilon W \cdot OPT$, and hence using Markov's inequality we get that the probability that these tasks have area larger than $2\varepsilon W \cdot OPT$ is at most $\frac{1}{2}$. Thus, if the area of these tasks is at most $2\varepsilon W \cdot OPT$ and since their heights are at most $\delta \cdot OPT$, we can place them into a rectangular region of height $4\varepsilon OPT$ and width $W$ using Corollary~\ref{cor:2-apx}. If the area guarantee is not satisfied then we repeat the whole process to ensure it as, in expectation, a constant number of times only is required.

Now we will include the set $\mathcal{N}$ of narrow tasks into the solution by applying Lemma~\ref{lem:packing-narrow-Qtsorted} with parameter $\pi=(1+5\varepsilon)OPT$. Consider a $\pi$-left-pushing of the solution. Thanks to Lemma~\ref{lem:prop-left-push}, there exists a node $t^*$ such that the obtained schedule is $(\pi,t^*)$-sorted.
Furthermore, it is not difficult to see that $\pi\ge (1+4\varepsilon)\frac{a(\cI)}{W} + h_{\max}(\cI'')$ and $w_{\max}(\mathcal{N}) \le \varepsilon W \le \frac{4\eps}{2(4\eps+1)}$ for $\eps\le 1/4$, hence satisfying the requirements of the lemma. 
This way, we obtain a feasible scheduling with peak at most $(1 + 5\varepsilon)OPT$.

Finally, in order to avoid knowing the value of $OPT$, we can approximately guess it using any constant approximation (such as Corollary~\ref{cor:2-apx}) and define a constant number of candidates. \end{proof}






\section{\DSPS}
\label{sec:DSPS}

In this section we will discuss our algorithmic results for the special case of DSP restricted to square tasks, denoted as \DSPS. We will first provide a couple of useful known results plus a short discussion about \GSPS, and then we discuss the details of our main result.

A very useful technique to place rectangles into a region based almost only on their total area is Steinberg's algorithm~\cite{Steinberg97}, originally devised as an approximation algorithm for \GSP. The following theorem summarizes the required properties to obtain a packing, which we will use as subroutine in some of our results. Also from here it can be noticed that Steinberg's algorithm is a $2$-approximation for \DSP.

\begin{theorem}[Steinberg~\cite{Steinberg97}]\label{thm:Steinberg} Suppose we are given a rectangular region $B$ of height $h(B)$ and width $w(B)$ and a set of rectangles $\mathcal{R}$ such that $h_{\max}(\mathcal{R}) \le h(B)$, $w_{\max}(\mathcal{R}) \le w(B)$, and \[a(B) \ge 2a(\mathcal{R}) + (2h_{\max}(\mathcal{R}) - h(B))_+ (2w_{\max}(\mathcal{R}) - w(B))_+,\] where $(x)_+$ is $max(x,0)$ then it is possible to embed $\mathcal{R}$ into $B$ in polynomial time. \end{theorem}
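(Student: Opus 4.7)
The plan is to prove Steinberg's theorem by strong induction on $|\mathcal{R}|$, following Steinberg's original recursive strategy. The base cases $|\mathcal{R}| = 0$ and $|\mathcal{R}|=1$ are immediate from the hypotheses $h_{\max}(\mathcal{R}) \le h(B)$ and $w_{\max}(\mathcal{R}) \le w(B)$. For the inductive step, I would perform a case analysis based on how $h_{\max}(\mathcal{R})$ and $w_{\max}(\mathcal{R})$ compare to $h(B)/2$ and $w(B)/2$, since this determines whether the ``excess'' term $(2h_{\max}(\mathcal{R})-h(B))_+ (2w_{\max}(\mathcal{R})-w(B))_+$ is active.

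First I would handle the easy case where both $2h_{\max}(\mathcal{R}) \le h(B)$ and $2w_{\max}(\mathcal{R}) \le w(B)$, so that the hypothesis reduces to $a(B) \ge 2 a(\mathcal{R})$. Here all rectangles are at most half as tall (and at most half as wide) as $B$, and a standard shelf-based Next-Fit-Decreasing-Height argument, applied after sorting by non-increasing height, fits all rectangles into $B$ using only the factor-$2$ slack in area.

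The substantive case is when, without loss of generality, $2w_{\max}(\mathcal{R}) > w(B)$, i.e.\ there is a ``wide'' rectangle $r^{\ast}$ of maximum width. I would place $r^{\ast}$ flush against the bottom-left corner of $B$ and split the remaining region into two axis-aligned sub-boxes: $B_{\mathrm{top}}$, of width $w(r^{\ast})$ and height $h(B)-h(r^{\ast})$, sitting on top of $r^{\ast}$; and $B_{\mathrm{right}}$, of width $w(B)-w(r^{\ast})$ and full height $h(B)$. I then partition $\mathcal{R} \setminus \{r^{\ast}\}$ between the two sub-boxes, filling $B_{\mathrm{top}}$ greedily with rectangles (first those that are too wide to fit in $B_{\mathrm{right}}$, then others) up to total area at most $w(r^{\ast})(h(B)-h(r^{\ast}))/2$, and assigning the rest to $B_{\mathrm{right}}$. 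If instead $2h_{\max}(\mathcal{R}) > h(B)$, a symmetric peeling of a tallest rectangle works; when both inequalities hold simultaneously, I would peel one ``wide'' and one ``tall'' rectangle together into opposite corners and split the remaining L-shape into two rectangular sub-boxes.

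The main obstacle is verifying that the inductive hypothesis continues to hold for both sub-instances, i.e., that $a(B_{\bullet}) \ge 2 a(\mathcal{R}_{\bullet}) + (2h_{\max}(\mathcal{R}_{\bullet})-h(B_{\bullet}))_+ (2w_{\max}(\mathcal{R}_{\bullet})-w(B_{\bullet}))_+$ holds for $\bullet \in \{\mathrm{top},\mathrm{right}\}$ after the split. Checking this amounts to bookkeeping the area consumed by $r^{\ast}$ against the original excess term, together with a short sub-case analysis depending on whether the new $h_{\max}$ or $w_{\max}$ in each sub-instance still exceeds half of the corresponding sub-box dimension (which is where the definition of $r^{\ast}$ as \emph{widest} is essential, since it guarantees $w_{\max}(\mathcal{R} \setminus \{r^{\ast}\}) \le w(r^{\ast})$). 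Each sub-case reduces to an elementary algebraic inequality; the polynomial running time then follows because each recursive call strictly decreases $|\mathcal{R}|$ and the greedy split is computable in polynomial time.
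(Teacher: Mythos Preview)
The paper does not give its own proof of this statement: Theorem~\ref{thm:Steinberg} is quoted verbatim from Steinberg's paper and used only as a black box. So there is no ``paper's proof'' to compare against, and your proposal has to be judged on its own merits.

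Unfortunately, the sketch has two genuine gaps.

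\textbf{The ``easy case'' is not handled by NFDH.} When $2h_{\max}\le h(B)$ and $2w_{\max}\le w(B)$, the standard NFDH shelf analysis only gives that the packed height is at most $2a(\mathcal{R})/w(B)+h_{\max}$. Plugging in $a(\mathcal{R})\le a(B)/2$ and $h_{\max}\le h(B)/2$ yields a bound of $\tfrac{3}{2}h(B)$, not $h(B)$. The factor-$2$ area slack is \emph{not} enough for a plain shelf argument; Steinberg's own proof needs a recursive split even in this regime.

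\textbf{The greedy split does not preserve the invariant.} Take $B=10\times 10$, $r^\ast$ of width $9$ and height $\tfrac{1}{2}$, and two further rectangles of sizes $8\times 5$ and $2\times 2$. One checks $a(B)=100\ge 2\cdot 48.5=97$ and the excess term vanishes, so Steinberg's hypothesis holds. After peeling $r^\ast$, both remaining rectangles are too wide for $B_{\mathrm{right}}$ (width $1$) and must go to $B_{\mathrm{top}}$ (a $9\times 9.5$ box). But for $B_{\mathrm{top}}$ one needs
\[
9\cdot 9.5 \;\ge\; 2\cdot 44 + (2\cdot 5-9.5)_+\,(2\cdot 8-9)_+ \;=\; 88+3.5,
\]
i.e.\ $85.5\ge 91.5$, which fails. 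The two rectangles of course \emph{do} fit in $B_{\mathrm{top}}$, but your induction hypothesis no longer sees that, so the recursion breaks. This is precisely why Steinberg's actual proof uses several distinct reduction procedures (not just ``peel the widest'') together with a nontrivial argument that at least one of them is always applicable \emph{and} preserves the inequality; the verification you describe as ``elementary algebraic bookkeeping'' is in fact the heart of his paper.
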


The following two known results have been applied in the context of \GSP. We want to remark that the guarantees the results provide do not directly hold for the case of \DSP as they return non-overlapping embedding of rectangles in the plane.

\begin{theorem}[Bansal et al.~\cite{BCJPS09}]\label{thm:bansal} Given a rectangular region $B$ and a set of rectangles $\cI$ that can be embedded non-overlappingly into the region, it is possible to pack, for any $\varepsilon>0$, a set $\cI'\subseteq\cI$ into $B$ in polynomial time such that $a(\cI') \ge (1-\varepsilon)a(\cI)$. \end{theorem}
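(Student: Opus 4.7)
The plan is to prove this as a PTAS-style structural result in the spirit of modern rectangle-packing algorithms, combining size classification, a shifting argument for medium rectangles, and a container-based reduction. Without loss of generality rescale so that $B$ has unit dimensions. First I would classify rectangles according to thresholds $\eps' \gg \eps''$ (chosen via a Lemma~\ref{lem:mediumrectanglesarea}-style ``empty interval'' argument applied in both dimensions simultaneously) into: \emph{large} (both dimensions $\ge \eps'$), \emph{horizontal} (width $\ge \eps'$, height $\le \eps''$), \emph{vertical} (symmetric), \emph{small} (both dimensions $\le \eps''$), and two \emph{medium} groups. The shifting step guarantees that the total area of medium rectangles is at most $\eps \cdot a(\cI)$, so we may discard them.

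Next I would establish a structural lemma: inside the hypothetical feasible embedding, there is a container-based re-packing into $O_\eps(1)$ axis-aligned containers (each marked as ``large'', ``horizontal'', ``vertical'', or ``small'') whose union fits in $B$ and which loses at most $O(\eps)\cdot a(\cI)$ area. The argument partitions $B$ into an $O_\eps(1)$-cell non-uniform grid whose lines are drawn through the boundaries of large rectangles; each cell is then turned into one or a constant number of containers, with horizontal/vertical groups routed into containers that match their orientation. Medium bands absorb boundary losses. This is the main obstacle: controlling the number of containers while ensuring that (i) every large rectangle fits inside exactly one container, (ii) horizontal/vertical rectangles can be greedily stacked inside their containers via Next-Fit-Decreasing-Height with negligible wasted area, and (iii) the remaining free area is still large enough to absorb the smalls through an area lower bound.

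Once the existence of such a $O_\eps(1)$-container packing is known, the algorithmic side is comparatively routine. I would enumerate, in polynomial time, all container configurations, where the coordinates of containers come from a polynomial-size set of ``canonical'' positions (sums of $O_\eps(1)$ rectangle widths/heights, as in Lemma~\ref{lem:prop-left-push}) and the dimensions are rounded to multiples of $\eps/K$. For each configuration, I would assign large rectangles by brute force (there are only $O_\eps(1)$ of them), pack horizontals and verticals into their containers via the GAP PTAS of Lemma~\ref{lem:containersPackPTAS} (losing at most $\eps \cdot a(\cI)$), and pack the small rectangles into the leftover ``small'' containers using Steinberg's algorithm (Theorem~\ref{thm:Steinberg}), which is applicable because smalls have dimensions $\le \eps''$ much below the container sides and because the total free area dominates their total area up to lower-order terms.

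Finally I would combine the losses: at most $\eps\cdot a(\cI)$ from discarding mediums, at most $\eps\cdot a(\cI)$ from the structural rounding, at most $\eps\cdot a(\cI)$ from the GAP step, and at most $\eps\cdot a(\cI)$ from the Steinberg step for smalls. After rescaling $\eps$, we obtain a packed subset $\cI'\subseteq \cI$ with $a(\cI')\ge(1-\eps)a(\cI)$. The chief technical difficulty remains the structural lemma: all subsequent steps are black-box or standard, but producing an $O_\eps(1)$-container ``skeleton'' of the unknown embedding without blowing up the loss — and in particular handling rectangles that straddle many grid cells in the original embedding — requires a careful fractional rearrangement argument with two rounds of shifting (one per axis).
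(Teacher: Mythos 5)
You have sketched a route toward Theorem~\ref{thm:bansal}, but note that the paper you are reading does not prove this theorem at all: it is stated with the attribution \emph{Bansal et al.}~\cite{BCJPS09} and invoked as an external black box (e.g.\ in Observation~\ref{obs:3-2-SP} and, adapted to the demand setting, in the proof of Lemma~\ref{lem:Bansal-PEC}). There is therefore no internal proof against which to compare your attempt; the ``ground truth'' lives in the cited reference.

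Taken on its own merits, your outline does name the right ingredients for the Bansal--Caprara--Jansen--Pr\"adel--Sviridenko result: a two-dimensional size classification, a shifting (empty-interval) argument discarding a negligible medium class, a structural lemma replacing the unknown feasible embedding by an $O_\eps(1)$-container skeleton, enumeration of canonical container coordinates, a GAP PTAS for assigning large/horizontal/vertical rectangles, and an area-based (Steinberg or NFDH) argument for the small rectangles. These are essentially the steps that do appear in~\cite{BCJPS09}. However, the proposal explicitly punts on the structural lemma, and that lemma \emph{is} the theorem: everything else you list is indeed routine once it is available. The difficulty you flag at the end --- rectangles straddling many cells of the non-uniform grid drawn through the boundaries of large rectangles --- is exactly the obstruction that the reference overcomes with a careful two-phase rounding/rearrangement of the unknown packing, and your proposal neither supplies that argument nor reduces it to a lemma you can cite. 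As written, the attempt reduces Theorem~\ref{thm:bansal} to an acknowledged open step and so does not constitute a proof; if your goal was to show how the result is \emph{used} in this paper, the right observation is simply that the paper cites it without proof, and instead proves a DSP analogue from scratch in Lemma~\ref{lem:Bansal-PEC}.
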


\begin{theorem}[G\'{a}lvez et al.~\cite{GGJJKR20}]\label{thm:skewed-SP} For any $\varepsilon>0$, there exists $\delta>0$ such that it is possible to compute in polynomial time a $(3/2+\varepsilon)$-approximate solution for any instance of \GSP satisfying that no rectangle has height larger than $\delta OPT$ and width larger than $\delta W$. \end{theorem}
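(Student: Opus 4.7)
The plan is to combine a structural result specific to skewed instances with container-based enumeration and an LP-based assignment. First, I would guess $OPT$ within a factor $1+\varepsilon$ (losing only an extra factor in the approximation) and then classify each rectangle as \emph{narrow} if $w(i)\le \delta W$ or \emph{horizontal} if $h(i)\le \delta OPT$; by the skewed hypothesis every rectangle belongs to at least one class (ties broken arbitrarily). I would further subdivide narrow rectangles into \emph{tall} (height $>OPT/2$) and \emph{short-narrow} (height $\le OPT/2$), noting that by the skewed assumption every tall rectangle is automatically narrow, and symmetrically every very wide rectangle is automatically short.

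Next I would prove the key structural lemma: there is a packing of height at most $(3/2+O(\varepsilon))OPT$ with the following structured form. The tall narrow rectangles are packed first, in non-increasing order of height, flush against the left side of the strip; since they are narrow their ``staircase'' top profile admits a linear grouping into $O_\varepsilon(1)$ height levels, producing $O_\varepsilon(1)$ vertical containers. The remaining region, namely the complement of this staircase inside the strip of height $(3/2+\varepsilon)OPT$, is then carved into $O_\varepsilon(1)$ horizontal and vertical containers whose rounded dimensions come from a polynomial-size discrete set (hence ``guessable''). The heart of the $3/2$ factor is the observation that, since no rectangle is simultaneously tall and wide, once the tall rectangles are combed to one side every remaining rectangle either fits in the shadow of the staircase (to the right of the tall rectangles, up to height $OPT/2$) or in the band of height $(1/2+O(\varepsilon))OPT$ sitting on top of the tall rectangles; the area-based lower bound $a(\mathcal{I})\le OPT\cdot W$ together with Steinberg's theorem (Theorem~\ref{thm:Steinberg}) ensures this band is wide enough.

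With the structural lemma in hand, the algorithm enumerates all $O_\varepsilon(1)$ container configurations (polynomially many) and, for each, uses the assignment subroutine of Lemma~\ref{lem:containersPackPTAS}, or its geometric analogue (Theorem~\ref{thm:bansal}), to pack rectangles into containers while losing only an $\varepsilon'$ fraction of the total area. Because every rectangle has one small dimension, the leftover rectangles have small area and small height/width simultaneously in the appropriate sense, so Theorem~\ref{thm:Steinberg} packs them into an auxiliary shelf of height $O(\varepsilon OPT)$ spanning the full width; this adds only an $O(\varepsilon)OPT$ term to the final height. Returning the best configuration over all guesses yields the claimed $(3/2+\varepsilon)$-approximation after rescaling $\varepsilon$.

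The main obstacle is the structural lemma itself: converting an arbitrary optimal packing of height $OPT$ into a container-based packing of height $(3/2+\varepsilon)OPT$. This requires a careful case analysis on the horizontal positions of the tall narrow rectangles in the optimum and on how short-and-wide rectangles interleave with them, together with a ``shifting'' argument that reroutes rectangles crossing the new container boundaries into the extra $(1/2+\varepsilon)OPT$ of vertical slack. The critical use of the skewed assumption is that any rectangle obstructing the combing of tall rectangles to one side must itself be short (height $\le \delta OPT$), and hence can be displaced vertically by at most $\delta OPT$, so the total additive cost of all displacements is $O(\varepsilon)OPT$ rather than a constant fraction of $OPT$.
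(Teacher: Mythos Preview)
The paper does not prove this theorem; it is quoted as a known result from \cite{GGJJKR20} and invoked as a black box in Observation~\ref{obs:3-2-SP}. There is therefore no proof in the present paper to compare your attempt against.

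As for the attempt itself: your outline has the right architecture (guess $OPT$, classify by the skewed hypothesis, comb the tall narrow rectangles to the left, build $O_\varepsilon(1)$ guessable containers, assign via GAP, mop up with Steinberg), but the step you correctly flag as ``the main obstacle''---the structural lemma---is essentially the entire content of the theorem, and your sketch of it does not close. Your claim is that after combing the tall rectangles to the left, every remaining rectangle fits either in ``the shadow to the right, up to height $OPT/2$'' or in ``the band of height $(1/2+O(\varepsilon))OPT$ on top,'' and that Steinberg's theorem certifies this. But the region above the tall staircase is not a rectangle, so Theorem~\ref{thm:Steinberg} does not apply to it directly; and if you instead take only the rectangular band $[0,W]\times[OPT,(3/2+\varepsilon)OPT]$, its area is $(1/2+\varepsilon)OPT\cdot W$ while the remaining items can have area close to $OPT\cdot W/2$ (e.g.\ when the tall rectangles have heights just above $OPT/2$ and total width close to $W$), so Steinberg's $2a(\mathcal{R})\le a(B)$ condition fails by a factor of two. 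The missing ingredient is a genuine repacking argument that also exploits the space above the shorter tall rectangles and to the right of the staircase simultaneously, together with the fact that the non-tall rectangles are all either very short or very narrow and can therefore be fractionally sliced and regrouped with only $O(\varepsilon)OPT$ loss---closer in spirit to the horizontal-container construction of Lemma~\ref{lem:container-horizontal} here than to a single Steinberg call.
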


These two known results allow to obtain an almost tight $(3/2+\varepsilon)$-approximation for \GSPS.

\begin{observation}\label{obs:3-2-SP} For any $\varepsilon>0$, there exists a $(3/2+\varepsilon)$-approximation for \GSPS. \end{observation}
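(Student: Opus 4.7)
The plan is as follows. Let $\delta = \delta(\eps/4) > 0$ be the constant yielded by Theorem~\ref{thm:skewed-SP} for error parameter $\eps/4$. I would first guess $OPT$ approximately (at an additional $(1+\eps)$ factor) and then split the argument according to the relative sizes of $W$ and $OPT$.

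In the \emph{wide strip} case $W \geq OPT/\delta$, every square has side length at most $OPT \leq \delta W$, so no square can be wider than $\delta W$, and in particular no square is simultaneously taller than $\delta OPT$ and wider than $\delta W$. In this regime, Theorem~\ref{thm:skewed-SP} applies directly to the full instance and produces the desired $(3/2+\eps)$-approximate packing.

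In the remaining \emph{narrow strip} case $W < OPT/\delta$, I would classify squares as \emph{big} (side $> \delta OPT$) or \emph{small} otherwise. Each big square has area greater than $\delta^2 OPT^2 \geq \delta^3 \cdot OPT \cdot W$, so there are only $O_\eps(1)$ big squares, and the plan is to enumerate a rounded placement of these within the target strip of dimensions $W \times (3/2+\eps) OPT$ and then fill in the small squares using the two imported theorems. Concretely, I would round the bottom-left corners of the big squares in an (unknown) optimal packing to a grid of cell size $\eps OPT$; since $W \leq OPT/\delta$ and there are only $O_\eps(1)$ big squares, only $O_\eps(1)$ configurations need to be tried. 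For the correct guess, a vertical expansion by at most $\eps OPT$ suffices to place the rounded big squares without overlap inside a sub-strip of height $(1+\eps)OPT$, and the complement of the big squares in this sub-strip decomposes into $O_\eps(1)$ free axis-aligned rectangles into which, up to a symmetric perturbation, the optimum placement of the small-square family can be embedded.

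Finally, I would apply Theorem~\ref{thm:bansal} with parameter $\eps$ to each of these free rectangles, packing all but an $\eps$-fraction (by area) of the small squares destined for that region. The leftover small squares have total area at most $\eps \cdot OPT \cdot W$ and side length at most $\delta OPT \leq \eps OPT$, so by Steinberg's algorithm (Theorem~\ref{thm:Steinberg}) they can be placed into an extra rectangle of width $W$ and height $O(\eps) OPT$ stacked on top of the main packing; rescaling $\eps$ yields the claimed approximation factor. The hard part will be the perturbation argument in the narrow-strip case: justifying that, after rounding the big squares to the enumeration grid, the small squares can still be embedded (up to a negligible area loss) into the resulting free rectangles. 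This is a standard but somewhat delicate structural argument reminiscent of container-based packing techniques, and no fundamentally new idea is needed beyond the two imported theorems.
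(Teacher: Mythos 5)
Your wide-strip case ($W \ge OPT/\delta$) is identical to the paper's: every square then has side at most $\delta W$, so Theorem~\ref{thm:skewed-SP} applies directly. The narrow-strip case ($W < OPT/\delta$) is where you diverge, and where your argument has a gap. The paper handles this case in one short step: apply Theorem~\ref{thm:bansal} to the \emph{entire} instance inside a $W\times OPT$ box with error parameter $\varepsilon'=\varepsilon^2\delta$. The unpacked leftover then has area at most $\varepsilon^2\delta\cdot W\cdot OPT < \varepsilon^2 OPT^2$, and here is the crucial observation you miss: since the items are \emph{squares}, any single leftover square has side less than $\varepsilon\cdot OPT$. Thus Steinberg's theorem packs all leftovers into an extra $W\times 2\varepsilon OPT$ strip on top, and there is no need for any big/small classification, grid enumeration, or perturbation argument whatsoever.

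Your proposed narrow-strip route --- classify squares as big/small, enumerate rounded placements of the $O_\varepsilon(1)$ big squares on an $\varepsilon OPT$-grid, decompose the complement into free rectangles, and apply Theorem~\ref{thm:bansal} to the small squares inside each free rectangle --- is not merely longer; it contains a genuine unaddressed gap, which you yourself flag. The step asserting that ``a vertical expansion by at most $\varepsilon OPT$ suffices to place the rounded big squares without overlap'' and that ``the small squares can still be embedded into the resulting free rectangles'' is a nontrivial structural lemma: snapping several stacked big squares to a vertical grid can cascade overlaps whose removal costs more than a single $\varepsilon OPT$ of extra height, and re-embedding the small squares after the big squares have moved requires a careful shifting/area-charging argument of the kind that occupies real space in container-based packing proofs. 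Neither Theorem~\ref{thm:bansal} nor Theorem~\ref{thm:skewed-SP} gives you this for free, so ``no fundamentally new idea is needed beyond the two imported theorems'' is not justified as written. The simpler observation that tiny total leftover area forces tiny side-length for squares renders this entire machinery unnecessary, and is the intended proof.
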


\begin{proof} Let $\delta$ be the constant from Theorem~\ref{thm:skewed-SP}. We distinguish two cases:

\begin{itemize}
    \item if $W<\frac{1}{\delta} \cdot OPT$, then using Theorem~\ref{thm:bansal} with parameter $\varepsilon' = \varepsilon^2 \delta$ we can pack all the rectangles but a subset $A$ with total area of at most $(\varepsilon^2 \delta)\cdot W \cdot OPT$ in polynomial time into a rectangular region of width $W$ and height $OPT$.
    Since $({\varepsilon^2\delta})\cdot W \cdot OPT<\varepsilon^2 OPT^2$, any square in $A$ has height and width at most $\varepsilon\cdot OPT$. Now one can pack all the rectangles in $A$ into a rectangular region of width $W$ and height $2\varepsilon\cdot OPT$ using Steinberg's algorithm (Theorem~\ref{thm:Steinberg}). 
    \item If $W\ge \frac{1}{\delta} \cdot OPT$, since every square has width at most $\delta W$, we can use Theorem~\ref{thm:skewed-SP} to get a $(3/2+\varepsilon)$-approximation in this case.
\end{itemize}
\end{proof}

\subsection{Proof of Theorem~\ref{thm:squareDSP}}


We now present a $3/2$-approximation for \DSPS, that actually works in the more general framework of \emph{bounded aspect ratio instances}: for \afr{a constant} $\beta\ge 1$, an instance of \DSP has aspect ratio at most $\beta$ if, for each task $i$, it holds that $h(i) \le w(i) \le \alpha h(i)$\footnote{In the literature it is usually defined an instance as having aspect ratio at most $\alpha$ if, for each task $i$, it holds that $\frac{1}{\alpha} \le \frac{h_i}{w_i} \le \alpha$. It is not difficult to verify that the two definitions are equivalent by appropriately scaling the widths.} (in particular square tasks have aspect ratio $1$). 

\begin{theorem}[Restatement of Theorem~\ref{thm:squareDSP}]\label{thm:3-2-PEC-squares} Given $\beta>0$ constant, there exists a $3/2$-approximation for \DSP with aspect ratio at most $\beta$. \end{theorem}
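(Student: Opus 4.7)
The plan is to adapt the template of the $2$-approximation (Corollary~\ref{cor:2-apx}) by placing tall tasks first and then invoking Lemma~\ref{lem:packing-narrow-sorted} with target peak $\pi=(3/2)\,OPT$, exploiting the aspect-ratio constraint to tighten the bookkeeping. Let $\cT := \{i\in\cI : h(i) > OPT/2\}$. As in the proof of Lemma~\ref{lem:structural_5_3}, two tall tasks sharing an edge would produce demand $> OPT$, so their paths in any optimal schedule are pairwise edge-disjoint and hence $w(\cT)\le W$. First I would place $\cT$ on the leftmost edges in non-increasing height order, obtaining a sorted partial schedule of peak $h_{\max}(\cT)\le OPT\le (3/2)\,OPT$.

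Next I would invoke Lemma~\ref{lem:packing-narrow-sorted} to schedule the remaining non-tall tasks $\cN:=\cI\setminus\cT$. Hypothesis~(1) follows from $h_{\max}(\cN)\le OPT/2$ and the area lower bound $a(\cI)/W\le OPT$ of Proposition~\ref{prop:LB_OPT}, since $(3/2)\,OPT \ge OPT/2 + \max\{a(\cI)/W, OPT/2\}$. The aspect-ratio bound $w(i)\le\beta\,h(i)\le\beta\,OPT/2$ on non-tall tasks reduces hypothesis~(2) to $W\ge\beta\,OPT$ and hypothesis~(3) to the computation
\[
A(h_{\max}(\cN),w_{\max}(\cN)) \ge (W-\beta\,OPT/2)\,OPT + (\beta\,OPT/2)(OPT/2) = W\cdot OPT - \beta\,OPT^2/4 \ge a(\cI),
\]
which uses $a(\cI)\le W\cdot OPT$ together with the slack afforded by $W\ge\beta\,OPT$. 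Crucially, no additive $(1+\eps)$ loss is incurred anywhere, which is what makes the tight factor $3/2$ (rather than $3/2+\eps$) attainable.

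The hard part will be the boundary regimes in which the conditions above degrade: the narrow-strip case $W<\beta\,OPT$, and the dense case where $a(\cI)$ approaches $W\cdot OPT$. In the narrow-strip case, each tall task satisfies $w(i)>OPT/2$ and $w(\cT)\le W<\beta\,OPT$, so $|\cT|<2\beta=O_\beta(1)$; similarly the set of tasks with $w>W/2$ has $O_\beta(1)$ representatives by Proposition~\ref{prop:LB_OPT} (each has height $>W/(2\beta)$ and their total height is bounded by $OPT$). A polynomial-time enumeration over the placements of these $O_\beta(1)$ wide tasks at the leftmost edges, followed by an application of Lemma~\ref{lem:packing-narrow-sorted} to the narrow residual (for which hypothesis~(2) now holds), then gives the claimed bound. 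For the dense case, the near-tightness of $a(\cI)/W$ as a lower bound on $OPT$ forces enough structure on the instance to pack the residual tasks directly via Steinberg's algorithm (Theorem~\ref{thm:Steinberg}) applied to a carefully chosen sub-instance. Since all parameters are integral, $OPT$ can be binary-searched exactly, yielding a clean deterministic $3/2$-approximation.
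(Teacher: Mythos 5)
Your plan correctly identifies the two obstacles (narrow strip; dense instance), but neither is actually resolved by the proposal, and a third issue hides in the choice of threshold. The paper's proof takes a structurally different route precisely to get around these.

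The central problem is the choice of cutoff $OPT/2$ for the tall set $\cT$. It does guarantee $w(\cT)\le W$, which is convenient, but it leaves \emph{no} slack in hypothesis~(3) of Lemma~\ref{lem:packing-narrow-sorted}. With $h_{\max}(\cN)\le OPT/2$ and $\pi=\tfrac32 OPT$ you get $\pi-h_{\max}(\cN)\ge OPT$ with equality possible, and then
\[
(W-w_{\max}(\cN))(\pi-h_{\max}(\cN))+w_{\max}(\cN)\,h_{\max}(\cN)
\ \le\ W\cdot OPT - w_{\max}(\cN)\bigl(OPT-h_{\max}(\cN)\bigr)
\ \le\ W\cdot OPT - \tfrac12\,w_{\max}(\cN)\cdot OPT,
\]
which is strictly below $W\cdot OPT$ and hence below $a(\cI)$ whenever the instance is close to area-tight. (Concretely, with $\beta=1$, $W=2\,OPT$, and all tasks unit squares of side exactly $OPT/2$ tiling the strip, one has $a(\cI)=W\cdot OPT$, $h_{\max}(\cN)=w_{\max}(\cN)=OPT/2$, and hypothesis~(3) evaluates to $W\cdot OPT-OPT^2/4<a(\cI)$.) This is not just a lower-bound artifact you can wave away: the lemma genuinely gives no guarantee here. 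The paper avoids this by pushing the threshold down to $0.49\,OPT$ so that $h_{\max}(\cI')\le 0.49\,OPT$, giving $\pi-h_{\max}(\cI')\ge (1.01-1.5\varepsilon)\,OPT$; the resulting factor $(1-0.0049)(1.01-1.5\varepsilon)>1$ is exactly the slack that makes hypothesis~(3) go through even when $a(\cI)=W\cdot OPT$. The price is that the ``almost-tall'' tasks no longer fit in one row, which forces a separate case ($\sum_{i\le i_1}w(i)>1.8W$) with a nontrivial two-row argument showing peak at most $OPT$, followed by Steinberg's algorithm for what remains. Your ``dense case handled by Steinberg'' sentence gestures at this but supplies no argument; in fact the dense case is not a degenerate corner, it's where the whole difficulty lives.

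The narrow-strip case is also not settled. You bound $|\cT|<2\beta$ and, for tasks with $w>W/2$, you bound the count by $2\beta\,OPT/W$. The latter is $O_\beta(1)$ only if $W=\Omega(OPT)$, but the regime $W<\beta\,OPT$ you introduce this argument for allows $W\ll OPT$, in which case there can be super-constantly many tasks of width greater than $W/2$, and enumeration breaks. The paper handles $W\le 100\beta\,OPT$ by an entirely different mechanism: Lemma~\ref{lem:Bansal-PEC}, a near-optimal container-type packing that places all but an $\varepsilon$-area fraction with peak $(1+\varepsilon)OPT$; the aspect-ratio assumption then bounds the heights of the leftover tasks so a Steinberg layer of height $O(\varepsilon\sqrt{\beta})\,OPT$ suffices. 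This lemma is the real content of the case and is not something one can replace with an $O_\beta(1)$ enumeration. To repair your proof you would need either to adopt the paper's $0.49\,OPT$ threshold and work out the $>1.8W$ case, or to prove a Lemma~\ref{lem:Bansal-PEC}-style near-optimal packing result for the $W=O_\beta(OPT)$ regime; as written, both of the boundary cases you flag as ``the hard part'' remain open in your argument.
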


We will make use of the following result which is similar in spirit to Theorem~\ref{thm:bansal} but for the case of \DSP, which we will prove later (see Section~\ref{sec:Bansal-DSP}).

\begin{lemma}\label{lem:Bansal-PEC} Given $\varepsilon>0$ and an instance $\cI$ of \DSP with optimal peak $OPT$, it is possible to partition $\cI$ into two sets $\cI', \cI''$ such that \begin{itemize}
    \item $a(\cI'')\le \varepsilon\cdot W \cdot OPT$, and
    \item It is possible to compute in polynomial time a schedule of peak $(1+\varepsilon)OPT$ for $\cI'$.
\end{itemize} \end{lemma}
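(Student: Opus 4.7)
The plan is to extend the LP-based PTAS of Theorem~\ref{lem:light-PTAS} to instances with arbitrary task heights, exploiting the new flexibility to discard tasks of total area up to $\varepsilon W \cdot OPT$. First, I would fix a constant $\delta = \delta(\varepsilon) > 0$ sufficiently small and classify each task as \emph{wide} ($w(i) > \delta W$) or \emph{narrow} ($w(i) \le \delta W$), and independently as \emph{tall} ($h(i) > \delta OPT$) or \emph{short}. An area argument shows that there are at most $1/\delta^2 = O_\varepsilon(1)$ tall-wide tasks, since each has area exceeding $\delta^2 W\cdot OPT$ while their total area is at most $W\cdot OPT$. By Lemma~\ref{lem:prop-left-push}, in an $OPT$-left-pushing of the optimal schedule each task's starting edge is a sum of widths of some subset of tasks, and restricting to wide tasks gives at most $n^{1/\delta}$ candidate edges per task, so the whole tall-wide placement can be enumerated in $n^{O_\varepsilon(1)}$ time.

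For each fixed placement of tall-wide tasks, I would then set up the LP from the proof of Theorem~\ref{lem:light-PTAS} for the short-wide tasks with the peak budget $OPT$ reduced by the tall-wide profile, and round it via randomized rounding with alteration. Because all short-wide tasks have $h(i) \le \delta OPT$, the concentration bounds of Lemma~\ref{lem:calinescu} apply in the same way as in Theorem~\ref{lem:light-PTAS} and yield a schedule of peak $(1+\varepsilon)OPT$ whose expected dropped area is $O(\varepsilon W \cdot OPT)$; a constant number of repetitions combined with Markov's inequality converts this into a deterministic polynomial-time guarantee. Short-narrow tasks are then added on top by performing a $(1+\varepsilon)OPT$-left-pushing (Lemma~\ref{lem:leftPushingPolyTime}) and invoking Lemma~\ref{lem:packing-narrow-Qtsorted}; its preconditions are satisfied provided $\delta$ is chosen small enough with respect to $\varepsilon$, so that $(1+\varepsilon)OPT \ge (1+\alpha)OPT + \delta OPT$ and $\delta W \le \frac{\alpha}{2(\alpha+1)}W$ for a suitable intermediate $\alpha \in (0,\varepsilon)$.

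The main obstacle is handling the \emph{tall-narrow} tasks: they cannot be enumerated (there can be $\omega(1)$ of them), and Lemma~\ref{lem:packing-narrow-Qtsorted} is not directly applicable since their heights exceed $\delta OPT$, which would force the required peak to be well above $(1+\varepsilon)OPT$. I would address this via a dynamic programming step in combination with a preprocessing rounding phase. Using a linear grouping argument analogous to Lemma~\ref{lem:mediumrectanglesarea}, one can round the heights and widths of tall-narrow tasks to $O_\varepsilon(1)$ distinct values while discarding tasks of total area at most $\varepsilon W \cdot OPT$ (placed into $\cI''$). After this rounding, at any edge at most $(1+\varepsilon)/\delta = O_\varepsilon(1)$ tall-narrow tasks can be simultaneously active under the peak bound $(1+\varepsilon)OPT$. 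A DP whose state encodes the current edge together with the multiset of at most $O_\varepsilon(1)$ ongoing tall-narrow tasks (characterized by rounded width and height, and by remaining width rounded to $O_\varepsilon(1)$ values) thus has polynomial size and can be used to integrate the tall-narrow tasks with the schedule produced in the previous steps, keeping the overall peak at $(1+\varepsilon)OPT$. Combining the dropped area from the rounding, the LP alteration, and any narrow tasks that cannot be accommodated yields a total discarded area of $O(\varepsilon W \cdot OPT)$, and rescaling $\varepsilon$ at the end gives the claimed bound.
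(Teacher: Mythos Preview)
Your approach is quite different from the paper's, which is entirely container-based: it discretises the optimal wide-task profile to $O_\varepsilon(1)$ steps (Lemma~\ref{lem:discretize_curve}), adds the constantly many big tasks on top, and observes that the complement of this profile decomposes into $O_\varepsilon(1)$ boxes into which the long (tall-narrow) tasks fit; linear grouping inside those boxes (Lemma~\ref{lem:mainLinearGrouping}) yields $O_\varepsilon(1)$ guessable vertical containers, the whole container family is filled via the GAP PTAS of Lemma~\ref{lem:containersPackPTAS}, and tiny tasks are added greedily at the end. Crucially, the containers are derived \emph{existentially} from a fixed optimal schedule, so a feasible filling is guaranteed.

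Your proposal has a genuine gap precisely at the tall-narrow step. You schedule short-wide tasks via the LP and short-narrow tasks via Lemma~\ref{lem:packing-narrow-Qtsorted} using only the tall-wide profile as a constraint, and only afterwards attempt to insert the tall-narrow tasks by DP on top of the already-computed schedule. But the LP is free to return \emph{any} short-wide schedule satisfying short-wide${}+{}$tall-wide $\le OPT$ per edge; it is under no obligation to leave room where $OPT$ places tall-narrow tasks. Concretely, if $OPT$ has a tall-narrow task of height $0.9\,OPT$ at some edge $e$ (so $OPT$'s wide profile at $e$ is at most $0.1\,OPT$), the LP may still put short-wide load $0.9\,OPT$ at $e$, after which no DP over tall-narrow placements can accommodate that task under peak $(1+\varepsilon)OPT$. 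In short, the DP has no feasibility witness: nothing ties the background profile you hand it to the structure of $OPT$. Reordering the steps does not fix this either, since a DP-chosen tall-narrow placement need not coincide with $OPT$'s, and then the short-wide LP with budget $OPT$ minus that profile may be infeasible. A secondary issue is that invoking Lemma~\ref{lem:packing-narrow-Qtsorted} requires a $((1+\alpha)OPT,t^*)$-sorted partial schedule; as in the proof of Theorem~\ref{thr:53} this relies on \emph{not} left-shifting tall tasks and on those being arranged monotonically by height, whereas your enumerated tall-wide placement carries no such guarantee, so Lemma~\ref{lem:prop-left-push} only yields $((1+\varepsilon)OPT-h_{\max},t^*)$-sortedness with $h_{\max}$ possibly close to $OPT$.
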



\begin{proof}[Proof of Theorem~\ref{thm:3-2-PEC-squares}] Let $\cI=\{1, 2,\dots, n\}$ be an instance of \DSPS, where the tasks are sorted non-increasingly by height, having optimal peak $OPT$. We will distinguish two cases depending on the relation between $W$ and $OPT$.

If $W\le 100 \beta \cdot OPT$, then we can apply Lemma~\ref{lem:Bansal-PEC} with parameter $\varepsilon^2/100$, $\varepsilon\le \frac{1}{4\beta}$, hence obtaining a feasible schedule of peak at most $(1+\varepsilon^2/100) OPT$ for almost the whole instance except for a subset of tasks of total area at most $\frac{\varepsilon^2}{100} W \cdot OPT$. Since $W\le 100\beta \cdot OPT$, no such task can have height larger than $\varepsilon\sqrt{\beta} OPT$, and thus we can place all these tasks into an extra rectangular region of width $W$ and height $2\varepsilon \sqrt{\beta}OPT$ using Steinberg's algorithm (Theorem~\ref{thm:Steinberg}), and place this region on top of the current solution to obtain a schedule of peak at most $(1+3\varepsilon\sqrt{\beta})OPT$.

Consider now the case when $W>100\cdot\beta OPT$. Let $i_1$ be the largest index such that $h(i_1)>0.49\cdot OPT$, and let $i_2$ be the largest index such that $\sum_{i=1}^{i_2}w(i)<W$. Note that for every item $i$ such that $i>i_2$, $h(i)\le OPT/2$. We consider the following two possibilities:\\

\noindent
\textbf{Case 1:} $\sum_{i=1}^{i_1}w(i)>1.8\cdot W$. Then clearly $i_2<i_1-2$ as $w(i) \le W/100$. We will place task $1$ starting on the leftmost edge and, for every $1<i\le i_2$, we will place task $i$ side by side to the right of task $i-1$. We then place task $i_2+2$ finishing at the rightmost edge, and then for every $i_2+2<i<i_1$ we place task $i$ side by side to the left of task $i-1$ (see Figure~\ref{fig:PEC_squares_3by2}).

\begin{figure}
    \centering
    \resizebox{0.5\textwidth}{!}{\begin{tikzpicture}[scale=0.6]
    \draw[thick] (0,0) -- (20,0);
    \draw[color=gray!70] (0,0) -- (0,11.5);
    \draw[color=gray!70] (20,0) -- (20,11.5);
    
    \draw[dashed] (0,0) -- (-0.25,0);
    \draw (-0.25,0) node[anchor=east] {$0$};

    \draw[dashed] (20,8) -- (-0.25,8);
    \draw (-0.25,8) node[anchor=east] {$OPT$};

    \draw[dashed] (20,11) -- (-0.25,11);
    \draw (-0.25,11) node[anchor=east] {$\frac{3}{2}\cdot OPT$};
    \draw (20.25,11) node[anchor=west] {\textcolor{white}{$\frac{3}{2}\cdot OPT$}};

    
    \draw (0,0) -- (0,-0.25);
    \draw (0,-0.25) node[anchor=north] {$0$};
    
    \draw (20,0) -- (20,-0.25);
    \draw (20,-0.25) node[anchor=north] {$W$};
    \draw (0,0) rectangle (4.5,4.5);
    \draw (2.25,2.25) node {$1$};
    \draw (4.5,0) rectangle (8.8,4.3);
    \draw (6.65,2.15) node {$2$};
    
    \draw (12.4,2.125) node {$\dots$};
    
    \draw (16,0) rectangle (20,4);
    \draw (18,2) node {$i_2$};
    
   
   \draw (16.1,4.1)   rectangle    (20,8);
    \draw (18.05,6.05) node {$j_{i_2+2}$};
    \draw (12.2,4.2)   rectangle    (16.1,8);
    \draw (14.1,6.1) node {$j_{i_2+3}$};

    \draw (10.7,6.15) node {$\dots$};

    \draw (5.5,4.4)   rectangle    (9.1,8);
    \draw (7.3,6.2) node {$j_{i_1-1}$};
    
    \draw (10,9.5) node {Steinberg's algorithm};
    
    \end{tikzpicture}}
    \caption{A $\frac{3}{2}$-approximation for \DSPS, depiction of the case when $\sum_{i=1}^{i_1}w(i)>1.8\cdot W$ from the proof of Theorem~\ref{thm:3-2-PEC-squares}.}
    \label{fig:PEC_squares_3by2}
\end{figure}
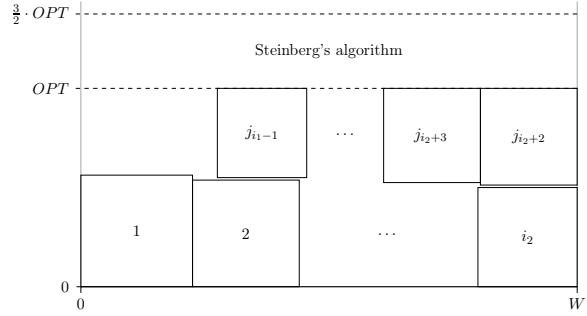

Note that the tasks which we already scheduled have total width at least $1.8W-w(i_2+1)-w(i_1)>1.8W-0.02W=1.78W$. Thus, their total area is at least $1.78W\cdot 0.49 OPT>0.8 W\cdot OPT$.
We show that this schedule has peak at most $OPT$. First of all, note that $\sum_{i=1}^{i_1}w(i) \le 2\cdot W$ since, in the optimal solution, every edge can be used by at most two such tasks. 
Assume that there exist two tasks $p, q$, $p<i_2+1<q<i_1$ such that their subpaths in the schedule overlap and $h(p)+h(q) > OPT$. Now consider the optimal solution and let $J_1=\{1,\dots,p\}$ and let $J_2=\{{p+1},\dots,q\}$. Since the tasks are sorted non-increasingly by height, subpaths of tasks from $J_1$ in the schedule cannot overlap. Also, all the pairs $i,{i'}$ such that $i \in J_1$ and ${i'}\in J_2$ cannot overlap in the schedule because of the same reason. Since no three tasks from $J_2$ can overlap in the schedule and $\sum_{i=p+1}^{q}w(i)>2(W-\sum_{j=1}^{p}w(j))$, this is a contradiction.
    
Now the rest of the tasks have total area at most $0.2 OPT\cdot W$. Since these tasks have width at most $0.5\beta OPT \le 0.005 W$, we can use Steinberg's algorithm (Theorem~\ref{thm:Steinberg}) to place them into an extra rectangular box of height $\frac{1}{2}OPT$ and width $W$, and place this box on top of the current solution. \\

\noindent
\textbf{Case 2:} $\sum_{i=1}^{i_1}w(i) \leq 1.8W$. We schedule tasks $1,\dots,i_2$ in the same way as we did for the previous case. Now schedule task $i_2+1$ starting on the leftmost edge, and for $i_2+1< i\le i_1$ we schedule task $i$ side by side to the right of task ${i-1}$. 
Since for every $i>i_2$ we have that $h(i)\le 0.5OPT$, then the current schedule does not exceed peak $1.5OPT$ and it is sorted (as it is the sum of two non-increasing demand profiles). Notice that this schedule is well defined as $W-0.01W<\sum_{i=1}^{i_2}w(i) \le W$ and $\sum_{i=i_2+1}^{i_1}{w(i)} \le 1.8W - (0.99W) \le 0.81\cdot W$. We denote the set of remaining tasks by $\cI'$.

Let us first approximately estimate $OPT$ from below, meaning that we compute a value $L$ such that $(1-\varepsilon)OPT \le L \le OPT$ for some given $\varepsilon>0$. This is possible to do by using any constant approximation for \DSPS (for example Corollary~\ref{cor:2-apx}) and then approximately guessing the value. Now we will place the remaining tasks by means of Lemma~\ref{lem:packing-narrow-sorted} with parameter $\pi=\max\{h(1)+h(i_2+1), 3L/2\}$. It is not difficult to verify that $\pi\ge OPT + h_{\max}(\cI')$, $w_{\max}(\cI') \le 0.49\cdot\beta OPT \le 0.0049W$, $h_{\max}(\cI') \le 0.49\cdot OPT$ and $(1-0.0049)(1.01-1.5\cdot\varepsilon)W\cdot OPT \ge W\cdot OPT$ for $\varepsilon$ small enough. Thus, we obtain a feasible schedule of peak at most $\frac{3}{2}OPT$. \end{proof}

\subsubsection{Proof of Lemma~\ref{lem:Bansal-PEC}}\label{sec:Bansal-DSP}

Given $0<\mu<\delta$, we will start by classifying the tasks as follows:

\begin{itemize}
    \item A task is \emph{big} if $h(i)>\delta OPT$ and $w(i)>\delta W$;
    \item A task is \emph{wide} if $h(i)\le \mu OPT$ and $w(i)>\delta W$;
    \item A task is \emph{long} if $h(i)>\delta OPT$ and $w(i)\le \mu W$;
    \item A task is \emph{tiny} if $h(i)\le \mu OPT$ and $w(i)\le \mu W$; and
    \item A task is \emph{intermediate} if $\mu OPT < h(i) \le \delta OPT$ and $\mu W < w(i) \le \delta W$.
\end{itemize}

Analogously to Lemma~\ref{lem:mediumrectanglesarea}, we can show that it is possible to choose $\mu$ and $\delta$ such that they differ by a large factor and that the total area of intermediate tasks is at most $\varepsilon^2\cdot OPT\cdot W$. From now on we will assume that $\mu$ and $\delta$ chosen like that, and we will discard the intermediate tasks from the instance (meaning that we include them into $\cI''$). Let us temporarily remove the tiny tasks, we will add them in the end via a slight modification of Lemma~\ref{lem:packing-narrow-sorted}.

Consider the optimal solution restricted to wide tasks and its corresponding demand profile $D$. We will prove in the following lemma that, by increasing the peak of the solution by $2\varepsilon OPT$, we can ``round-up'' the demand profile $D$ so that it has $O_{\eps}(1)$ jumps only.

\begin{lemma}\label{lem:discretize_curve} There exists a demand profile $D'$ that has $O_{\eps}(1)$ jumps, it upper bounds (vectorially) the demand profile $D$ induced by wide tasks, and satisfies that, for each edge, the difference between $D'$ and $D$ is at most $2\eps OPT$. \end{lemma}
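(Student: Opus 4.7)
The plan is to exploit the width lower bound $\delta W$ on wide tasks to control the total variation of $D$, and then build $D'$ greedily via a left-to-right sweep with a hysteresis band of width $2\varepsilon\cdot OPT$.

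First, I would bound the total variation $V(D):=\sum_e|D(e+1)-D(e)|$. Since each wide task $i$ contributes an up-jump of $h(i)$ at its start and a matching down-jump at its end, $V(D)\le 2\sum_{i\text{ wide}}h(i)$. Combining this with the area inequality
\[
\delta W\cdot\sum_{i\text{ wide}}h(i)\;\le\;\sum_{i\text{ wide}}h(i)w(i)\;=\;\sum_e D(e)\;\le\;W\cdot OPT,
\]
where the last step uses $D(e)\le OPT$ because the optimal schedule has peak $OPT$, gives $V(D)\le 2\,OPT/\delta$, which is $O_\varepsilon(1)\cdot OPT$.

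Next, I would construct $D'$ as follows. Initialize a running target $L:=D(e_1)+\varepsilon\cdot OPT$, where $e_1$ is the leftmost edge, and scan edges from left to right. At each edge $e$, declare an \emph{update} and reset $L:=D(e)+\varepsilon\cdot OPT$ whenever either $D(e)>L$ or $D(e)<L-2\varepsilon\cdot OPT$; otherwise leave $L$ unchanged. Set $D'(e):=L$. Immediately after any update, $D'(e)-D(e)=\varepsilon\cdot OPT$, while between updates the two guard conditions guarantee $D(e)\le L\le D(e)+2\varepsilon\cdot OPT$, so the required pointwise inequalities $D\le D'\le D+2\varepsilon\cdot OPT$ hold on every edge.

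Finally, I would bound the jumps of $D'$ by bounding the number $K$ of updates. Let $f_0:=e_1$ and let $f_1<\cdots<f_K$ be the edges where updates occur. By the triggering rule, for every $k\ge 1$ the value $D(f_k)$ lies strictly outside the window $[D(f_{k-1})-\varepsilon\cdot OPT,\,D(f_{k-1})+\varepsilon\cdot OPT]$, so $|D(f_k)-D(f_{k-1})|>\varepsilon\cdot OPT$. Summing,
\[
K\cdot\varepsilon\cdot OPT\;<\;\sum_{k=1}^K|D(f_k)-D(f_{k-1})|\;\le\;V(D)\;\le\;\frac{2\,OPT}{\delta},
\]
where the middle inequality is because the left-hand sum is the variation of a subsequence of $D$ and hence at most $V(D)$; therefore $K<2/(\delta\varepsilon)=O_\varepsilon(1)$. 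The main obstacle is this last counting step: pointwise rounding of $D$ up to the nearest multiple of $\varepsilon\cdot OPT$ would \emph{not} give $O_\varepsilon(1)$ jumps, because $D$ can cross any single band boundary arbitrarily many times while contributing only a tiny amount to $V(D)$. The hysteresis gap of $2\varepsilon\cdot OPT$ is exactly what forces each update to consume $\Omega(\varepsilon\cdot OPT)$ of variation, making the charging argument tight.
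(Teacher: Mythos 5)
Your proof is correct, and it reaches the same conclusion via a genuinely different accounting of the number of jumps. The sweep construction of $D'$ with a $2\varepsilon\cdot OPT$ hysteresis band is essentially identical to the paper's (the paper starts at $e_1$ with value $D(e_1)+\varepsilon\cdot OPT$ and resets whenever $D$ leaves the band $[\ell-\varepsilon\cdot OPT,\ell+\varepsilon\cdot OPT]$); the difference is in how the $O_\varepsilon(1)$ bound on the number of resets is established. You bound the \emph{total variation} of $D$ by $2\sum_{i\ \mathrm{wide}}h(i)\le 2\,OPT/\delta$, using $w(i)>\delta W$ and the area bound $a(\mathcal{I})\le W\cdot OPT$, and then observe that each reset consumes more than $\varepsilon\cdot OPT$ of variation, giving fewer than $2/(\delta\varepsilon)$ jumps. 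The paper instead partitions the path into $1/\delta$ blocks of $\delta W$ consecutive edges, notes that no wide task both starts and ends inside a block, and argues that each up-jump (resp.\ down-jump) within a block is charged to disjoint sets of tasks of total height $\ge\varepsilon\cdot OPT$ that all cross the block's right (resp.\ left) boundary, where the demand is at most $OPT$; this gives at most $2/\varepsilon$ jumps per block plus $1/\delta$ extra jumps inserted at the block boundaries, for $2/(\varepsilon\delta)+1/\delta$ total. Your route avoids the artificial resets at multiples of $\delta W$ and is a bit shorter and more elementary; the paper's per-block argument has the side benefit of producing jump points aligned to the $\delta W$ grid, though that alignment is not required by the lemma statement. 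Both are valid, and your observation that a naive rounding to multiples of $\varepsilon\cdot OPT$ without hysteresis would fail (because $D$ can oscillate across a band boundary with negligible variation) is exactly the right intuition.
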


\begin{proof}
Consider the demand profile $D$ induced by wide tasks in the optimal solution. Let us first define an auxiliary demand profile $D''$ corresponding to $D$ plus $2\eps OPT$ on each coordinate.

We will build our new demand profile $D'$ as follows: We start first at the leftmost edge $e_1$, storing its total demand $\ell_1$ plus $\eps OPT$. Let $e_2$ be the leftmost edge with demand either larger than $\ell_1+\varepsilon OPT$ or smaller than $\ell_1-\varepsilon OPT$. Then we store in $D'$ the edge $e_2$ and demand $\ell_2 + \eps OPT$, where $\ell_2$ is the total demand in edge $e_2$, and restart the process until the end of the path. We will also include in the demand profile all the edges which are multiples of $\delta W$, meaning that at each such edge $x$ we store it and also store its demand plus $\eps OPT$, and restart the process with that value. It is not difficult to see that this demand profile $D'$ is completely contained (vectorially) between $D$ and $D''$.

We will now argue about the number of jumps of $D'$. Let us partition the whole path into intervals of $\delta\cdot W$ contiguous edges starting at the leftmost edge, and consider any such interval. Notice that no wide task can start and end inside the same interval. If inside the interval there is a jump up in $D'$, this means that the demand has increased by at least $\varepsilon OPT$ from the previous stored edge, and this is due to a set of tasks of total height at least $\varepsilon OPT$ starting inside the interval. However, as mentioned before, the tasks starting in the interval must finish at a different interval. Similarly, if there is a jump down in $D'$ it is due to a set of tasks of total height at least $\varepsilon OPT$ finishing inside the interval, which again must start at different intervals. Since the process is restarted at every final edge of the intervals, all these tasks are different and hence there can be at most $2/\varepsilon$ such jumps inside the interval (as they all contribute to the demand at either $k\delta W$ or $(k+1)\delta W$). Including now the jumps at edges which are multiples of $\delta W$, the number of jumps in $D'$ is at most $\frac{2}{\varepsilon\delta} + \frac{1}{\delta}$.
\end{proof}

Consider now a schedule of peak $(1+2\varepsilon)OPT$ where all the wide tasks are scheduled below the demand profile $D'$ from Lemma~\ref{lem:discretize_curve}. Since the number of large tasks is constant, we can schedule them on top of $D'$ and obtain a new demand profile that still has $O_{\eps}(1)$ many jumps. Doing an analogous procedure to the one described in Lemma~\ref{lem:container-horizontal} it is possible to define a constant number of containers for wide and large tasks below the demand profile. All the tasks that were not placed have negligible area and hence we can just discard them. Furthermore, If we allow to vertically slice long tasks, then the difference between $(1+2\eps)OPT$ and $D'$ induces a packing of the sliced long tasks into $O_{\eps}(1)$ rectangular regions or \emph{boxes}. The following lemma allows us to turn this packing into a feasible container packing for almost all the long tasks.

\begin{lemma}\label{lem:mainLinearGrouping}
Consider the previous scheduling of (sliced) long tasks decomposed into a set $\cB$ of $K=O_{\varepsilon'}(1)$ rectangular boxes. Then, there exists a partition of the long tasks into two sets $\cV^{cont}$ and $\cV^{disc}$ such that:
\begin{enumerate}\itemsep0pt
\item $\cV^{cont}$ can be packed into a set of at most $K'=O_{\varepsilon'}(1)$ vertical containers, where each container is fully contained in some box in $\cB$.
\item $\cV^{disc}$ has total area at most $\varepsilon^2 \cdot W \cdot OPT$.
\item The sizes of the containers belongs to a set that can be computed in polynomial time.  
\end{enumerate}
\end{lemma}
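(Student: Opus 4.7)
The plan is to apply a standard linear grouping argument independently inside each of the $K=O_{\varepsilon}(1)$ boxes. Fix a box $B\in\cB$ of width $w(B)$ and height $h(B)$, and let $\cV_B$ be the set of long tasks whose (sliced) columns lie in $B$; by reordering columns inside $B$ so that sibling slices are consecutive, we may view $\cV_B$ as a collection of full-task rectangles of height at most $h(B)$ and width at most $\mu W$ placed side by side inside $B$. I would sort $\cV_B$ in non-increasing order of height and greedily partition the sorted list into groups $G_1^B,\ldots,G_{q_B}^B$, opening a new group each time the current group's total width first exceeds a threshold $\eta\,w(B)$, where $\eta:=\varepsilon^2/4$. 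This produces groups whose total widths lie in $[\eta\,w(B),\eta\,w(B)+\mu W]$, except possibly the last.

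Then I would discard the tallest group $G_1^B$ (placing it into $\cV^{disc}$) and, for each $i\ge 2$, create a single vertical container of width $\sum_{t\in G_i^B}w(t)$ and height $\max_{t\in G_i^B}h(t)$, packing the tasks of $G_i^B$ side by side inside it. The sorting guarantees that the container height for $G_i^B$ is at most the minimum height appearing in $G_{i-1}^B$; hence the containers for $B$, placed side by side horizontally inside $B$ in the order $G_2^B,G_3^B,\ldots$, fit: their widths sum to at most $w(B)$ and their heights are at most $h(B)$. Guessability is immediate because each container is described by a single height equal to some task height and a single width that is a sum of widths over a consecutive block of the height-sorted list, giving only polynomially many possible size vectors.

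Two accounting steps complete the argument. For the container count, each box contributes at most $1/\eta+1=O_\varepsilon(1)$ groups (every group except possibly the last has width at least $\eta\,w(B)$), so the total number of containers is $K'=K(1/\eta+1)=O_\varepsilon(1)$. For the discarded area,
\[
a(\cV^{disc})\;\le\;\sum_{B\in\cB}(\eta\,w(B)+\mu W)\,h(B)\;\le\;\eta(1+2\varepsilon)W\cdot OPT \;+\; K\mu W\cdot OPT,
\]
which is at most $\varepsilon^2 W\cdot OPT$ provided $\mu\le\varepsilon^2/(4K)$; this condition is secured by instantiating Lemma~\ref{lem:mediumrectanglesarea} with a sufficiently small function $f$ (analogously to the argument preceding Lemma~\ref{lem:container-horizontal}). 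The main delicate point I anticipate is ensuring the width slack $\mu W$ is negligible compared to $\eta\,w(B)$ even for boxes with small $w(B)$, together with the initial rearrangement that lets us assume each long task sits in a single box; both are handled by the parameter choices and by the fact that the $O_\varepsilon(1)$-jump structure of $D'$ yields a sufficiently regular box decomposition.
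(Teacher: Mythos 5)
Your proposal breaks at the very first reduction: after vertically slicing a long task of height $h(i)>\delta\,OPT$ and width $w(i)\le\mu W$ and placing the slices inside a box $B$, the slices of \emph{different} long tasks are stacked on top of one another within each unit-width stripe of $B$. Since $h(B)$ can be up to $(1+2\varepsilon)OPT$ while long tasks have height only $\delta\,OPT$, a single stripe may carry on the order of $1/\delta$ slices, so the total width of the long tasks assigned to $B$ can be roughly $w(B)/\delta \gg w(B)$. Consequently the assertion that ``we may view $\cV_B$ as a collection of full-task rectangles $\ldots$ placed side by side inside $B$'' is false, and the later claim that the containers $G_2^B,G_3^B,\ldots$ placed side by side horizontally have ``widths sum to at most $w(B)$'' does not hold: after removing only $G_1^B$ of width $\approx \eta\,w(B)$, the remaining groups still have total width close to $w(B)/\delta$. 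This is not a parameter-tuning issue; the geometry is genuinely multi-layered, which your single-layer linear-grouping step ignores.

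What the paper does instead (and what the statement actually requires) is to handle the vertical stacking directly. It first rounds the heights of long slices to one of $O(1/\delta^2)$ values (introducing only $O(\varepsilon\,OPT)$ extra peak since each long task has height at least $\delta\,OPT$), then partitions $B$ into unit-width stripes and pushes slices to the bottom, so that each stripe is described by a \emph{configuration}: a non-increasing sequence of at most $(1+\varepsilon)/\delta$ rounded heights. There are only $(1/\delta^2)^{(1+\varepsilon)/\delta}=O_\varepsilon(1)$ configurations, so after reordering stripes to group equal configurations, each configuration yields a constant number of stacked vertical containers, and the total number of containers stays $O_\varepsilon(1)$. The repacking-and-discard step then works slice by slice across containers sorted by height (with a dummy overflow container), and tasks whose slices straddle a container boundary, together with those falling into the dummy, are discarded; the bound $\varepsilon^2\cdot W\cdot OPT$ on the discarded area follows because each non-dummy boundary is crossed by at most one task of area at most $\mu W\cdot OPT$, and the width lost by rounding container widths down to multiples of $\frac{\mu}{\varepsilon}W$ is similarly small. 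Your proposal borrows the one-sided linear-grouping pattern that is appropriate for \emph{horizontal} tasks (as in Lemma~\ref{lem:container-horizontal}), but for the long tasks it omits the essential configuration/stripe argument that controls the vertical stacking, so the container-fitting step and hence the whole proof do not go through as written.
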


\begin{proof} The first step in our construction is to round up the heights of the long slices to multiples of $\delta^2 OPT$. Since long tasks have height at least $\delta \cdot OPT$, this rounding increases the peak of the solution by at most $\varepsilon \cdot OPT$ (we increase the height of the boxes accordingly). Observe that the number of possible distinct heights is at most $1/\delta^2$. 

Let us focus on a specific box $B\in \cB$ of size $w(B)\times h(B)$, and let $\cV_{sliced}(B)$ be the slices contained in $B$. Next, we partition $B$ into unit width stripes, and we shift slices in each stripe as down as possible. We call a \emph{configuration} $C$ of a stripe the sequence of (enlarged) heights $(h_1,\ldots,h_q)$ sorted non-increasingly. Notice that each stripe can contain at most $(1+\varepsilon)/\delta$ slices, and hence the number of possible configurations is at most $(1/\delta^2)^{(1+\varepsilon)/\delta}$. 

We reorder the stripes in $B$ so that equal configurations appear consecutively. Suppose that the number of stripes in $B$ with a given configuration $C=(h_1,\ldots,h_q)$ is $w(C)$, and $A(C)$ is the corresponding region. We cover $A(C)$ by creating $q$ consecutive vertical containers of width $w(C)$ and height $h_1,\ldots,h_q$ respectively. The height of each container belongs to a set that can be computed in polynomial time (it is a multiple of $\delta^2 OPT$). In order to enforce the same property for their widths, we round down the width of each such container to the largest multiple $w'(C)$ of $\frac{\mu}{\varepsilon} W$ not larger than $w(C)$. The number of these containers is $n_{cont}\leq K(1/\delta^2)^{(1+\varepsilon)/\delta}$.

We next use the obtained containers to place most of the tasks. We consider the containers in non-increasing order of height and the slices of long tasks in the same order, breaking ties so that slices of the same task appear consecutively. We also create a dummy final container of sufficient width and of height large enough to accommodate the total width of the slices minus the total width of the containers. 
Now, we place back the slices into the containers following the previous order. Notice that all the slices will fit. We discard each task whose slices are contained in two containers (three is not possible) and all the tasks whose slices are contained in the dummy final container. The total area of these discarded tasks is at most \[n_{cont}\mu W\cdot OPT+n_{cont}\frac{\mu}{\varepsilon} W \cdot OPT.\] The above quantity is at most $\varepsilon^2 W \cdot OPT$ provided that $\mu$ is small enough.

All the tasks that are not placed, which have total area at most $\varepsilon^2 W \cdot OPT$, we include them into set $\cV^{disc}$, and the remaining tasks which are placed into the containers are included into set $\cV^{cont}$, satisfying the claims of the lemma. \end{proof}

This way we obtain a guessable container packing for all the non-tiny tasks of peak at most $(1+2\varepsilon)OPT$ except for a set of tasks of total area at most $\varepsilon^2 W\cdot OPT$, and this packing can be computed almost optimally by means of Lemma~\ref{lem:containersPackPTAS}. We will include the tiny tasks by applying a greedy procedure on top of the constructed solution as follows.

Consider an optimal schedule of the containers, and notice that its demand profile has $O_{\delta}(1)$ jumps. We will create rectangular regions according to the difference between $(1+2\varepsilon)OPT$ and the demand profile, which indeed induces $O_{\delta}(1)$ rectangular boxes starting and ending at the jumps of the demand profile. Let $K\in O_{\delta}(1)$ be the total number of boxes and containers. We will sort them arbitrarily and place the tiny tasks using a variation of the algorithm from Lemma~\ref{lem:packing-narrow-sorted}: We consider the tiny tasks in any fixed arbitrary order and start with the first box or container. We scan the whole list and place the current tiny task in the current edge if possible, otherwise we continue with the list. If we finished with the list we move to the next possible edge to the right and start again. If some task could not be packed in the box or container we move to the next one and start again.

Notice that initially the demand profiles inside each box or container are non-increasing, and hence we maintain the following invariant: in the current box $B$, to the left of the current edge, every edge has demand larger than $h(B)-\mu OPT$ and to the right the demand profile is non-increasing. Hence, if we reached the edge at position $w(B)-\mu W$ in the box or container and some task still has to be placed, the total demand inside $B$ is at least $(h(B)-\mu OPT)(w(B)-\mu W)$. 

If this procedure schedules all the tasks, we are done. So assume that some tiny task $i$ could not be packed. This means that in every box or container $B$ the demand of tasks is larger than $(w(B)-\mu W)(h(B)-\mu OPT)$, and hence the total demand of the current solution is larger than $(1+2\varepsilon)W\cdot OPT - 2\mu K  W \cdot OPT \ge W\cdot OPT$ provided that $\mu \le \varepsilon / K$ (this can be ensured by defining $f(x):= x/K$ in Lemma~\ref{lem:mediumrectanglesarea}), which would be a contradiction. This concludes the proof of Lemma~\ref{lem:Bansal-PEC}.

\end{document}